\newtheorem{proc}[theorem]{Procedure}
\newtheorem{conj}[theorem]{Conjecture}
\newtheorem{corl}[theorem]{Corollary}
\def\centerarc[#1](#2)(#3:#4:#5)
\definecolor{pastelred}{rgb}{1.0, 0.41, 0.38}
\definecolor{applegreen}{rgb}{0.55, 0.71, 0.0}
\definecolor{amber}{rgb}{1.0, 0.75, 0.0}
\newcommand{\repeattheorem}[1]{%
	\begingroup
	\renewcommand{\thetheorem}{\ref{#1}}%
	\expandafter\expandafter\expandafter\theorem
	\csname reptheorem@#1\endcsname
	\endtheorem
	\endgroup
}
\xdef\csname reptheorem@#1\endcsname{%
		\unexpanded\expandafter{\BODY}%
	}%
\unskip\label{#1}\endtheorem
\newcommand{\repeatlemma}[1]{%
	\begingroup
	\renewcommand{\thelemma}{\ref{#1}}%
	\expandafter\expandafter\expandafter\lemma
	\csname replemma@#1\endcsname
	\endlemma
	\endgroup
}
\xdef\csname replemma@#1\endcsname{%
		\unexpanded\expandafter{\BODY}%
	}%
\unskip\label{#1}\endlemma
\newcommand{\repeatcorollary}[1]{%
	\begingroup
	\renewcommand{\thecorollary}{\ref{#1}}%
	\expandafter\expandafter\expandafter\corollary
	\csname repcorollary@#1\endcsname
	\endcorollary
	\endgroup
}
\xdef\csname repcorollary@#1\endcsname{%
		\unexpanded\expandafter{\BODY}%
	}%
\unskip\label{#1}\endcorollary
\newcommand{\repeatproposition}[1]{%
	\begingroup
	\renewcommand{\theproposition}{\ref{#1}}%
	\expandafter\expandafter\expandafter\proposition
	\csname repproposition@#1\endcsname
	\endproposition
	\endgroup
}
\xdef\csname repproposition@#1\endcsname{%
		\unexpanded\expandafter{\BODY}%
	}%
\unskip\label{#1}\endproposition
\title{Recognition and Isomorphism of Proper $\boldsymbol{U}$-graphs in \emph{FPT}-time} 
\author{Deniz A\u{g}ao\u{g}lu \c{C}a\u{g}{\i}r{\i}c{\i}}{Masaryk University, Brno, Czech Republic }{agaoglu@mail.muni.cz}{https://orcid.org/0000-0002-1691-0434}{This author was supported by the Czech Science Foundation, project no.~20-04567S.}
\author{Peter Zeman}{University of Neuchâtel, Neuchâtel, Switzerland}{zeman.peter.sk@gmail.com}{https://orcid.org/0000-0003-0071-9149}{}
\authorrunning{D. A\u{g}ao\u{g}lu \c{C}a\u{g}{\i}r{\i}c{\i} and P. Zeman} 
\keywords{$H$-graph, graph recognition, graph isomorphism, NP-hard, GI-complete, parameterized complexity.}
\begin{document}
	
	\maketitle
	
	\begin{abstract}
		An \emph{$H$-graph} is an intersection graph of connected subgraphs of a suitable subdivision of a fixed graph~$H$.
		Many important classes of graphs, including interval graphs, circular-arc graphs, and chordal graphs, can be expressed as $H$-graphs, and, in particular, every graph is an $H$-graph for a suitable graph~$H$.
		An $H$-graph is called \emph{proper} if it has a representation where no  subgraph properly contains another.
		We consider the recognition and isomorphism problems for proper $U$-graphs where~$U$ is a unicylic graph.
		We prove that testing whether a graph is a (proper) $U$-graph, for some~$U$, is NP-hard.
		On the positive side, we give an \emph{FPT}-time recognition algorithm, parametrized by~$\vert U \vert$.
		As an application, we obtain an \emph{FPT}-time isomorphism algorithm for proper $U$-graphs, parametrized by~$\vert U \vert$.
		To complement this, we prove that the isomorphism problem for (proper) $H$-graphs, is as hard as the general isomorphism problem for every fixed $H$ which is not unicyclic.
	\end{abstract}
	
	\section{Introduction}
	\label{sec:intro}
	
	The concept of an $H$-graph was introduced originally by Bir\'{o}, Hujter and Tuza in 1992~\cite{biro}.
	This notion generalizes and relates to many important classes of graphs that are well-known in the literature, for instance, interval graphs, circular-arc graphs, and chordal graphs.
	
	A graph $G$ has an \emph{$H$-representation} if $G$ can be represented as an intersection graph of connected subgraphs of the graph $H$, i.e., if each vertex of $G$ can be assigned a connected subgraph of $H$ such that two subgraphs intersect if and only if the corresponding vertices are adjacent. A \emph{subdivision of $H$} is to replace each selected edge of $H$ with an induced path of any length.
	Then, $G$ is an \emph{$H$-graph} if there is a subdivision $H'$ of $H$ such that $G$ has an $H'$-representation.
	In this language, interval graphs are $K_2$-graphs, circular-arc graphs are $K_3$-graphs, and chordal graphs are the union of all $T$-graphs, where $T$ runs through all trees.
	
	The study of $H$-graphs was revived in 2017 by Chaplick et al.~\cite{zemanWG} (see also~\cite{zemanAlgo}) after partially answering the open question posed by Bir\'{o}, Hujter and Tuza: What is the complexity of testing whether a given graph is an $H$-graph if $H$ is some fixed graph?
	In particular, the recognition problem is NP-complete if $H$ contains the diamond graph as a minor.
	Further, for the case when $H$ is a tree, \emph{XP}-time algorithms were given.
	Apart from that several other problems, which are well-studied for many restricted graph classes, were also considered in~\cite{zemanWG, zemanAlgo}.
	Despite the considerable interest in this line of research~\cite{zeman2,DBLP:conf/esa/FominGR18,chaplick2020recognizing,telle,DBLP:journals/corr/abs-2107-10689,isoWalcom}, the original question of Bir\'{o}, Hujter and Tuza is still open.
	
	There are several relevant types of recognition problems for $H$-graphs.
	Firstly, both the graph $G$ and the graph $H$ can be a part of the input and the question is whether $G$ is an $H$-graph.
	This problem is already NP-complete even for chordal graphs and for the case when $H$ is a tree~\cite{KLAVIK201585}.
	The next type of problem is when the input is a graph $G$ and the graph $H$ is fixed -- this is the question asked by Bir\'{o}, Hujter, and Tuza.
	Further, we consider a third variant.
	To that end, let $\mathcal{H}$ be a class of graphs.
	By $\mathcal{H}$-graphs, we mean the class of graphs $G$ for which there exists a graph $H \in \mathcal{H}$ such that $G$ is an $H$-graph.
	For instance, if $\mathcal{T}$ is the class of all trees, then $\mathcal{T}$-graphs are exactly all chordal graphs~\cite{chordalityInters}.
	The recognition problem for $\mathcal{H}$-graphs, has a graph $G$ on the input and asks whether there is an $H \in \mathcal{H}$ such that $G$ is an $H$-graph.
	In the case of $\mathcal{T}$-graphs, the problem is solvable in polynomial time and well-known~\cite{recogChordaLinear,chordalityInters}.

	Two graphs $G$ and $H$ are called \textit{isomorphic}, denoted by $G \simeq H$, if there exists a bijection $f$ from $V(G)$ to $V(H)$ such that $\{ u,v \} \in E(G)$ if and only if $\{ f(u),f(v) \} \in E(H)$. The \textit{graph isomorphism problem} is to determine whether two graphs are isomorphic, and a problem is called \emph{GI-complete} if it can be reduced to the graph isomorphism problem in polynomial time. Considering $H$-graphs, the isomorphism problem for (proper) $S_d$-graphs is GI-complete when $d$ is on the input where $S_d$ is a star with $d$ rays. This implies to (proper) $T$-graphs and (proper) $U$-graphs  \cite{isoChordalGIComp}, and it can be solved in \emph{FPT}-time for all $H$-graphs when $H$ contains no cycle parameterized by the size of $H$ \cite{aaolu2019isomorphism,isoWalcom,SdT-graphs2021efficient,zemanWG}.  
	
	\subparagraph{Our results.}
	In this paper, we would like to make further steps towards the recognition and isomorphism problems for $H$-graphs.
	Before we summarize our contributions, we mention two special types of representations.
	In particular, we consider \emph{proper $H$-representations} in which we require that no connected subgraph of $H$ is properly contained in another one.
	Further, in \emph{Helly $H$-representations} every subset of subgraphs must satisfy the \emph{Helly property}; a family $\mathcal{F}$ of sets satisfies the Helly property if in each sub-collection of $\mathcal{F}$ whose sets pair-wise intersect, the common intersection is non-empty. We focus on the parameterized complexity of the recognition and isomorphism problems for $U$-graphs, where $U$ is a unicylic graph.
	
	\begin{itemize}
		\item In Section~\ref{sec:NP-hard}, we first show that deciding whether it is possible to construct a representation of a proper circular-arc graph, with certain cliques prescribed to be represented as Helly cliques, is NP-hard. We do this by giving a reduction from the 2-coloring problem for 3-uniform hypergraphs. We then reduce that problem to the recognition of $\mathcal{U}$-graphs and proper $\mathcal{U}$-graphs, where $\mathcal{U}$ is the class of all unicyclic graphs.
		
		\item Sections~\ref{sec:simple} and~\ref{sec:expoCliques} deal with the recognition problem of proper $U$-graphs, where $U$ is a fixed unicyclic graph.
		First, we focus on two easy cases of proper $U$-graphs, $i)$ proper $U$-graph recognition when the input graph is chordal, and $ii)$ proper Helly $U$-graph recognition. We give Procedure~\ref{proc:givenCliquesRecog} to recognize proper $U$-graphs in \emph{FPT}-time when the input graphs are chordal, and modify it to obtain an \emph{FPT}-time recognition algorithm for proper Helly $U$-graphs. 
		Then, in Section~\ref{sec:expoCliques}, we extend Procedure~\ref{proc:givenCliquesRecog} to all proper $U$-graphs.
		
		\item In Section~\ref{sec:isomorphism}, we apply the \emph{FPT} recognition algorithms mentioned above to test the isomorphism of proper $U$-graphs in \emph{FPT}-time. Finally in Section~\ref{sec:GI-completeness}, we complement this by proving that if $H$ is not unicyclic, then testing the isomorphism of $H$-graphs and proper $H$-graphs is GI-complete, i.e., as hard as the general isomorphism problem.
	\end{itemize}
	
	\subparagraph{Preliminaries.} Throughout the paper, we denote the order (i.e. the number  of vertices) of a graph $G$ by $\vert G \vert$ and the number of edges of $G$ by $\vert E(G) \vert$. We denote by \emph{$U$} the fixed unicyclic graph which means that $U$ contains exactly one cycle and several paths and trees connected to this cycle, and the order $\vert U \vert$ of $U$ is our parameter. We call the unique cycle of $U$ the \emph{``circle''} and the vertices of $U$ the \emph{``nodes''} not to be confused with the cycles and the vertices of a $U$-graph $G$, respectively.
	
	A graph $G$ of order $n$ is called \emph{complete} and denoted by $K_n$ if all vertices in $G$ are pairwise adjacent. Any complete subgraph of $G$ is called a \emph{clique}, and a clique of $G$ is called \emph{maximal} if it cannot be extended to a larger clique with the addition of new vertices.
	
	A \emph{proper $U$-graph} is a $U$-graph which has a proper representation.
	An \emph{interval graph} is the intersection graph of a set of intervals on the real line. They can be recognized and tested for isomorphism in linear time \cite{recogIntervalLinear}. Moreover, proper interval graphs can also be recognized and tested for isomorphism in linear time \cite{DBLP:journals/ipl/Keil85,DBLP:journals/dmtcs/CurtisLMNSSS13}. 
	
	A \emph{circular-arc graph} is the intersection graph of a set of arcs around a circle. On circular-arc graphs, the recognition problem can be solved in linear time \cite{recogCARCLinear} and the isomorphism problem can be solved in polynomial time \cite{DBLP:journals/corr/abs-1904-04501}.   Moreover, both the recognition and isomorphism problems are linear time solvable on proper circular-arc graphs~\cite{DBLP:journals/dmtcs/CurtisLMNSSS13,DBLP:journals/siamcomp/DengHH96}. Therefore, we assume that $U \neq K_3$ since $K_3$-graphs are circular-arc graphs.
	
	Considering a particular $U$-representation of a $U$-graph $G$, a clique $C$ of $G$ is called \emph{Helly} if it satisfies the Helly property, i.e. in that $U$-representation of $G$, all vertices of $C$ mutually intersect at some point of $U$. A \emph{Helly $U$-graph} is a $U$-graph which has some $U$-representation where all its cliques satisfy the Helly property. We note here that circular-arc graphs which are not Helly may have exponentially many maximal cliques which do not correspond to the points of the circle \cite{helly}, and we call such cliques of also proper $U$-graphs \emph{non-Helly}. In addition, the size of a non-Helly clique is at least $3$ by the definition of the Helly property. Assuming that $U \neq K_3$ is not a restriction since Helly circular-arc graphs can be recognized and tested for isomorphism in linear time \cite{DBLP:conf/cocoon/LinS06,DBLP:journals/dmtcs/CurtisLMNSSS13}.
	
	An induced cycle of length at least $4$ (a cycle in which no non-consecutive vertices are adjacent) is called a \emph{hole}. A graph is called \emph{chordal} if it contains no hole. Chordal graphs can be recognized in linear time \cite{recogChordaLinear} while the isomorphism problem on chordal graphs is \emph{GI-complete}~\cite{isoChordalGIComp} which means that it is polynomial time reducible to the general isomorphism problem which is not known to be solvable polynomial time or NP-complete. A chordal graph is also defined as the
	intersection graph of subtrees of some suitable tree $T$~\cite{chordalityInters}. When $T$ is on the input, $T$-graph recognition is NP-complete~\cite{KLAVIK201585} and $T$-graph isomorphism is \emph{GI-complete}~\cite{isoChordalGIComp}. However, for a fixed tree $T$, $T$-graph recognition can be solved in \emph{XP}-time \cite{zemanWG} and $T$-graph isomorphism can be solved in \emph{FPT}-time \cite{isoWalcom,zeman2}. Considering proper $T$-graphs, both the recognition and the isomorphism problems are solvable in \emph{FPT}-time \cite{chaplick2020recognizing,SdT-graphs2021efficient}. Therefore, we also assume that $U \neq T$ for any tree $T$.
	
	For a graph $G$, and a subgraph $C$ of $G$, the \emph{attachment} of a vertex $v \in G \setminus C$ in $C$ is the neighborhood of $v$ in $C$. Analogously, the \emph{attachments} of a connected component $X$ in $C$ is the set of neighborhoods of the vertices $v \in X$ in $C$. By the \emph{upper attachments} of $X$ in $C$, we refer to the set of maximum neighborhoods by inclusion in $C$ among the vertices of $X$.
	
	\section{NP-hardness of $\boldsymbol{U}$-graph and proper $\boldsymbol{U}$-graph recognition}
	\label{sec:NP-hard}
	
	In this section, we show that the recognition problem is NP-hard for $\mathcal{U}$-graphs and proper $\mathcal{U}$-graphs where $\mathcal{U}$ is the class of unicyclic graphs. A \emph{hypergraph} is a graph where an edge called, an \emph{hyperedge}, can join more than two vertices. If every edge of a hypergraph $H$ joins $k$ vertices, then $H$ is called \emph{$k$-uniform}. \emph{Hypergraph $c$-coloring} is the problem to assign colors to vertices so that no hyperedge joins all vertices of the same color.
	
	We define the problem {\sc HellyCliquesCARC}$(G; \mathcal{C})$ as follows: Given a circular-arc graph $G$ and a set $\mathcal{C}$ of cliques of $G$, decide whether there exists a circular-arc representation of $G$ such that all cliques in $\mathcal{C}$ are Helly cliques. We first give the following.
	
	\begin{replemma}{hardness}\label{lem:hardness}
		The problem {\sc HellyCliquesCARC}$(G; \mathcal{C})$ is NP-hard. Furthermore, this hardness result holds even if we restrict to circular-arc graphs that are the complements of perfect matchings for which every circular-arc
		representation is proper.
	\end{replemma}
	
	\begin{proof}
		The 2-coloring problem for 3-uniform hypergraphs was shown to be NP-hard \cite{DBLP:conf/focs/DinurRS02}, and we reduce it to the problem {\sc HellyCliquesCARC}$(G; \mathcal{C})$. Let $H$ be a 3-uniform hypergraph on $n$ vertices, and the circular-arc graph $G$ on $4n+4$ vertices be the complement of a perfect matching, i.e. the complete graph on $4n+4$ vertices from which a set of $2n+2$ edges forming a perfect matching is removed. For $i \in \{-1, 0, 1, \dots, 2n\}$, let $e_i=\{ v^1_{i},v^2_{i} \}$ be an edge of this removed perfect matching, and the vertices $v^1_{i}$ and $v^2_{i}$ be represented with two non-intersecting half-arcs around a circle in $G$. We construct $G$ from $H$ as follows:
		\begin{enumerate}
			\item $v^1_{\text{-}1}$,$v^2_{\text{-}1}$ is \emph{the fixing pair} used to fix the side around the circle where $v^1_{\text{-}1}$ is the top arc.
			
			\item $v^1_{i}$,$v^2_{i}$ for $i \in \{0, \dots, n\}$ are \emph{the ordering pairs} used to determine the ordered positions of the next pairs where $v^1_{0}$ is to the left of $v^1_{\text{-}1}$.
			
			\item $v^1_{i}$,$v^2_{i}$ for $i \in \{n \text{+} 1, \dots, 2n\}$ are \emph{the hypergraph pairs} used to model the vertices of $H$. 
		\end{enumerate}
		
		In $G$, every vertex $u_i$ of $H$ has to representatives $v^1_{i}$ and $v^2_{i}$, thus there is a Helly clique corresponding to every triple of vertices of $H$ since $v^1_{i}$ and $v^2_{i}$ can freely change their positions. We now fix the following triples to be the Helly cliques $\mathcal{C}$ in $G$:
		\begin{enumerate}
			\item $ \{v^1_{\text{-}1},  v^1_{i}, v^1_{i \text{+} 1} \}$, 
			$ \{v^1_{\text{-}1},  v^1_{i}, v^2_{i \text{+} 1} \}$, 
			$ \{v^1_{\text{-}1},  v^2_{i}, v^2_{i \text{+} 1} \}$ for $i \in \{0, \dots, n \text{-} 1\}$.
			
			\item $ \{v^2_{i \text{-} 1},  v^1_{i}, v^1_{n \text{+} i} \}$, 
			$ \{v^2_{i \text{-} i},  v^2_{i}, v^1_{n \text{+} i} \}$ for $i \in \{1, \dots, n\}$.
			
			\item $ \{v^1_{n \text{+} i},  v^1_{n \text{+} j}, v^2_{n \text{+} k} \}$ for every hyperedge $ \{u_i, u_j, u_k\}$.
		\end{enumerate}
		
		After we fix the Helly cliques, $v^1_{i}$ and $v^2_{i}$ can change the positions for no $i$ since the representation is fixed. Now, on the arc $ v^1_{ \text{-} 1}$, we can not have \emph{LLR} and \emph{RRL} configurations for the arcs $ \{v^1_{n \text{+} i},  v^1_{n \text{+} j}, v^2_{n \text{+} k} \}$ where $L$ and $R$ denote the left and right, respectively. Therefore, for the arcs $ \{v^1_{n \text{+} i},  v^1_{n \text{+} j}, v^1_{n \text{+} k} \}$, the forbidden configurations are \emph{LLL} and \emph{RRR} which means that there is no monochromatic edge joining the vertices $ \{u_i, u_j, u_k \}$ of $H$. As a result, the problem whether the fixed triples are Helly cliques in a circular-arc representation of $G$ is NP-hard since the 2-coloring problem for 3-uniform hypergraphs is NP-hard \cite{DBLP:conf/focs/DinurRS02}. Moreover, the constructed circular-arc graph is proper since no half-arc contains another. Therefore, the problem {\sc HellyCliquesCARC}$(G; \mathcal{C})$ is NP-hard also for proper circular-arc graphs.
	\end{proof}

	\begin{figure}[h]
	\centering
	
	\begin{subfigure}[t]{0.35\linewidth}
		\centering
		\begin{tikzpicture}[xscale=0.8,yscale=0.7]
			
			\draw[dashed, ultra thick] (6,1) circle (0.9cm);
			
			\draw[ultra thick,lightgray] (8.05,0) arc (-25:145:2.3cm);
			\node [label=left:{\textcolor{lightgray}{\tiny{ $v^1_{2n}$ }}}] (11) at (4.62,2.22) { };
			\draw[ultra thick,gray] (3.9,2) arc (155:325:2.3cm);
			\node [label=left:{\textcolor{gray}{\tiny{ $v^2_{2n}$ }}}] (12) at (8.6,-0.1) { };
			
			\draw[ultra thick,cyan] (7.07,1.2) arc (10:170:1.1cm);
			\node [label=left:{\textcolor{cyan}{\tiny{ $v^1_{\text{-}1}$ }}}] (1) at (5.5,1) { };	
			\draw[ultra thick,blue] (4.9,0.8) arc (190:350:1.1cm);
			\node [label=left:{\textcolor{blue}{\tiny{ $v^2_{\text{-}1}$ }}}] (2) at (7.75,1) { };
			
			\draw[ultra thick,green] (7.08,1.85) arc (40:200:1.4cm);
			\node [label=left:{\textcolor{green}{\tiny{ $v^1_{0}$ }}}] (1) at (5.3,0.32) { };
			\draw[ultra thick,teal] (4.9,0.15) arc (220:380:1.4cm);
			\node [label=left:{\textcolor{teal}{\tiny{ $v^2_{0}$ }}}] (1) at (7.7,1.7) { };
			
			\draw[ultra thick,yellow] (6.57,2.51) arc (70:235:1.65cm);
			\node [label=left:{\textcolor{yellow}{\tiny{ $v^1_{1}$ }}}] (1) at (5.8,-0.43) { };
			\draw[ultra thick,orange] (5.39,-0.5) arc (250:415:1.65cm);
			\node [label=left:{\textcolor{orange}{\tiny{ $v^2_{1}$ }}}] (1) at (7.3,2.49) { };
			
			\draw[ultra thick,pink] (5.75,2.86) arc (95:265:1.87cm);
			\node [label=left:{\textcolor{pink}{\tiny{ $v^1_{n+i}$ }}}] (1) at (6.9,-0.9) { };
			\draw[ultra thick,magenta] (6.2,-0.85) arc (275:445:1.87cm);
			\node [label=left:{\textcolor{magenta}{\tiny{ $v^2_{n+i}$ }}}] (1) at (6.865,2.8) { };
			
			\draw[dotted] (5.4,2.5) -- (5.25,2.75)  node[midway, above]{};
			\draw[dotted] (5.1,2.35) -- (4.95,2.6)  node[midway, above]{};
			\draw[dotted] (4.8,2.15) -- (4.65,2.4)  node[midway, above]{};
			\draw[dotted] (4.6,1.9) -- (4.45,2.15)  node[midway, above]{};
			\draw[dotted] (4.4,1.6) -- (4.25,1.85)  node[midway, above]{};
			\draw[dotted] (4.3,1.25) -- (4.15,1.5)  node[midway, above]{};
			
			\draw[dotted] (5.1,2.75) -- (4.95,3)  node[midway, above]{};
			\draw[dotted] (4.8,2.6) -- (4.65,2.85)  node[midway, above]{};
			\draw[dotted] (4.55,2.4) -- (4.4,2.65)  node[midway, above]{};
			\draw[dotted] (4.35,2.2) -- (4.2,2.45)  node[midway, above]{};
			\draw[dotted] (4.1,1.65) -- (3.95,1.9)  node[midway, above]{};
			\draw[dotted] (4,1.35) -- (3.85,1.6)  node[midway, above]{};
			
			\node (11) at (6,-2.1) {\textbf{(a)}};
			\label{a}
		\end{tikzpicture}
		
		\label{fig:v0,1,(n+i)}
	\end{subfigure}
	\quad
	\begin{subfigure}[t]{0.25\linewidth}
		\centering
		\begin{tikzpicture}[xscale=0.8,yscale=0.7]
			
			\draw[dashed, ultra thick] (6,1) circle (0.9cm);
			
			\draw[ultra thick,cyan] (7.07,1.2) arc (10:170:1.1cm);
			\node [label=left:{\textcolor{cyan}{\tiny{ $v^1_{\text{-}1}$ }}}] (1) at (5.5,1) { };	
			\draw[ultra thick,blue] (4.9,0.8) arc (190:350:1.1cm);
			\node [label=left:{\textcolor{blue}{\tiny{ $v^2_{\text{-}1}$ }}}] (2) at (7.75,1) { };
			
			\draw[ultra thick,green] (7.08,1.85) arc (40:200:1.4cm);
			\node [label=left:{\textcolor{green}{\tiny{ $v^1_{0}$ }}}] (3) at (5.3,0.32) { };
			\draw[ultra thick,teal] (4.9,0.15) arc (220:380:1.4cm);
			\node [label=left:{\textcolor{teal}{\tiny{ $v^2_{0}$ }}}] (4) at (7.7,1.7) { };
			
			\draw[ultra thick,yellow] (6.57,2.51) arc (70:235:1.65cm);
			\node [label=left:{\textcolor{yellow}{\tiny{ $v^1_{1}$ }}}] (5) at (5.8,-0.43) { };	
			\draw[ultra thick,orange] (5.39,-0.5) arc (250:415:1.65cm);
			\node [label=left:{\textcolor{orange}{\tiny{ $v^2_{1}$ }}}] (6) at (7.3,2.51) { };
			
			\draw[dotted] (5.4,2.5) -- (5.25,2.75)  node[midway, above]{};
			\draw[dotted] (5.1,2.35) -- (4.95,2.6)  node[midway, above]{};
			\draw[dotted] (4.8,2.15) -- (4.65,2.4)  node[midway, above]{};
			\draw[dotted] (4.6,1.9) -- (4.45,2.15)  node[midway, above]{};
			\draw[dotted] (4.4,1.6) -- (4.25,1.85)  node[midway, above]{};
			\draw[dotted] (4.3,1.25) -- (4.15,1.5)  node[midway, above]{};
			
			\filldraw[draw=gray, fill=gray, opacity=0.2] (5.95,3) rectangle (6.05,1.5);
			\node [label=left:{\tiny{ \{ $v^1_{\text{-}1}$,$v^1_{0}$, $v^1_{1}$ \} }}] (7) at (7.2,3.3) { };
			
			\filldraw[draw=gray, fill=gray, opacity=0.2, rotate around={132:(6.54,2.27)}] (5.95,3) rectangle (6.05,1.5);
			\node [label=left:{\tiny{ \{ $v^1_{\text{-}1}$,$v^1_{0}$, $v^2_{1}$ \} }}] (7) at (9.54,2.1) { };
			
			\filldraw[draw=gray, fill=gray, opacity=0.2, rotate around={105:(6.94,2.27)}] (5.95,3) rectangle (6.05,1.5);
			\node [label=left:{\tiny{ \{ $v^1_{\text{-}1}$,$v^2_{0}$, $v^2_{1}$ \} }}] (7) at (9.83,1.25) { };
			
			\node (11) at (6,-2.1) {\textbf{(b)}};
			\label{b}
		\end{tikzpicture}
		
		\label{fig:v-1,0,1}
	\end{subfigure}
	\quad
	\begin{subfigure}[t]{0.25\linewidth}
		\centering
		\begin{tikzpicture}[xscale=0.8,yscale=0.7]
			
			\draw[dashed, ultra thick] (6,1) circle (0.9cm);
			
			\draw[ultra thick,green] (7.08,1.85) arc (40:200:1.4cm);
			\node [label=left:{\textcolor{green}{\tiny{ $v^1_{0}$ }}}] (1) at (5.3,0.32) { };
			\draw[ultra thick,teal] (4.9,0.15) arc (220:380:1.4cm);
			\node [label=left:{\textcolor{teal}{\tiny{ $v^2_{0}$ }}}] (2) at (7.7,1.7) { };
			
			\draw[ultra thick,yellow] (6.57,2.51) arc (70:235:1.65cm);
			\node [label=left:{\textcolor{yellow}{\tiny{ $v^1_{1}$ }}}] (3) at (5.8,-0.43) { };
			\draw[ultra thick,orange] (5.39,-0.5) arc (250:415:1.65cm);
			\node [label=left:{\textcolor{orange}{\tiny{ $v^2_{1}$ }}}] (4) at (7.3,2.49) { };
			
			\draw[ultra thick,pink] (5.75,2.86) arc (95:265:1.87cm);
			\node [label=left:{\textcolor{pink}{\tiny{ $v^1_{n+1}$ }}}] (5) at (6.9,-0.9) { };
			\draw[ultra thick,magenta] (6.2,-0.85) arc (275:445:1.87cm);
			\node [label=left:{\textcolor{magenta}{\tiny{ $v^2_{n+1}$ }}}] (6) at (6.865,2.8) { };
			
			\draw[dotted] (5.4,2.5) -- (5.25,2.75)  node[midway, above]{};
			\draw[dotted] (5.1,2.35) -- (4.95,2.6)  node[midway, above]{};
			\draw[dotted] (4.8,2.15) -- (4.65,2.4)  node[midway, above]{};
			\draw[dotted] (4.6,1.9) -- (4.45,2.15)  node[midway, above]{};
			\draw[dotted] (4.4,1.6) -- (4.25,1.85)  node[midway, above]{};
			\draw[dotted] (4.3,1.25) -- (4.15,1.5)  node[midway, above]{};
			
			\draw[dotted] (5.1,2.75) -- (4.95,3)  node[midway, above]{};
			\draw[dotted] (4.8,2.6) -- (4.65,2.85)  node[midway, above]{};
			\draw[dotted] (4.55,2.4) -- (4.4,2.65)  node[midway, above]{};
			\draw[dotted] (4.35,2.2) -- (4.2,2.45)  node[midway, above]{};
			\draw[dotted] (4.1,1.65) -- (3.95,1.9)  node[midway, above]{};
			\draw[dotted] (4,1.35) -- (3.85,1.6)  node[midway, above]{};
			
			\filldraw[draw=gray, fill=gray, opacity=0.2, rotate around={135:(5.95,0.84)}] (5.95,3) rectangle (6.05,1.3);
			\node [label=left:{\tiny{ \{ $v^2_{0}$,$v^1_{1}$, $v^1_{n+1}$ \} }}] (7) at (5.6,-0.95) { };
			
			\filldraw[draw=gray, fill=gray, opacity=0.2, rotate around={168:(5.95,0.86)}] (5.95,3) rectangle (6.05,1.3);
			\node [label=left:{\tiny{ \{ $v^2_{0}$,$v^2_{1}$, $v^1_{n+1}$ \} }}] (8) at (6.7,-1.5) { };

			\node (11) at (6,-2.1) {\textbf{(c)}};
			\label{c}
		\end{tikzpicture}
		
		\label{fig:v0,1,(n+i)}
	\end{subfigure}
	
	\vspace*{-2.5ex}
	\caption{An illustration of the construction given in the proof of Lemma~\ref{lem:hardness}. (a) A circular-arc graph $G$ which is the complement of the perfect matching $v^1_{i}$,$v^2_{i}$ for $i \in \{\text{-}1, 0, \dots, 2n\}$ where $v^1_{i}$ and $v^2_{i}$ are represented with two non-intersecting half-arcs around the circle and the dotted lines correspond to the intermediate half-arcs. (b) The fixed (non-) Helly triples $ \{v^1_{\text{-}1},  v^1_{0}, v^1_{1} \}$, $ \{v^1_{\text{-}1},  v^1_{0}, v^2_{1} \}$	and $ \{v^1_{\text{-}1}, v^2_{0}, v^2_{1} \}$. (c) The fixed Helly triples $ \{v^2_{0},  v^1_{1}, v^1_{n \text{+} 1} \}$	and $ \{v^2_{0}, v^2_{1}, v^1_{n \text{+} 1} \}$.}
	\label{fig:reduction} 
\end{figure}
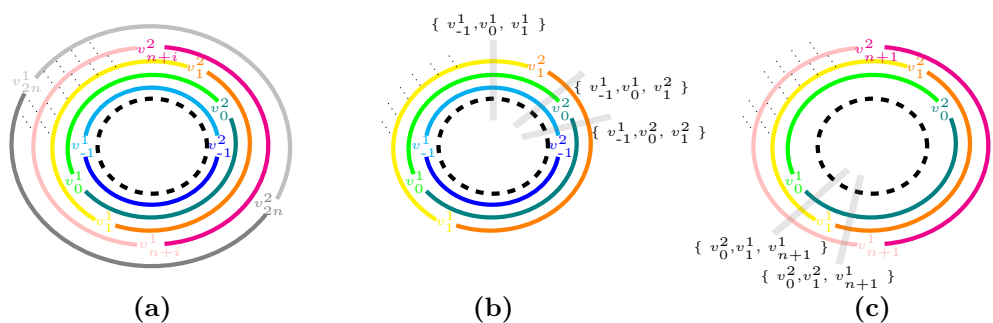
	
	By Lemma~\ref{lem:hardness}, we get the following.
	\begin{corl}\label{corl:properUrecogNP-hard}
		The recognition problem is NP-complete for $\mathcal{U}$-graphs and proper $\mathcal{U}$-graphs where $\mathcal{U}$ is the class of all unicyclic graphs.
	\end{corl}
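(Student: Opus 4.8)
The plan is to reduce {\sc HellyCliquesCARC} to the recognition of $\mathcal{U}$-graphs, so that Lemma~\ref{lem:hardness} immediately yields NP-hardness; together with an NP-membership argument this gives NP-completeness. Let $(G;\mathcal{C})$ be an instance with $G$ the complement of a perfect matching (hence every circular-arc representation of $G$ is proper) and $\mathcal{C}$ a family of triples. I would build a graph $G'$ by leaving $G$ untouched and, for each clique $C=\{x,y,z\}\in\mathcal{C}$, attaching a small \emph{gadget}: a new vertex $w_C$ with $N(w_C)=\{x,y,z\}$ together with a private pendant leaf $\ell_C$ adjacent only to $w_C$. The intended meaning is that each $w_C$ must be drawn on a pendant tree of the host unicyclic graph, which forces the three arcs of $C$ to share the attachment node and hence to be Helly.

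Two structural facts drive the correctness. First, the cocktail-party graph $G$ is non-chordal (it contains an induced $C_4$), so in \emph{any} representation of $G'$ on a unicyclic host $U$ the subgraphs of the vertices of $G$ must use the unique cycle of $U$; otherwise, cutting a cycle-edge missed by all of them would exhibit $G$ as a subtree-intersection graph, i.e. chordal, a contradiction. Thus the cycle of $U$ plays the role of the circle and the vertices of $G$ behave as near-semicircular arcs. Second, connected subgraphs of a tree satisfy the Helly property, so a clique can fail to be Helly only by \emph{wrapping} the single cycle, i.e. only if its members jointly cover the circle without a common point.

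For the forward direction I would start from a proper circular-arc representation of $G$ in which every $C\in\mathcal{C}$ is Helly, choose for each such $C$ its common point $p_C$, attach a pendant edge of $U$ at $p_C$, extend only the arcs $x,y,z$ slightly into this pendant, and place $w_C$ (with its leaf $\ell_C$) in the pendant interior so that it meets exactly $x,y,z$; minor perturbations keep the whole model proper (no subgraph contains another), which also settles the proper-$\mathcal{U}$-graph case. For the backward direction I would argue the contrapositive: if some $C\in\mathcal{C}$ were non-Helly then, by the second fact, $x,y,z$ cover the circle, and in the dense graph $G$ any connected region meeting all three of them must also meet a fourth vertex $d\notin C$, forcing the forbidden adjacency $w_C\sim d$; hence every $C\in\mathcal{C}$ is Helly and the induced arc model of $G$ witnesses a ``yes'' for {\sc HellyCliquesCARC}. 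The reduction is clearly polynomial, with $|V(G')|=|V(G)|+2|\mathcal{C}|$, and it produces the same instance for both the $\mathcal{U}$-graph and the proper $\mathcal{U}$-graph question.

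Finally, for membership in NP I would observe that if $G'$ is a $U$-graph for some unicyclic $U$ then it is a $U^{\ast}$-graph for a unicyclic $U^{\ast}$ of size polynomial in $|V(G')|$: suppress degree-two nodes of $U$ that are not an endpoint of any subgraph and prune unused branches, so that the numbers of branch nodes and leaves, as well as the necessary subdivision points, are all $O(|V(G')|)$. Such a representation is encoded by the cyclic and linear orders of the $O(|V(G')|)$ subgraph-endpoints along the edges of $U^{\ast}$ and checked in polynomial time, so the problem is in NP. I expect the main obstacle to be the backward direction: ruling out \emph{exotic} unicyclic representations of $G'$ in which the vertices of $G$ are not clean near-semicircular arcs and a non-Helly $C$ is nonetheless ``patched'' by routing $w_C$ partly through tree branches and subdivisions of $U$. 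Making the density argument robust against all such hosts -- rather than only against the canonical circular-arc model -- is where the real work lies.
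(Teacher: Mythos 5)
Your proposal is correct and takes essentially the same approach as the paper: both reduce \textsc{HellyCliquesCARC}$(G;\mathcal{C})$ (NP-hard by Lemma~\ref{lem:hardness}) to (proper) $\mathcal{U}$-graph recognition by attaching, for each prescribed clique $C$, a new private vertex adjacent exactly to $C$, which must then be drawn on a tree branch of the unicyclic host and thereby forces $C$ to be realized as a Helly clique. The differences are minor: your extra pendant leaf $\ell_C$ is unnecessary, your treatment of the backward direction is actually more explicit than the paper's (which asserts the equivalence with almost no argument), and your NP-membership sketch supplies something the paper omits entirely even though the corollary claims NP-completeness rather than mere NP-hardness.
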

	
\begin{proof}
	Let $G$ be a circular-arc graph and $\mathcal{S}$ be a set of its cliques. We construct a graph $G^*$ in polynomial time as follows: for each clique $C \in \mathcal{S}$, add a new vertex $v$ adjacent to all vertices in $C$. Then, for each $C \in \mathcal{S}$, $C \cup v$ is a maximal clique in $G^*$. Let $\mathcal{S'}$ denote the set of those new maximal cliques of $G^*$. Now, $G^*$ is a $U$-graph for a unicyclic graph $U \in \mathcal{U}$ which has $\vert \mathcal{S} \vert$ branching nodes with $\mathcal{S'}$ placed on them if all cliques in $\mathcal{S}$ can be represented as Helly cliques in a circular-arc representation of $G$. Since the latter is NP-hard by Lemma~\ref{lem:hardness}, the recognition of $\mathcal{U}$-graphs is NP-hard. Moreover, if $G$ is a proper circular-arc graph, then the constructed graph $G^*$ is a proper $U$-graph if all cliques in $\mathcal{S}$ can be represented as Helly cliques in a circular-arc representation of $G$ when the new vertex $v$ added for each $C \in \mathcal{S}$ is represented on the edge not from the circle of $U$ incident to the branching node which $C \cup v$ is placed on. Therefore, the recognition of proper $\mathcal{U}$-graphs is also NP-hard.
\end{proof}

	\section{Recognizing proper Helly and proper chordal $\boldsymbol{U}$-graphs}
	\label{sec:simple}
	
	In this section, we give two \emph{FPT}-procedures to decide whether $i)$ a given chordal graph $G$ is a proper $U$-graph, and $ii)$ a given graph $G$ is a proper Helly $U$-graph. In a proper $U$-representation of $G$ (if exists), every branching clique can be extended to a maximal clique~\cite{zemanWG} and we refer to the maximal cliques of $G$ placed on the branching nodes (i.e. the nodes of degree at least $3$) of $U$ as the \emph{branching cliques}.
	
	We first consider the case when the input graph $G$ is chordal. It is known that chordal graphs can be recognized in linear time and they have linearly many maximal cliques which can be listed in linear time \cite{recogChordaLinear}. Therefore, given a graph which is chordal, we can list its set of maximal cliques efficiently, and among these maximal cliques, our aim is to identify a bounded number of maximal cliques which can exclusively be placed on the branching nodes of the circle in \emph{FPT}-time parameterized by $\vert U \vert$. The following was proven in \cite{SdT-graphs2021efficient}. 
	
	\begin{lemma}{\bf\cite[Lemma~7.3]{SdT-graphs2021efficient}}\label{lem:branchCliqueForTCliques}
		Given a proper $T$-graph $G$ for a fixed tree $T$, one can identify an isomorphism-invariant set of bounded number of maximal cliques which can be used as branching cliques placed on branching nodes of $T$ in some proper $T$-representations of $G$ in \emph{FPT}-time parameterized by $\vert T \vert$.
	\end{lemma}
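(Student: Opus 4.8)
The plan is to exploit that $G$, being an intersection graph of subtrees of a subdivision of the tree $T$, is chordal, and therefore has at most $\vert G\vert$ maximal cliques, all listable in linear time, together with a clique tree. In any $T$-representation every maximal clique of $G$ is realised at a point of the subdivision $T'$; since subdividing an edge only inserts nodes of degree $2$, the branching nodes of $T'$ coincide with those of $T$, so there are at most $\vert T\vert$ of them. The branching cliques are the maximal cliques placed on these nodes, and the goal is to pin down, purely from $G$, a bounded canonical superset of all cliques that can occupy a branching node in some proper representation.

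First I would give a combinatorial criterion that detects branching-capable cliques without reference to any representation. For a maximal clique $C$, inspect the connected components of $G\setminus C$ together with their upper attachments in $C$, in the sense fixed in the preliminaries. If $C$ is realised in the interior of an edge or along a path of $T'$, then the subtrees leaving $C$ proceed in at most two directions, so the upper attachments of the components fall into at most two chains nested by inclusion. Conversely, if $C$ sits on a node of degree $d\ge 3$, these upper attachments organise into $d$ such nested chains, one per incident edge. I would therefore call $C$ a \emph{branching candidate} exactly when its upper attachments cannot be covered by two inclusion-chains; this is a property of $G$ alone, testable for each of the $O(\vert G\vert)$ maximal cliques in polynomial time, and it is manifestly preserved by isomorphism.

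Next I would bound the number of branching candidates by a function of $\vert T\vert$. Here properness is the decisive ingredient: because no subtree properly contains another, one cannot slide a branching clique along an incident edge while keeping all containments strict, so along any path of $T'$ the realised cliques behave like those of a proper interval graph and contribute no clique requiring three or more directions. I would therefore charge each branching candidate to a branching node of $T$ and argue, using incomparability, that each node receives only boundedly many candidates; summing over the at most $\vert T\vert$ branching nodes yields a bound depending only on $\vert T\vert$. The desired isomorphism-invariant set is then the collection of all branching candidates, or a canonically pruned subset thereof, and its computation is \emph{FPT} in $\vert T\vert$.

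The main obstacle is precisely this counting step. The subtlety is that $G$ may carry many maximal cliques along its path-like parts, yet none of these may be mistaken for a branching clique; and different proper representations may realise different cliques at the same branching node, so the candidate set must absorb all of them while remaining bounded and canonical. Proving that the ``at least three chains'' condition is genuinely rare---charged to branching nodes of $T$ with bounded multiplicity---despite the freedom both in the clique tree of $G$ and in the choice of proper representation, is where the real work lies; the remaining invariance and running-time claims then follow routinely.
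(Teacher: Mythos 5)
You should first be aware that the paper contains no proof of this statement at all: it is imported verbatim as Lemma~7.3 of \cite{SdT-graphs2021efficient}, and the present paper only cites and uses it. So there is no internal argument to compare yours against, and your proposal has to be judged as a from-scratch proof of the cited result. Judged that way, it has a genuine gap, in fact two. The structural dichotomy on which your whole plan rests is false as stated. You claim that if a maximal clique $C$ is realised at a degree-$2$ point of the subdivision $T'$, then the upper attachments of the components of $G \setminus C$ fall into at most two inclusion chains. But the subtrees representing the vertices of $C$ need only share the realisation point $p$; they may extend \emph{through} a branching node of $T'$ into different branches, and then components attached beyond that node have pairwise incomparable attachments even though $C$ itself sits inside an edge. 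Concretely, let $T$ be the star with three legs and centre $c$, and realise $C=\{a,b,d\}$ at an interior point $p$ of leg~$1$: let $a$ run from $p$ through $c$ into leg~$2$, let $b$ run from $p$ through $c$ into leg~$3$, and let $d$ run from $p$ towards the leaf of leg~$1$; attach pendant vertices $x_a, x_b, x_d$ whose subtrees poke out just beyond the tips of $a$, $b$, $d$ respectively, so that the representation remains proper. Then $G\setminus C$ has three components with pairwise incomparable upper attachments $\{a\}$, $\{b\}$, $\{d\}$, yet $C$ occupies a degree-$2$ point. Hence ``needs three chains'' does not certify ``must sit on a branching node,'' and the charging argument you propose---charge every branching candidate to the branching node it is forced to occupy in a fixed representation---collapses, because a candidate need not occupy any branching node in that representation.

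Second, even setting this aside, you explicitly defer the counting step (that the candidates number at most $f(\vert T\vert)$) as ``where the real work lies.'' That step \emph{is} the content of the lemma; without it the proposal is a statement of the difficulty rather than a proof, and as the example above shows, the intended one-line proof of it (distinct cliques occupy distinct points, and there are few branching nodes) is not available for your candidate set. There is also an unaddressed completeness issue in the other direction: the lemma requires a set that can actually be \emph{used} as the branching cliques of some proper $T$-representation, i.e.\ some representation must draw all of its branching cliques from the identified set. A clique placed on a branching node whose attached components leave through only two of the incident edges (in the extreme case, a proper interval graph viewed as a proper $T$-graph) is not a branching candidate in your sense, so your set can be empty or too small to populate all branching nodes; any correct proof must either enlarge the set canonically or argue that such degenerate placements can be eliminated by passing to a smaller tree, and your proposal does neither.
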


	The following lemma shows that also for proper $U$-graphs for a fixed unicyclic graph $U$, we can identify a bounded number of maximal cliques which can be used as branching cliques among polynomially many maximal cliques.
	
	\begin{replemma}{richBranchBczTpartsCliques}\label{lem:richBranchBczTpartsCliques}
		 For a fixed unicyclic graph $U$, given a proper $U$-graph $G$ which has polynomially many maximal cliques, one can identify an isomorphism-invariant bounded number of maximal cliques which can be used as branching cliques placed on the circle of $U$ in some proper $U$-representations of $G$ in \emph{FPT}-time parameterized by $\vert U \vert$.
	\end{replemma}

	\begin{proof}
		We can perceive a proper $U$-representation of $G$ as the assignment of its maximal cliques to the nodes and edges of $U$ where each node is a single maximal clique and each edge can have many maximal cliques on it. For every proper $U$-representation of $G$ and any edge $e=\{ v,w \}$ of the circle of $U$, let $G'$ be the subgraph of $G$ obtained by discarding $\mathcal{C}_e \setminus (C_v \cup C_w)$ where $\mathcal{C}_e$ denotes the set of maximal cliques placed on $e$, and $C_v$ and $C_w$ denote the maximal cliques placed on the branching nodes $v$ and $w$ of the circle of $U$, respectively. Note that every such $G'$ is chordal and a proper $T$-graph where $T=U-e$ is a tree since $U$ is unicyclic. Moreover, every $G'$ is actually an induced subgraph of $G$ except the case $C_v \cup C_w \supseteq C_x$ for all $C_e \in \mathcal{C}_x \setminus (C_v \cup C_w)$. Then, for each such $G'$, one can identify a bounded number of maximal cliques of $G'$ which can be used as branching cliques in some proper $T$-representation of $G'$ in \emph{FPT}-time parameterized by $\vert T \vert$ by Lemma~\ref{lem:branchCliqueForTCliques}. 
		Therefore, one can identify a bounded number of maximal cliques of $G$ which can be used as branching cliques in some proper $U$-representation of $G$ in \emph{FPT}-time parameterized by $\vert U \vert$ using the algorithm of \cite{chaplick2020recognizing}. The arguments on our selection being isomorphism-invariant and capturing some proper $T$-representations from Lemma~\ref{lem:branchCliqueForTCliques} also hold here. 
	\end{proof}

	We call the maximal cliques identified using Lemma~\ref{lem:richBranchBczTpartsCliques} the \emph{rich cliques}, and since their number in proper $U$-graphs is bounded by $\vert U \vert$ by Lemma~\ref{lem:richBranchBczTpartsCliques}, we can afford in \emph{FPT}-time to try all possible assignments of rich cliques to the branching nodes on the circle of $U$ and obtain an \emph{FPT}-time recognition algorithm. Given a graph $G$, we make the following assumptions:
	
	\begin{itemize}
		\item [\textcolor{gray}{\textbf{A1.}}] $G$ is not a proper $U_1$-graph for some subgraph or minor $U_1$ of $U$. This is due to the fact that if $U$ does not contain a vertex of degree $2$, which is not a restriction since a $U$-representation can be obtained by applying any suitable subdivision on $U$, then it is not a subdivision of some other unicyclic graph $U^*$, and the number of all subgraphs and minors of $U$ is bounded by a function of $\vert U \vert$. Therefore, Procedure~\ref{proc:givenCliquesRecog} can be applied to test whether $G$ is a proper $U_1$-graph for each subgraph and minor $U_1$ of $U$.
		
		\item [\textcolor{gray}{\textbf{A2.}}] $G$ is connected since every component of a disconnected proper $U$-graph without a rich clique is an interval graph and can be placed on any edge incident to a leaf of $U$. Otherwise, the number of rich cliques, therefore the number of such components, is bounded and, we can try all possible assignments of such components to the subgraphs of $U$.
	\end{itemize}
	
	\begin{proc}\label{proc:givenCliquesRecog} \rm
		Given a connected chordal graph $G$ on $n$ vertices and a fixed unicyclic graph $U$, we decide whether $G$ is a proper $U$-graph as follows:
		
		\begin{enumerate}
			\item Let $\mathcal{C}$ be the set of all maximal cliques of $G$, $\mathcal{B}$ denote the branching nodes on the circle of $U$ and $deg(\mathcal{B})$ be the sum of degrees of nodes of $U$ in $\mathcal{B}$. Find the set $\mathcal{C^*} \subseteq \mathcal{C}$ of its rich cliques using Lemma~\ref{lem:branchCliqueForTCliques} and Lemma~\ref{lem:richBranchBczTpartsCliques}.
			If $\vert \mathcal{C^*} \vert < \vert \mathcal{B} \vert$ or $\vert \mathcal{C^*} \vert$ is not bounded by a function of $\vert U \vert$ by Lemma~\ref{lem:richBranchBczTpartsCliques}, return that $G$ is not a proper $U$-graph.
			
			\item For each assignment $f: \mathcal{C^*} \rightarrow \mathcal{B}$, find the connected components $\mathcal{X}$ of $G-f(\mathcal{C^*})$. If there are more than $deg(\mathcal{B}) - \vert \mathcal{B} \vert$ connected components in $\mathcal{X}$, then move on to another assignment. Otherwise, $\vert \mathcal{X} \vert \leq deg(\mathcal{B}) - \vert \mathcal{B} \vert$ and we proceed as follows:
			
			\begin{enumerate}
				
				\item For each branching node $b_i \in \mathcal{B}$:
				\begin{enumerate}
					
					\item Let $C_i$ denote the maximal clique of $\mathcal{C}^*$ placed on the branching node $b_i \in \mathcal{B}$ in the current assignment $f$, i.e. $C_i = f{\mid_{b_i}}$. 
					
					\item Let $P$ denote the maximal induced subgraph of the circle containing $b_i$ and all other branching nodes $b_j$ such that $C_i \cap C_j \neq \emptyset$, i.e. $C_i$ and $C_j$ share a vertex.
					
					\item \textbf{If $\boldsymbol{P}$ does not contain at least one branching node $\boldsymbol{b_j}$}, then it is not a circle. Let $Y \subsetneq U$ denote the maximal connected subtree of $U$ which contain only $P$ from the circle, and $\mathcal{Y^*}$ denote the union of the branching cliques and the connected components placed on $Y$, and the additional components $X_k$ and $X_l$ placed on the edges next to the ends of $P$ on the circle (we prove that such components must exist). If $G[\mathcal{Y^*}]$ is not a proper $Y$-graph, then move on to another assignment.
					
					\item \textbf{Else if $\boldsymbol{P}$ is a circle and there is no connected components placed on the edges of $\boldsymbol{P}$}, then
					if there exists no pair of consecutive branching nodes on $P$ such that the rich cliques placed on them share the minimal (at most $2$) sets of common vertices with $C_i$, then move on to another assignment. Otherwise, let $b_{j}$ and $b_{j+1}$ be the furthest branching nodes from $b_i$ such that $C_j \cap C_i \subsetneq C_k \cap C_i$ for all $b_k$ between $b_i$ and $b_j$, and $C_{j+1} \cap C_i \subsetneq C_k \cap C_i$ for all $b_k$ between $b_i$ and $b_{j+1}$. Let $e$ denote the edge between $b_j$ and $b_{j+1}$, $Y = U - e$ denote the maximally connected subtree of $U$ which contains $P - e$ from the circle, and $\mathcal{Y^*} = G - (C_j \cap C_{j+1})$. If $G[\mathcal{Y^*}]$ is a proper $Y$-graph, return that $G$ is a proper $U$-graph. Otherwise, move on to another assignment.
					
					\item \textbf{Else, $\boldsymbol{P}$ is a circle and there is exactly one connected component $\boldsymbol{X_k} \in \boldsymbol{\mathcal{X}}$ placed on the edge $\boldsymbol{e}$ of $\boldsymbol{P}$} between the branching nodes $b_{j}$ and $b_{j+1}$. Let $Y = U - e$ denote the maximal connected subtree of $U$ which contain only $P - e$ from the circle, and $\mathcal{Y^*}$ denote the union of the branching cliques and the connected components placed on $Y$, and additionally two copies of $X_k$ one with its attachment in $C_j$ and the other with its attachment in $C_{j+1}$. If $G[\mathcal{Y^*}]$ is a proper $Y$-graph and $G[C_j \cup X_k \cup C_{j+1}]$ is a proper interval graph, return that $G$ is a proper $U$-graph. Otherwise, move on to another assignment.
					
				\end{enumerate}
				
				\item Return that $G$ is a proper $U$-graph.	
			\end{enumerate} 
			\item Return that $G$ is not a proper $U$-graph.
		\end{enumerate}	
	\end{proc}
	
	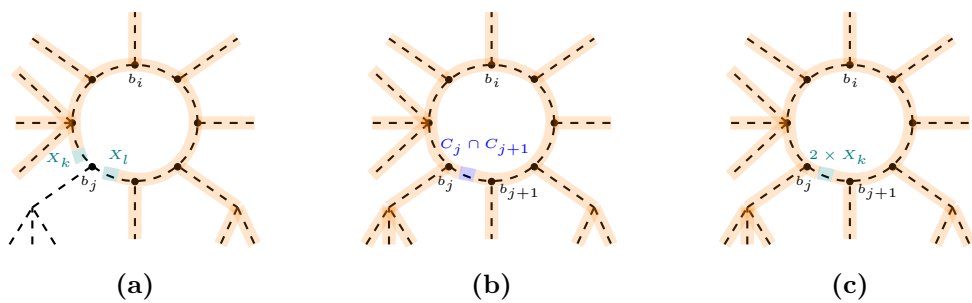
\begin{figure}[t]
		\centering
		\begin{subfigure}[t]{0.32\linewidth}
			\centering
			\begin{tikzpicture}[xscale=0.75,yscale=0.7]
				
				\draw[dashed, thick] (6,1) circle (1.1cm);
				
				\draw[dashed, thick] (6,2.1) -- (6,3.1);
				\draw[dashed, thick] (6.8,1.85) -- (7.8,2.6);
				\draw[dashed, thick] (7.1,1) -- (8.1,1);
				\draw[dashed, thick] (6.8,0.15) -- (7.8,-0.6);
				\draw[dashed, thick] (6,-0.2) -- (6,-1.2);
				\draw[dashed, thick] (4.2,-0.6) -- (5.2,0.15);
				\draw[dashed, thick] (4.9,1) -- (3.9,1);
				\draw[dashed, thick] (4.2,2.6) -- (5.2,1.85);
				
				\draw[dashed, thick] (7.8,-0.6) -- (8.1,-1.3);
				\draw[dashed, thick] (7.8,-0.6) -- (7.5,-1.3);
				
				\draw[dashed, thick] (4.6,-1.3) -- (4.2,-0.6);
				\draw[dashed, thick] (4.2,-1.3) -- (4.2,-0.6);
				\draw[dashed, thick] (3.8,-1.3) -- (4.2,-0.6);
				
				\draw[dashed, thick] (4.9,1) -- (3.9,0);
				\draw[dashed, thick] (4.9,1) -- (3.9,2);
				
				\node[label=left:{\tiny{$b_i$}}] (1) at (6.5,1.8) {};
				\node[label=left:{\tiny{$b_j$}}] (6) at (5.7,-0.15) {};
				
				\node[label=left:{\textcolor{teal}{\tiny{$X_k$}}}] (6) at (5.23,0.3) {};
				\node[label=left:{\textcolor{teal}{\tiny{$X_l$}}}] (6) at (6.25,0.36) {};
				
				\node at (6,2.1) [circle,draw, fill=black, opacity=1, color=black, inner sep=0.3mm] (11) {};
				\node at (6.75,1.83) [circle,draw, fill=black, opacity=1, color=black, inner sep=0.3mm] (12) {};
				\node at (7.1,1) [circle,draw, fill=black, opacity=1, color=black, inner sep=0.3mm] (13) {};
				\node at (6.75,0.17) [circle,draw, fill=black, opacity=1, color=black, inner sep=0.3mm] (14) {};
				\node at (6,-0.11) [circle,draw, fill=black, opacity=1, color=black, inner sep=0.3mm] (15) {};
				\node at (5.25,0.17) [circle,draw, fill=black, opacity=1, color=black, inner sep=0.3mm] (16) {};
				\node at (4.9,1) [circle,draw, fill=black, opacity=1, color=black, inner sep=0.3mm] (17) {};
				\node at (5.25,1.82) [circle,draw, fill=black, opacity=1, color=black, inner sep=0.3mm] (18) {};

				\draw[-, line width=5pt, teal, opacity=0.2] (5.44,0.06) .. controls(5.5,0.045) and (5.65,0.02) .. (5.7,0);
				
				\draw[-, line width=5pt, teal, opacity=0.2] (5,0.5) .. controls(5.03,0.4) and (5.05,0.3) .. (5.07,0.27);
				
				\draw[-, line width=5pt, orange, opacity=0.2] (5,0.5) .. controls(4.6,2.1) and (6,2.4) .. (6.75,1.85) .. controls(7.5,1) and (6.8,-0.4) .. (5.7,0);	
				
				\draw[line width=5pt, orange, opacity=0.2] (6,2.1) -- (6,3.1);
				\draw[line width=5pt, orange, opacity=0.2] (6.8,1.85) -- (7.8,2.6);
				\draw[line width=5pt, orange, opacity=0.2] (7.1,1) -- (8.1,1);
				\draw[line width=5pt, orange, opacity=0.2] (6.8,0.15) -- (7.8,-0.6);
				\draw[line width=5pt, orange, opacity=0.2] (6,-0.2) -- (6,-1.2);
				\draw[line width=5pt, orange, opacity=0.2] (4.9,1) -- (3.9,1);
				\draw[line width=5pt, orange, opacity=0.2] (4.2,2.6) -- (5.2,1.85);
				
				\draw[line width=5pt, orange, opacity=0.2] (7.8,-0.6) -- (8.1,-1.3);
				\draw[line width=5pt, orange, opacity=0.2] (7.8,-0.6) -- (7.5,-1.3);
				
				\draw[line width=5pt, orange, opacity=0.2] (4.9,1) -- (3.9,0);
				\draw[line width=5pt, orange, opacity=0.2] (4.9,1) -- (3.9,2);
				
				\node (111) at (6,-2.1) {\textbf{(a)}};
				\label{a}
			\end{tikzpicture}
			
			\label{fig:case1}
		\end{subfigure}
		~
		\begin{subfigure}[t]{0.32\linewidth}
			\centering
			\begin{tikzpicture}[xscale=0.75,yscale=0.7]
				
				\draw[dashed, thick] (6,1) circle (1.1cm);
				
				\draw[dashed, thick] (6,2.1) -- (6,3.1);
				\draw[dashed, thick] (6.8,1.85) -- (7.8,2.6);
				\draw[dashed, thick] (7.1,1) -- (8.1,1);
				\draw[dashed, thick] (6.8,0.15) -- (7.8,-0.6);
				\draw[dashed, thick] (6,-0.2) -- (6,-1.2);
				\draw[dashed, thick] (4.2,-0.6) -- (5.2,0.15);
				\draw[dashed, thick] (4.9,1) -- (3.9,1);
				\draw[dashed, thick] (4.2,2.6) -- (5.2,1.85);
				
				\draw[dashed, thick] (7.8,-0.6) -- (8.1,-1.3);
				\draw[dashed, thick] (7.8,-0.6) -- (7.5,-1.3);
				
				\draw[dashed, thick] (4.6,-1.3) -- (4.2,-0.6);
				\draw[dashed, thick] (4.2,-1.3) -- (4.2,-0.6);
				\draw[dashed, thick] (3.8,-1.3) -- (4.2,-0.6);
				
				\draw[dashed, thick] (4.9,1) -- (3.9,0);
				\draw[dashed, thick] (4.9,1) -- (3.9,2);
				
				\node[label=left:{\tiny{$b_i$}}] (1) at (6.5,1.8) {};
				\node[label=left:{\tiny{$b_j$}}] (6) at (5.7,-0.15) {};
				\node[label=right:{\tiny{$b_{j+1}$}}] (5) at (5.8,-0.3) {};
				\node[label=right:{\textcolor{blue}{\tiny{$C_j \cap C_{j+1}$}}}] (55) at (4.76,0.54) {};
				
				\node at (6,2.1) [circle,draw, fill=black, opacity=1, color=black, inner sep=0.3mm] (11) {};
				\node at (6.75,1.83) [circle,draw, fill=black, opacity=1, color=black, inner sep=0.3mm] (12) {};
				\node at (7.1,1) [circle,draw, fill=black, opacity=1, color=black, inner sep=0.3mm] (13) {};
				\node at (6.75,0.17) [circle,draw, fill=black, opacity=1, color=black, inner sep=0.3mm] (14) {};
				\node at (6,-0.11) [circle,draw, fill=black, opacity=1, color=black, inner sep=0.3mm] (15) {};
				\node at (5.25,0.17) [circle,draw, fill=black, opacity=1, color=black, inner sep=0.3mm] (16) {};
				\node at (4.9,1) [circle,draw, fill=black, opacity=1, color=black, inner sep=0.3mm] (17) {};
				\node at (5.25,1.82) [circle,draw, fill=black, opacity=1, color=black, inner sep=0.3mm] (18) {};
				
				\draw[-, line width=5pt, blue, opacity=0.2] (5.44,0.06) .. controls(5.5,0.045) and (5.65,0.02) .. (5.7,0);
				
				\draw[-, line width=5pt, orange, opacity=0.2] (5.45,0.05) .. controls(5.13,0.2) and (5.05,0.4) .. (5,0.5) .. controls(4.6,2.1) and (6,2.4) .. (6.75,1.85) .. controls(7.5,1) and (6.8,-0.4) .. (5.7,0);	
				
				\draw[line width=5pt, orange, opacity=0.2] (6,2.1) -- (6,3.1);
				\draw[line width=5pt, orange, opacity=0.2] (6.8,1.85) -- (7.8,2.6);
				\draw[line width=5pt, orange, opacity=0.2] (7.1,1) -- (8.1,1);
				\draw[line width=5pt, orange, opacity=0.2] (6.8,0.15) -- (7.8,-0.6);
				\draw[line width=5pt, orange, opacity=0.2] (6,-0.2) -- (6,-1.2);
				\draw[line width=5pt, orange, opacity=0.2] (4.2,-0.6) -- (5.2,0.15);
				\draw[line width=5pt, orange, opacity=0.2] (4.9,1) -- (3.9,1);
				\draw[line width=5pt, orange, opacity=0.2] (4.2,2.6) -- (5.2,1.85);
				
				\draw[line width=5pt, orange, opacity=0.2] (7.8,-0.6) -- (8.1,-1.3);
				\draw[line width=5pt, orange, opacity=0.2] (7.8,-0.6) -- (7.5,-1.3);
				
				\draw[line width=5pt, orange, opacity=0.2] (4.6,-1.3) -- (4.2,-0.6);
				\draw[line width=5pt, orange, opacity=0.2] (4.2,-1.3) -- (4.2,-0.6);
				\draw[line width=5pt, orange, opacity=0.2] (3.8,-1.3) -- (4.2,-0.6);
				
				\draw[line width=5pt, orange, opacity=0.2] (4.9,1) -- (3.9,0);
				\draw[line width=5pt, orange, opacity=0.2] (4.9,1) -- (3.9,2);
				
				\node (111) at (6,-2.1) {\textbf{(b)}};
				\label{b}
			\end{tikzpicture}
			\label{fig:case2}
		\end{subfigure}
		~
		\begin{subfigure}[t]{0.32\linewidth}
			\centering
			\begin{tikzpicture}[xscale=0.75,yscale=0.7]
				
				\draw[dashed, thick] (6,1) circle (1.1cm);
				
				\draw[dashed, thick] (6,2.1) -- (6,3.1);
				\draw[dashed, thick] (6.8,1.85) -- (7.8,2.6);
				\draw[dashed, thick] (7.1,1) -- (8.1,1);
				\draw[dashed, thick] (6.8,0.15) -- (7.8,-0.6);
				\draw[dashed, thick] (6,-0.2) -- (6,-1.2);
				\draw[dashed, thick] (4.2,-0.6) -- (5.2,0.15);
				\draw[dashed, thick] (4.9,1) -- (3.9,1);
				\draw[dashed, thick] (4.2,2.6) -- (5.2,1.85);
				
				\draw[dashed, thick] (7.8,-0.6) -- (8.1,-1.3);
				\draw[dashed, thick] (7.8,-0.6) -- (7.5,-1.3);
				
				\draw[dashed, thick] (4.6,-1.3) -- (4.2,-0.6);
				\draw[dashed, thick] (4.2,-1.3) -- (4.2,-0.6);
				\draw[dashed, thick] (3.8,-1.3) -- (4.2,-0.6);
				
				\draw[dashed, thick] (4.9,1) -- (3.9,0);
				\draw[dashed, thick] (4.9,1) -- (3.9,2);
				
				\node[label=left:{\tiny{$b_i$}}] (1) at (6.5,1.8) {};
				\node[label=right:{\tiny{$b_{j+1}$}}] (5) at (5.8,-0.3) {};
				\node[label=left:{\tiny{$b_j$}}] (6) at (5.7,-0.15) {};
				\node[label=left:{\textcolor{teal}{\tiny{$2 \times X_{k}$}}}] (6) at (6.65,0.36) {};
				
				\node at (6,2.1) [circle,draw, fill=black, opacity=1, color=black, inner sep=0.3mm] (11) {};
				\node at (6.75,1.83) [circle,draw, fill=black, opacity=1, color=black, inner sep=0.3mm] (12) {};
				\node at (7.1,1) [circle,draw, fill=black, opacity=1, color=black, inner sep=0.3mm] (13) {};
				\node at (6.75,0.17) [circle,draw, fill=black, opacity=1, color=black, inner sep=0.3mm] (14) {};
				\node at (6,-0.11) [circle,draw, fill=black, opacity=1, color=black, inner sep=0.3mm] (15) {};
				\node at (5.25,0.17) [circle,draw, fill=black, opacity=1, color=black, inner sep=0.3mm] (16) {};
				\node at (4.9,1) [circle,draw, fill=black, opacity=1, color=black, inner sep=0.3mm] (17) {};
				\node at (5.25,1.82) [circle,draw, fill=black, opacity=1, color=black, inner sep=0.3mm] (18) {};
				
				\draw[-, line width=5pt, teal, opacity=0.2] (5.44,0.06) .. controls(5.5,0.045) and (5.65,0.02) .. (5.7,0);
				
				\draw[-, line width=5pt, orange, opacity=0.2] (5.45,0.05) .. controls(5.13,0.2) and (5.05,0.4) .. (5,0.5) .. controls(4.6,2.1) and (6,2.4) .. (6.75,1.85) .. controls(7.5,1) and (6.8,-0.4) .. (5.7,0);	
				
				\draw[line width=5pt, orange, opacity=0.2] (6,2.1) -- (6,3.1);
				\draw[line width=5pt, orange, opacity=0.2] (6.8,1.85) -- (7.8,2.6);
				\draw[line width=5pt, orange, opacity=0.2] (7.1,1) -- (8.1,1);
				\draw[line width=5pt, orange, opacity=0.2] (6.8,0.15) -- (7.8,-0.6);
				\draw[line width=5pt, orange, opacity=0.2] (6,-0.2) -- (6,-1.2);
				\draw[line width=5pt, orange, opacity=0.2] (4.2,-0.6) -- (5.2,0.15);
				\draw[line width=5pt, orange, opacity=0.2] (4.9,1) -- (3.9,1);
				\draw[line width=5pt, orange, opacity=0.2] (4.2,2.6) -- (5.2,1.85);
				
				\draw[line width=5pt, orange, opacity=0.2] (7.8,-0.6) -- (8.1,-1.3);
				\draw[line width=5pt, orange, opacity=0.2] (7.8,-0.6) -- (7.5,-1.3);
				
				\draw[line width=5pt, orange, opacity=0.2] (4.6,-1.3) -- (4.2,-0.6);
				\draw[line width=5pt, orange, opacity=0.2] (4.2,-1.3) -- (4.2,-0.6);
				\draw[line width=5pt, orange, opacity=0.2] (3.8,-1.3) -- (4.2,-0.6);
				
				\draw[line width=5pt, orange, opacity=0.2] (4.9,1) -- (3.9,0);
				\draw[line width=5pt, orange, opacity=0.2] (4.9,1) -- (3.9,2);
				
				\node (111) at (6,-2.1) {\textbf{(c)}};
				\label{c}
				
			\end{tikzpicture}
			\label{fig:case3}
		\end{subfigure}
		\caption{ (a) Step $iii$ of Procedure~\ref{proc:givenCliquesRecog} where all components and the branching cliques placed on the orange subgraph of $U$ together with the teal connected components $X_k$ and $X_l$ correspond to $\mathcal{Y^*}$. (b) Step $iv$ of Procedure~\ref{proc:givenCliquesRecog} where the blue subgraph is not included in $\mathcal{Y^*}$ which contains all components and the branching cliques placed on the orange subgraph of $U$. (c) Step $v$ of Procedure~\ref{proc:givenCliquesRecog} where $\mathcal{Y^*}$ contains all components and the branching cliques placed on the orange subgraph of $U$ together with two copies of $X_k$.}
		\label{fig:cases}
	\end{figure}

	\begin{reptheorem}{givenCliquesRecog}\label{theo:givenCliquesRecog}
		Procedure~\ref{proc:givenCliquesRecog} correctly decides whether a given chordal graph is a proper $U$-graph in \emph{FPT}-time.
	\end{reptheorem}
	
	\begin{proof}
		We first prove that Procedure~\ref{proc:givenCliquesRecog} works correctly. We return that $G$ is not a proper $U$-graph if the number of rich cliques is not bounded due to Lemma~\ref{lem:richBranchBczTpartsCliques} or the number of rich cliques is less than $\vert \mathcal{B} \vert$ due to the assumption A1. Since the set of rich cliques contains the possible branching cliques by Lemma~\ref{lem:richBranchBczTpartsCliques}, we consider the assignments $f$ of rich cliques to the branching nodes $B$ on the circle of $U$. Therefore, there are at most $\vert \mathcal{B} \vert$ connected components placed on the circle in $G-f(\mathcal{C^*})$ since every edge of a proper $U$-representation carries at most one connected component \cite{SdT-graphs2021efficient} and a proper $U$-graph has at least one proper $U$-representation. Also, we recall that an assignment of maximal cliques to the branching nodes of $U$ determines the components between the branching nodes if $G$ is a proper $U$-graph \cite{SdT-graphs2021efficient}.
		
		Similarly, there are at most $deg(\mathcal{B}) - 2\vert \mathcal{B} \vert$ connected components placed not on the circle in $G-f(\mathcal{C^*})$ since every edge of the circle is counted twice. Specifically, there is exactly one connected component placed on each edge of $U$ due to the properness and the connectedness of $G$ \cite{SdT-graphs2021efficient}, and thus, for each $b_i \in \mathcal{B}$, at most $deg(b_i) -2$ connected components are attached to $C_i$ but not placed on the circle where $deg(b_i)$ denotes the degree of $b_i$. Therefore, if there are more than $deg(\mathcal{B}) - \vert \mathcal{B} \vert$ connected components in $G-f(\mathcal{C^*})$, we move on to another assignment as $f$ does not result in a proper $U$-representation.
		
		For each branching node $b_i$, we consider the defined subgraph $P$ of $U$, and obtain a subtree $Y \subsetneq U$ by including the other branching nodes $b_k$ such that $C_i \cap C_k$ is non-empty. 
		If two branching cliques of $\mathcal{C^*}$  share a vertex, then there is no connected component between them in a proper $T$-representation since the shared vertex would contain the connected component \cite{SdT-graphs2021efficient}. Therefore, if $P$ excludes some branching node on the circle as in Step $iii$, it is sufficient to check whether the components and the branching cliques contained in $Y^*$ together with the specified additional components is a proper $Y$-graph. Moreover, since $G$ is connected and $P$ excludes a branching node on the circle, both specified additional two components $X_k$ and $X_l$ exist. The branching cliques that $X_k$ is placed in between do not share a vertex, and this also holds for $X_l$. Therefore, if every such graph is a proper $Y$-graph for some tree $Y$, then $G$ is indeed a proper $U$-graph.
		
		Otherwise, $P$ is the circle and to obtain the subtree $Y \subsetneq U$, we discard a special edge of the circle. We emphasize that in cases $iv$ and $v$, we return that $G$ is a proper $U$-graph immediately when the corresponding tests are passed since the defined graph $Y^*$ contains every vertex of $G$ which may affect the properness.
		
		If there is no connected component placed on the circle as in Step $iv$, then we discard the edge between the furthest consecutive branching nodes $b_j$ and $b_{j+1}$ from $b_i$. Such nodes must exist again due to the properness we have to guarantee and since the induced subgraph of $G$ placed on $\mathcal{B}$ must be a circular-arc graph if $G$ is a proper $U$-graph. Since we remove the intersection $C_j \cap C_{j+1}$, and there exists no connected components, if $Y^*$ is a proper $Y$-graph, then $G$ is a proper $U$-graph as the intersection discarded can always be represented properly between the branching cliques containing them.
		
		Finally, if there exists a component $X_k$ between the branching nodes $b_j$ and $b_{j+1}$ as in Step $v$, then we again discard the edge $e$ between them and make two copies of $X_k$ with distinct attachments to the branching cliques $C_j$ and $C_{j+1}$. The correctness is the same as in the previous case but this time, we also need to guarantee that $X_k$ can really be represented properly. To achieve that, we additionally check whether $G[C_j \cup X_k \cup C_{j+1}]$ is a proper interval graph which is sufficient since $X_k$ has to be placed deterministically on $e$ if $G$ is a proper $U$-graph and $f$ results in a proper $U$-representation.
		
		We now prove that Procedure~\ref{proc:givenCliquesRecog} works in \emph{FPT}-time.
		The set $\mathcal{C}$ is found in polynomial time using the algorithm given in \cite{listAllMaxCliques}. Among those maximal cliques in $\mathcal{C}$, one can in \emph{FPT}-time identify the set $\mathcal{C^*}$ of rich cliques by Lemma~\ref{lem:richBranchBczTpartsCliques}. Since the number of connected components in $\mathcal{X}$ must be bounded by $\vert U \vert$ in a proper $U$-graph, if $\mathcal{C^*}$ is bounded, we can try all possible assignments $f$, and for each of them, we can find the deterministic placements of $\mathcal{X}$ on $U$ in \emph{FPT}-time. We test in \emph{FPT}-time whether every obtained $Y^*$ is a proper $Y$-graph for the tree $Y$ and the proper interval graph recognition using the algorithm given in \cite{chaplick2020recognizing}. Therefore, the overall procedure works correctly and in \emph{FPT}-time parameterized by $\vert U \vert$.
	\end{proof}
	
	Since all maximal cliques of graphs with polynomially many maximal cliques can be listed in polynomial time \cite{DBLP:journals/siamcomp/TsukiyamaIAS77}, we obtain the following. 
	
	\begin{theorem}\label{lem:polyManyRecogIso}
		For a fixed unicyclic graph $U$, one can decide whether an input graph with polynomially many maximal cliques is a proper $U$-graph in \emph{FPT}-time parameterized by $\vert U \vert$. 
	\end{theorem}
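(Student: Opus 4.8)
The plan is to reuse the machinery already developed for chordal inputs in Procedure~\ref{proc:givenCliquesRecog} and Theorem~\ref{theo:givenCliquesRecog}, isolating the single place where chordality was actually needed and replacing it. The only role chordality played there is efficient enumeration of maximal cliques: a chordal graph has linearly many maximal cliques, listable in linear time. Under the weaker hypothesis that the input graph $G$ has only polynomially many maximal cliques, I would first list all of them in polynomial time by the algorithm of~\cite{DBLP:journals/siamcomp/TsukiyamaIAS77}. Since Lemma~\ref{lem:richBranchBczTpartsCliques} assumes only polynomially many maximal cliques and not chordality, the identification of the bounded, isomorphism-invariant set of rich cliques $\mathcal{C}^*$ goes through unchanged, so I may again afford to try all $\vert\mathcal{B}\vert^{\vert\mathcal{C}^*\vert}$ assignments $f\colon \mathcal{C}^*\to \mathcal{B}$ of rich cliques to the branching nodes on the circle, which is FPT-many since both $\vert\mathcal{C}^*\vert$ and $\vert\mathcal{B}\vert$ are bounded by functions of $\vert U\vert$.

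The genuinely new feature is that $G$ need no longer be chordal: a proper $U$-graph may contain a hole formed by subgraphs that wrap all the way around the circle, which never happens for chordal inputs. So for each assignment $f$ I would first test whether there is a proper $U$-representation in which some point of the circle is left uncovered; this is exactly the situation handled by the circle cases of Procedure~\ref{proc:givenCliquesRecog}, where cutting a suitable circle edge $e$ yields a tree $T=U-e$ together with a chordal graph that is tested for being a proper $T$-graph via~\cite{chaplick2020recognizing}. If no such ``gapped'' representation exists for any assignment, then every valid representation must wrap around the circle, and this remaining case is treated separately.

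For the wrap-around case, the subgraph of $G$ induced on the branching cliques lying on the circle together with the at most $\vert\mathcal{B}\vert$ components placed between consecutive branching nodes must form a proper circular-arc graph in which each branching clique is realised as a Helly clique in a fixed cyclic order. Since the number of branching cliques is bounded by $\vert U\vert$, I would enumerate their boundedly many cyclic orders and, for each, test that the induced circle subgraph admits a proper circular-arc representation in which exactly these prescribed cliques are Helly and appear in that order, using the linear-time proper circular-arc recognition of~\cite{DBLP:journals/siamcomp/DengHH96}. Finally I would verify the gluing: each tree part hanging off a branching node must be a proper $T$-graph with the prescribed attachment into its branching clique, again via~\cite{chaplick2020recognizing}, and each component between two consecutive branching cliques must induce a proper interval graph so that it fits on a single circle edge, exactly as in Step~$v$ of Procedure~\ref{proc:givenCliquesRecog}.

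The main obstacle is precisely this wrap-around step, because deciding whether a circular-arc graph has a representation with a prescribed set of cliques forced to be Helly is NP-hard in general by Lemma~\ref{lem:hardness}. The point is that here the prescribed set is exactly the set of branching cliques, whose size is bounded by the parameter $\vert U\vert$ thanks to Lemma~\ref{lem:richBranchBczTpartsCliques}; fixing their cyclic order (one of boundedly many choices) removes the combinatorial freedom that drives the hardness reduction, so each resulting circular-arc test is polynomial and the whole procedure stays in FPT-time. Assembling the pieces, the total running time is the polynomial clique-listing step, multiplied by the FPT-many assignments and cyclic orders, multiplied by the polynomial per-assignment recognition tests, which is FPT parameterized by $\vert U\vert$.
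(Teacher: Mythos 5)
Your overall skeleton --- list all maximal cliques with the algorithm of \cite{DBLP:journals/siamcomp/TsukiyamaIAS77}, extract the bounded, isomorphism-invariant set of rich cliques via Lemma~\ref{lem:richBranchBczTpartsCliques} (which indeed assumes only polynomially many maximal cliques, not chordality), and try all FPT-many assignments $f\colon\mathcal{C}^*\to\mathcal{B}$ --- is exactly the paper's proof. The paper then simply runs Procedure~\ref{proc:givenCliquesRecog} on each assignment and accepts if some assignment accepts; nothing more is needed, because Steps~$iv$ and~$v$ of Procedure~\ref{proc:givenCliquesRecog} already cover representations that wrap all the way around the circle: they cut the circle at the edge $e$ between consecutive branching cliques $C_j$, $C_{j+1}$ of minimal intersection, delete the vertices of $C_j \cap C_{j+1}$ (precisely the vertices whose subgraphs span $e$, so in particular every hole loses a vertex there), and test the remainder as a proper $Y$-graph for the tree $Y = U - e$, arguing that the deleted vertices can always be re-inserted across $e$. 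So no separate ``wrap-around'' subroutine is required, and your reading that the circle cases of Procedure~\ref{proc:givenCliquesRecog} only handle representations with an uncovered point of the circle is a misreading.

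The genuine gap is in the machinery you add for the wrap-around case. You propose to test whether the circle part of $G$ admits a proper circular-arc representation in which a prescribed set of cliques is Helly and appears in a prescribed cyclic order, and you attribute this test to the linear-time proper circular-arc recognition of~\cite{DBLP:journals/siamcomp/DengHH96}. That algorithm does not accept prescribed Helly cliques or a prescribed cyclic order; it decides plain proper circular-arc membership only. Your claim that bounding the number of prescribed cliques and fixing their cyclic order ``removes the combinatorial freedom that drives the hardness reduction'' is an assertion, not an argument: the fact that the reduction in Lemma~\ref{lem:hardness} uses linearly many prescribed cliques does not show that the bounded-cardinality version is polynomial-time solvable, and you supply no algorithm for it. (The paper does solve a related ordering question in the general, exponentially-many-cliques setting --- Lemma~\ref{step7inFPT} checks a circular-ones ordering using PC-trees --- but that is a weaker test, and the paper explicitly refrains from forcing properness or Helly-ness at that stage.) As written, the wrap-around branch of your procedure rests on an unproven polynomial-time subroutine; either justify that subroutine, or, more simply, observe that Procedure~\ref{proc:givenCliquesRecog} needs no modification at all beyond replacing the linear-time chordal clique listing with the polynomial-time listing of \cite{DBLP:journals/siamcomp/TsukiyamaIAS77}.
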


	\begin{proof}
		We first list all maximal cliques of our input graphs in polynomial time \cite{DBLP:journals/siamcomp/TsukiyamaIAS77}. We identify the rich cliques among them efficiently and their numbers for proper $U$-graphs are bounded by $\vert U \vert$ by Lemma~\ref{lem:richBranchBczTpartsCliques}. Thus, if this is not the case, one can correctly return that a given graph is not a proper $U$-graph. Now, since both the number of rich cliques and the number of branching nodes of the circle of $U$ is bounded by $\vert U \vert$, the number of all possible assignments of rich cliques to the branching nodes of the circle of $U$ is bounded by a function of $U$. For each such assignment, we use Procedure~\ref{proc:givenCliquesRecog} and if at least one assignment returns that the input graph is a proper $U$-graph, we return its output. Otherwise, all assignments fail and the input graph has no proper $U$-representation certifying that it is not a proper $U$-graph. Since Procedure~\ref{proc:givenCliquesRecog} works correctly in \emph{FPT}-time and the number of all assignments is bounded by a function of $U$, one can recognize the input graphs with polynomially many maximal cliques in \emph{FPT}-time parameterized by $\vert U \vert$.
	\end{proof}
	
	Since every chordal graph is a $T$-graph for some tree $T$~\cite{chordalityInters}, we give the following.
	
	\begin{conj}\label{con:UtoT-graph2}
		
		If $G$ is a connected chordal $U$-graph for a fixed unicyclic graph $U$, then $G$ is a $T$-graph for some tree $T$ which can be obtained from $U$ in linear time.		
	\end{conj}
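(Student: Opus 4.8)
The plan is to start from an arbitrary $U$-representation of $G$ and to show that chordality lets us \emph{cut} the circle of $U$ at one edge, turning $U$ into a tree and the representation into a $T$-representation. Fix a subdivision $U'$ of $U$ together with a $U'$-representation $\{R_v : v \in V(G)\}$, which exists because $G$ is a $U$-graph, and let $Z$ be the circle of $U'$. For any edge $e$ of the circle of $U$ the graph $T := U - e$ is a tree, since $U$ is unicyclic; moreover, if $e'$ is an edge of $Z$ lying over $e$ and \emph{no} $R_v$ contains $e'$ in its interior, then deleting $e'$ keeps every $R_v$ connected (each $R_v \cap Z$ is an arc, hence an interval of $Z$ not broken by $e'$, while the part of $R_v$ off the circle is untouched). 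In that case $\{R_v\}$ is a $(U'-e')$-representation and $U'-e'$ is a subdivision of $T$, so $G$ is a $T$-graph. Thus the whole statement reduces to producing a representation and a circle-edge $e'$ that is not interior to any subgraph.

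The heart of the argument, and the step I expect to be the main obstacle, is to guarantee such a \emph{gap edge} using the assumption that $G$ has no hole. Consider the arcs $A_v := R_v \cap Z$. If some edge of $Z$ is covered by no arc we are immediately done, so assume the arcs cover $Z$. Take a minimal subfamily of arcs that still covers $Z$ and whose consecutive members overlap; the corresponding vertices form a cyclic sequence in which consecutive arcs intersect on $Z$. If this sequence has length at least $4$ and no two non-consecutive members intersect, it is an induced cycle, i.e.\ a hole, contradicting chordality. The difficulty is that two of these subgraphs may intersect \emph{off} the circle, supplying a chord and thereby destroying the hole; such subgraphs necessarily extend into the trees attached to the circle. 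My plan is to argue that any subgraph contributing only such off-circle chords can be rerouted into its incident branch (shrinking its arc on $Z$) without changing $G$, which opens a genuine gap on the circle. The remaining degenerate case is a minimal cover of length exactly $3$, a ``$K_3$ wrapped around $Z$''; here no hole arises, but the three mutually adjacent subgraphs can instead be made to share a single branching node, again freeing an edge of the circle.

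Once a gap edge $e'$ over a circle-edge $e$ of $U$ is secured, the reduction of the first paragraph yields a $T$-representation of $G$ with $T = U - e$, proving that $G$ is a $T$-graph obtained from $U$ by deleting a single edge of its unique cycle. For the \emph{linear-time} claim, I would work directly from a clique tree of the chordal graph $G$ rather than from an abstract representation: the circle of $U$ corresponds to a cyclic ordering of the maximal cliques that must sit on it, and locating two consecutive such cliques whose interface can be omitted (the combinatorial image of the gap) can be done in linear time while the clique tree is built, which fixes the edge $e$ to delete.

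The main obstacle, and the reason the statement is only conjectured, is the off-circle interaction: a rigorous proof must rule out the possibility that every $U$-representation of some chordal $G$ is forced to cover the entire circle \emph{irreducibly}, with each potential gap blocked by an off-circle chord that cannot be rerouted. Establishing that chordality always permits such a rerouting---equivalently, that a chordal $U$-graph never needs the full cyclic structure of $U$---is precisely the content that resists a short argument.
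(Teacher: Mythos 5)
There is nothing to compare your attempt against: the paper does \emph{not} prove this statement. It appears there only as a conjecture, stated without any argument immediately after the observation that every chordal graph is a $T$-graph for some tree. Your proposal, as your own final paragraph concedes, is not a proof either, so the honest verdict is that it contains a genuine gap --- in fact exactly the gap that presumably led the authors to leave the statement as a conjecture. Concretely, the unproven step is the rerouting claim: that any subgraph of the minimal cover that meets other members only \emph{off} the circle ``can be rerouted into its incident branch (shrinking its arc on $Z$) without changing $G$.'' Rerouting $R_v$ must preserve every intersection $R_v \cap R_w$ for \emph{all} $w \in V(G)$, not just for the members of the chosen minimal cover, and it must not create any new intersection, since that would add an edge to the represented graph. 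Both constraints are global: several subgraphs may need to be rerouted simultaneously, their reroutings can conflict with one another, and the branch into which you push $R_v$ already carries other subgraphs that $R_v$ must avoid or keep meeting exactly as before. The same objection applies to your degenerate case: converting three pairwise-adjacent subgraphs that wrap the circle (a non-Helly triangle) into three subgraphs sharing a single branching node is a representation modification that you assert but do not justify in the presence of the remaining vertices of $G$.

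What your write-up does achieve is a clean reduction: \emph{if} one can always find (or create) a circle edge of some $U$-representation that lies in the interior of no subgraph, then $G$ is a $(U-e)$-graph, and chordality via the minimal-cover/hole argument is the right tool for attacking the covered case. But the two surgery lemmas that would close the argument --- off-circle chords can be eliminated, and a wrapped $K_3$ can be made Helly --- are missing, and the linear-time claim is only gestured at through an unspecified clique-tree computation. Until those lemmas are proved (or a counterexample rules them out), the statement remains exactly what the paper calls it: a conjecture.
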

	
	We now focus on the recognition problem for proper Helly $U$-graphs, i.e. proper $U$-graphs which have at least one proper $U$-representation such that all cliques are Helly. It has been shown that every (proper) Helly $H$-graph $G$ has at most $\vert H \vert + \vert E(H) \vert \cdot \vert G \vert$ maximal cliques which is linear on the order of the input graph $G$ \cite{zemanWG}. Recall that all maximal cliques of graphs with polynomially many maximal cliques can be listed in polynomial time \cite{DBLP:journals/siamcomp/TsukiyamaIAS77}. Under the same assumptions, we modify Procedure~\ref{proc:givenCliquesRecog} to decide whether a given graph $G$ is a proper Helly $U$-graphs as follows:
	
	\begin{enumerate}
		\item Before Step 1, we start listing the maximal cliques of $G$ using the algorithm of \cite{DBLP:journals/siamcomp/TsukiyamaIAS77}. If the listing procedure lists at least $\vert U \vert + \vert E(U) \vert \cdot \vert G \vert$ maximal cliques, we terminate the enumeration algorithm and return that $G$ is not a proper Helly $U$-graph.
		
		\item In Step 2, we only consider the assignments $f$ such that $G[f(\mathcal{C^*})]$ is a Helly circular-arc graph tested using the algorithm of \cite{DBLP:conf/cocoon/LinS06}.
		
		\item We return either $G$ is a proper Helly $U$-graph or not, analogously to Procedure~\ref{proc:givenCliquesRecog}.
	\end{enumerate}

	\begin{reptheorem}{easyIso}\label{theo:HellyRecog}
		The above modifications to Procedure~\ref{proc:givenCliquesRecog} result in correctly recognizing proper Helly $U$-graphs in \emph{FPT}-time.
	\end{reptheorem}

	\begin{proof}
		The first modification guarantees that the given graph $G$ has linearly many maximal cliques on the order of $G$ if it is a proper Helly $U$-graph  \cite{zemanWG}. For a tree $T$, every $T$-representation of a $T$-graph satisfies the Helly property \cite{zemanWG}. Therefore, only the cliques of a proper Helly $U$-graph placed on the cycle of $U$ can be non-Helly. Thus, the correctness follows from Theorem~\ref{lem:polyManyRecogIso}, and the first and the second modifications. Also the procedure takes \emph{FPT}-time again by Theorem~\ref{lem:polyManyRecogIso}, and since the tests given in the first and the second modifications can be checked in polynomial time \cite{DBLP:conf/cocoon/LinS06,DBLP:journals/siamcomp/TsukiyamaIAS77}.
	\end{proof}

	\section{Recognizing proper $\boldsymbol{U}$-graphs in general}
	\label{sec:expoCliques}

	Here we assume that the input graphs have exponentially many maximal cliques since otherwise we can use Procedure~\ref{proc:givenCliquesRecog} by Theorem~\ref{lem:polyManyRecogIso}. Therefore, we assume that the input graph is not a proper Helly $U$-graph and not a chordal graph. Given a graph $G$, we make use of the following observations: 
	
	\begin{itemize}
		\item If $G$ is a proper $U$-graph but not a proper Helly $U$-graph, then every proper $U$-representation of $G$ contains at least one non-Helly clique $C$. Since $T$-graphs satisfy the Helly property for a tree $T$, $C$ (and all such cliques of $G$) has to be placed on the circle in every $U$-representation if $G$ is a (proper) $U$-graph. Thus, we aim to find the cliques which are always non-Helly since we are not given a $U$-representation and being a non-Helly clique may depend on a particular $U$-representation.
		
		\item If $G$ is not chordal, then $G$ contains at least one induced cycle $L$ of length at least $4$ which is referred as a \emph{``hole''}. Since $T$-graphs are chordal, $L$ (and all holes of $G$) has to be placed on the circle in every $U$-representation if $G$ is a (proper) $U$-graph. Therefore, our aim is to identify the holes. 
		
		\item The union of all vertices placed completely outside of the circle of $U$ in a proper $U$-representation forms a (possibly disconnected) chordal graph where a vertex $v$ being placed ``completely'' on, and analogously, outside of the subgraph~$S$ of $U$ means that $v$ is represented with some subgraph of (a subdivision of)~$S$, and $U \setminus S$ giving such a proper $U$-representation, respectively.
	\end{itemize}
	
	We first give the following lemma which provides $i)$ an upper bound on the length of every hole $L$ of a proper $U$-graph which is placed completely on the circle of $U$, and $ii)$ an upper bound on the length of any hole $L$.
	
	\begin{replemma}{inducedcycle<=6andU}\label{lem:inducedcycle<=6andU}
		For a fixed unicyclic graph $U$, let $G$ be a proper $U$-graph which is not a proper Helly $U$-graph nor a chordal graph. In a proper $U$-representation of $G$, if a hole $L \subseteq G$ is placed completely on the circle of $U$, then the length of $L$ is at most $6$. Moreover, the length of any hole $L \subseteq G$ is linearly bounded in $\vert U \vert$. 
	\end{replemma}

	\begin{proof}
		Note that since $G$ is not a proper Helly $U$-graph nor a chordal graph, there exist at least one non-Helly clique and at least one hole of $G$ which have to be placed (completely or not) on the circle of $U$ in every proper $U$-representation. 
		
		We first prove that the length of every hole $L \subseteq G$ placed completely on the circle of $U$ is at most $6$. We assume that there is no vertex of $G$ covering the whole circle of $U$ since otherwise such $L$ can not be properly represented. Let $C$ denote a non-Helly clique of $G$ in this representation. Since the considered representation is proper and we aim to determine the maximum length of such a cycle $L$, each vertex of $L$ must have the smallest number of neighbors from $C$. Note that if a vertex of $L$ represented completely on the circle is adjacent to one vertex of $C$, then it is contained by that vertex of $C$ resulting in a non-proper representation. Thus, we fix every vertex of $L$ to have exactly $2$ neighbors from $C$. Observe that at most $2$ vertices of $L$ can have the same pair of neighbors from $C$ due to the properness and since $L$ is a hole. Also, since $L$ is a hole, two vertices of $L$ can have at most one common neighbor from $L$. By this construction maximizing the length of $L$, if $C$ is of size $2$, then the length of $L$ is at most $4$, and otherwise there is at least one non-Helly triangle of size $3$, and therefore the length of $L$ is at most $6$. 
		
		We now prove that the length of any hole $L \subseteq G$ is linearly bounded in $\vert U \vert$. Since $L$ is induced, no three vertices of $L$ are pairwise adjacent. Then, at most $2$ vertices of $L$ can go through the same branch on the circle of $U$ and these vertices are consecutive in $L$. By the previous arguments, the number of vertices of $L$ not going through a branch of $U$ is bounded. Thus, to obtain the maximum length hole $L$, 2 consecutive vertices of $L$ must go through a branch together and none of them should appear in other branches. Therefore, the length of any hole is bounded by $2 \vert \mathcal{B} \vert$ where $\mathcal{B} \leq \vert U \vert$ denotes the number of branching nodes of $U$ on the circle. 
	\end{proof}

	Note that in proper Helly $U$-graphs, the length of holes is not bounded since they do not contain non-Helly cliques. However, they can be recognized in \emph{FPT}-time without identifying their holes by Theorem~\ref{theo:givenCliquesRecog}. On the other hand, one can identify all holes of proper $U$-graphs (satisfying Helly property or not) by brute-force in \emph{XP}-time by testing all $d$-tuples of vertices for $4 \leq d \leq \vert U \vert$ by Lemma~\ref{lem:inducedcycle<=6andU}. Before we show how to compute the holes in \emph{FPT}-time, we give the following lemmas which help us identify the non-Helly cliques of the input graph $G$ given the holes, i.e. the cliques which are always non-Helly if $G$ is a proper $U$-graph.

	\begin{replemma}{circleVertices}\label{lem:circleVertices}
		For a fixed unicyclic graph $U$, let $G$ be a proper $U$-graph and $L$ be a hole of $G$. Every vertex $v$ of $G$ adjacent to at least $3$ vertices of $L$ must be placed (not necessarily completely) on the circle of $U$ in every (proper) $U$-representation of $G$.
	\end{replemma}
	
	\begin{proof}
		Since at most $2$ vertices of $L$ can go through the same branch on the circle of $U$ and every vertex of a $U$-graph is represented by a connected subgraph of $U$, these at least $3$ vertices of $L$ are consecutive in $L$ and any vertex $v$ of $G$ adjacent to at least three vertices of $L$ must contain two intersections of such vertices which are either on the circle or on the edges of $U$ other than the circle but incident to two consecutive vertices of $U$ on the circle. Then, the vertex $v$ must be placed on the circle but not necessarily completely meaning that a partial representation on the circle is guaranteed.
	\end{proof}

	\begin{replemma}{nonHellytogether}\label{lem:nonHellytogether}
		For a fixed unicyclic graph $U$, let $G$ be a proper $U$-graph which is not chordal. If $G$ is not a proper Helly $U$-graph, one can identify all vertices contained in non-Helly cliques of $G$, i.e. the cliques which are non-Helly in every proper $U$-representation of $G$, in polynomial time given the set $\mathcal{L}$ of all vertices contained in the holes of $G$.
	\end{replemma}

	\begin{proof}
		Since $G$ is not a proper Helly $U$-graph, there is at least one non-Helly clique $C$ which has to be placed on the circle in every proper $U$-representation of $G$. We assume that no vertex $v$ of $C$ covers the whole circle of $U$ since otherwise $C$ is a Helly clique as $v$ intersect also the intersection of its other two vertices. By the definition of Helly property, for every non-Helly clique of size at least $4$, we have a clique of size $3$ (which is a subclique of that non-Helly clique) covering the whole circle. Thus, with the upcoming method, we identify all vertices of non-Helly cliques, i.e. the cliques which are non-Helly in every proper $U$-representation of $G$, even though we do not identify all non-Helly cliques among them which can be exponentially many.

		Since $G$ is not chordal, all the following hold:
		
		\begin{itemize}
			\item $G$ contains a hole $L$.
			\item Every hole $L$ of $G$ covers the circle of $U$.
			\item Every hole $L$ of $G$ has a fixed cyclic ordering (up to reversal) around the circle of $U$ since $L$ is an induced cycle of length at least $4$.
			\item The union of the neighborhoods of the vertices of every hole $L$ of $G$ is a superset of all vertices appearing in every non-Helly clique $C$ of $G$.
		\end{itemize}
		
		Since $\vert L \vert \geq 4$, at least one vertex of $C$ has at least $2$ neighbors consecutive in $L$. Then, for at least one hole $L \in G$, at least two vertices of $C$ have incomparable (by inclusion) neighborhoods with each other in $L$ since there is a cyclic order on $L$ and all vertices of $C$ do not meet at a common point. This holds since otherwise, $C$ can be represented as a Helly clique. By checking the neighborhoods of every clique of size $3$ which union of non-empty neighborhoods on $\mathcal{L}$ equals $\mathcal{L}$ in polynomial time and comparing their neighborhoods in $\mathcal{L}$, one can identify every such clique $C$ given $\mathcal{L}$ appearing as a non-Helly clique in every proper $U$-representation of $G$.
	\end{proof}

	\begin{lemma}\label{lem:dominatingEdgesOrT-graph}
		For a fixed unicyclic graph $U$, let $G$ be a proper $U$-graph which is not chordal. If $G$ contains a dominating edge covering the circle of $U$, i.e. an edge which endvertices covers the whole circle, one can identify all dominating edges in polynomial time given the set $\mathcal{L}$ of all vertices contained in the holes of $G$.
	\end{lemma}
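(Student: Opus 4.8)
The plan is to give an explicit, polynomially checkable combinatorial description of the dominating edges in terms of how the endvertices see the hole vertices $\mathcal{L}$, and then to argue that a single scan over the edges against this description recovers exactly the dominating edges. Concretely, writing $N[\cdot]$ for the closed neighbourhood, I would form
\[
 D \;=\; \bigl\{\, \{u,v\}\in E(G)\ :\ \mathcal{L}\subseteq N[u]\cup N[v] \,\bigr\},
\]
which is computable in polynomial time by testing one adjacency for each of the $O(\vert E(G)\vert)$ edges against each of the at most $\vert\mathcal{L}\vert$ hole vertices. The whole proof then reduces to showing that $D$ is precisely the set of dominating edges covering the circle (the hypothesis that such an edge exists just guarantees $D\neq\emptyset$; otherwise there is nothing to report).

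The first and easy direction is necessity. As recalled in the proof of Lemma~\ref{lem:nonHellytogether}, since $G$ is not chordal it has a hole, every hole covers the circle of $U$, and in fact the arcs of the vertices of a single hole already cover \emph{every} point of the circle, because consecutive vertices overlap and the cycle wraps around. Now fix a proper $U$-representation witnessing a dominating edge $\{u,v\}$, so the arcs $A_u$ and $A_v$ together cover the whole circle. Every hole vertex is placed on the circle, hence its arc meets $A_u\cup A_v$ and the vertex is therefore equal or adjacent to $u$ or $v$. Thus $\mathcal{L}\subseteq N[u]\cup N[v]$ and $\{u,v\}\in D$.

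The substance is the converse: every edge in $D$ can be realized with its endvertices covering the circle. Starting from an arbitrary proper $U$-representation, I would normalize it gap by gap. If $A_u\cup A_v$ is already the whole circle we are done; otherwise its complement is a union of open gaps. The key observation is that any vertex whose arc lies strictly inside a gap is disjoint from both $A_u$ and $A_v$, hence adjacent to neither $u$ nor $v$; since $\mathcal{L}\subseteq N[u]\cup N[v]$, such a vertex is \emph{not} a hole vertex. Meanwhile a hole still covers the gap, and as no hole vertex lies strictly inside it, the hole arcs crossing the gap must protrude past its endpoints into the covered region (using that at most two arcs of an induced cycle pass through any point, a third would create a chord, as in Lemma~\ref{lem:inducedcycle<=6andU}). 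I would then extend $A_u$ (or $A_v$) across the gap up to the opposite arc, simultaneously pushing the finitely many non-hole vertices that sat inside the gap off the circle into an incident branch, routing their attachments along the hole arcs that already reach out of the gap; the protruding hole arcs are never contained, and the relocated vertices keep exactly their former neighbours.

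The main obstacle is precisely making this normalization rigorous and global: I must verify that every gap-interior vertex can be moved off the circle into a branch without creating or destroying any adjacency and without violating properness, and that performing this for all gaps yields a single valid proper $U$-representation in which $A_u\cup A_v$ is the entire circle. The plan for this step is to show that a gap-interior vertex $w$ meets only arcs that themselves protrude out of the gap on a determined side, so that the attachments of $w$ can be consistently rerouted into the corresponding branch, and to place $w$ so that it protrudes beyond those spanning arcs inside the branch, keeping the representation proper. Once all gaps are eliminated, $\{u,v\}$ covers the circle, so $\{u,v\}$ is a dominating edge; this yields $D$ equals the set of dominating edges, and since $D$ is computed in polynomial time from $\mathcal{L}$, the lemma follows.
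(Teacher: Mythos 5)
Your proof stands or falls with the claimed equality between your set $D=\{\{u,v\}\in E(G):\mathcal{L}\subseteq N[u]\cup N[v]\}$ and the set of dominating edges, and that equality is false — the problem is not merely that your normalization step is unfinished, but that no normalization can close it. The necessity direction is fine. The converse fails because covering the circle imposes a condition on \emph{pairs} of hole vertices that are consecutive in a hole, not on hole vertices one at a time: if $x,y$ are consecutive vertices of a hole $L$, their subgraphs must share a point of the circle (they intersect somewhere, and if they met only inside a pendant branch, both would contain that branch's attachment node, which lies on the circle). If $A_u\cup A_v$ covers the circle, that shared point lies in $A_u$ or in $A_v$, which forces a \emph{single} endpoint of the edge to be adjacent to both $x$ and $y$. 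Your condition only guarantees that each of $x,y$ is seen by $u$ \emph{or} $v$ separately, which is strictly weaker.

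Concretely, take a $5$-hole $x_1x_2x_3x_4x_5$, let $u\sim x_1,x_2,x_3$, $v\sim x_3,x_4,x_5$, $u\sim v$, and add two adjacent vertices $w_1,w_2$ adjacent to everything. This graph is a proper $U$-graph for $U$ a $5$-cycle with five pendant edges (place the five hole arcs around the circle, each escaping containment into its own pendant branch; let $A_u$, $A_v$ span their three hole neighbours, overlapping over $A_{x_3}$; let $A_{w_1},A_{w_2}$ be two long arcs jointly covering the circle), so $\{w_1,w_2\}$ is a dominating edge and the lemma's hypothesis holds. Here $\mathcal{L}=\{x_1,\dots,x_5,u,v\}\subseteq N[u]\cup N[v]$, so $\{u,v\}\in D$; yet for the consecutive pair $(x_5,x_1)$ neither $u$ nor $v$ is adjacent to both, so by the argument above $\{u,v\}$ is dominating in no proper $U$-representation. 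Your gap-by-gap repair breaks exactly here: the gap of $A_u\cup A_v$ contains the region where $A_{x_5}$ and $A_{x_1}$ overlap, and sweeping $A_u$ across it creates the forbidden intersection with $A_{x_5}$, while sweeping $A_v$ across it creates the forbidden intersection with $A_{x_1}$ — adjacency of every gap-crossing arc to $u$ \emph{or} $v$ does not let either single arc absorb the whole gap. This is also why the paper's own proof proceeds differently: it tests the two endpoints' neighborhoods \emph{inside a hole} (incomparability by inclusion), with a separate degenerate case in which one endpoint's arc dominates the circle and the other endpoint is itself a hole vertex, already listed in $\mathcal{L}$.
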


	\begin{proof}
		
		If none of two vertices $u$ and $v$ of such a dominating edge $e$ covers the whole circle of $U$, those vertices have incomparable (by inclusion) neighborhoods in at least one hole and can be identified in polynomial time as there are polynomially many edges. Otherwise, the vertex $u$ dominates the whole circle but may not cover it. Then, the neigborhood of $v$ in the circle is a subset of the neigborhood of $u$ in the circle but in this case, $v$ is actually a vertex of some hole of $G$, thus already given in $\mathcal{L}$. 
	\end{proof}

	We give the following procedure to obtain the set $\mathcal{L}$ of \emph{hole vertices} consisting of all vertices contained in the holes of an input graph $G$, and the possibly empty set $\mathcal{R}$ of \emph{revealed vertices} which are not in $\mathcal{L}$ but also must be represented (not necessarily completely) on the circle of $U$ if $G$ is a proper $U$-graph.
	
	\begin{proc}\label{proc:allInduced} \rm
		
		Given a connected graph $G$ which is not chordal, we identify the set $\mathcal{L}$ of hole vertices and (if exists) a set of revealed vertices $\mathcal{R}$ of $G \setminus \mathcal{L}$ as follows:
		
		\begin{enumerate}
			\item Let $\mathcal{L}$ and $\mathcal{R}$ denote the initially empty sets of hole vertices and revealed vertices of $G$, respectively. 
			\item For each vertex $u \in G$ and its each pair of neighbors $v,w$:
			\begin{itemize}
				\item If $v$ and $w$ are not adjacent, identify the shortest path $P$ between $v$ and $w$ in $G - (N[u] \setminus \{v,w\})$ using Dijkstra's algorithm \cite{DBLP:journals/nm/Dijkstra59} where $N[u]$ denotes the closed neighborhood of $v$. If $P \neq \emptyset$, $\mathcal{L} \leftarrow \mathcal{L} \cup V(P) \cup \{u,v,w\}$.
			\end{itemize}
			
			\item For each vertex $u \in G \setminus \mathcal{L}$:	
			\begin{itemize}
				\item If $u$ has at least three neighbors in $\mathcal{L}$ forming an induced path in $G$, then $\mathcal{R} \leftarrow \mathcal{R} \cup u$. This can be checked by identifying whether there exist three distinct vertices $v, w, x \in (N[u] \setminus u) \cap \mathcal{L}$ such that $\{v,w,x\}$ forms an induced path of length $2$ in $G$.
			\end{itemize} 
			\item Return $\mathcal{L}$ as the set of hole vertices and $\mathcal{R}$ as the set of revealed vertices.	
		\end{enumerate}
	\end{proc}
	
	\begin{lemma}\label{lem:revealerSet}
		Procedure~\ref{proc:allInduced} correctly identifies the set $\mathcal{L}$ of hole vertices and the set $\mathcal{R}$ of revealed vertices in polynomial time.
	\end{lemma}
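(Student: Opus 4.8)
The plan is to split the statement into three independent parts: that step~2 computes exactly the set of hole vertices, that every vertex placed into $\mathcal{R}$ by step~3 is genuinely forced onto the circle, and that the whole procedure runs in polynomial time. The correctness of $\mathcal{L}$ is purely combinatorial, whereas the correctness of $\mathcal{R}$ is where the representation-theoretic content lives, so I would treat the latter as the crux.

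For the soundness of $\mathcal{L}$, I would fix a triple $(u,v,w)$ processed in step~2 with $v,w \in N(u)$, $v \not\sim w$, and with a nonempty shortest $v$--$w$ path $P$ in $G - (N[u]\setminus\{v,w\})$, and argue that $\{u\}\cup V(P)$ induces a hole. Since $P$ is a shortest path in an induced subgraph it is itself chordless, so $P$ has no chords in $G$; the internal vertices of $P$ avoid $N[u]$ by construction, so none of them is adjacent to $u$; and $v\not\sim w$ by assumption. Hence the cycle $u,v,\dots,w,u$ has no chord, and because $v,w$ are non-adjacent it has length at least~$4$, i.e.\ it is a hole. Consequently every vertex added in step~2 lies on a hole.

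For completeness I would show that \emph{every} hole vertex is captured, where the only subtlety is that the shortest path $P$ might trace a \emph{different} hole than the one a given vertex lies on. This is resolved by centering the search at the vertex itself: given a hole vertex $x$ lying on a hole $H$, step~2 eventually runs with $u=x$ and with $v,w$ the two neighbours of $x$ along $H$. Since $H$ is induced of length at least~$4$ we have $v\not\sim w$, and the remaining $v$--$w$ path of $H$ lies in $G - (N[x]\setminus\{v,w\})$, because any edge from $x$ to an internal vertex of that path would be a chord of $H$. Thus a nonempty $P$ exists and $x\in\{u,v,w\}$ is added to $\mathcal{L}$. Together with soundness this gives $\mathcal{L}$ exactly equal to the set of hole vertices.

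The main obstacle is the correctness of $\mathcal{R}$, which must appeal to how a proper $U$-representation interacts with the circle rather than to purely combinatorial facts. Suppose step~3 adds $u\notin\mathcal{L}$ because it has neighbours $v,w,x\in\mathcal{L}$ inducing a path $v\!-\!w\!-\!x$ with $v\not\sim x$. All of $v,w,x$ lie on holes, hence, since $T$-graphs are chordal, they are placed on the circle of $U$ in every proper $U$-representation. I would then argue, exactly as in Lemma~\ref{lem:circleVertices}, that $u$ cannot be placed completely off the circle: if $u$ were contained in one of the trees of $U$ minus the circle, attached to the circle at a single node $n$, then each of $v$ and $x$ (being connected and meeting both the circle and $u$) would have to pass through $n$, forcing $n\in v\cap x$ and contradicting $v\not\sim x$. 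Hence $u$ must be represented at least partially on the circle, so it is correctly declared revealed. The running time is then immediate: step~2 iterates over $O(n^3)$ triples $(u,v,w)$ and runs one shortest-path computation each, and step~3 inspects $O(n^3)$ candidate neighbour triples per vertex; both are polynomial, yielding the claimed bound.
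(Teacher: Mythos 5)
Your proof is correct and follows the same overall decomposition as the paper's: correctness of $\mathcal{L}$, correctness of $\mathcal{R}$, then the running-time count. The difference is in the level of detail and in how $\mathcal{R}$ is handled. For $\mathcal{L}$, the paper disposes of correctness in one sentence (citing Dijkstra's algorithm and the greedy neighborhood check), whereas you supply the actual soundness argument (the cycle $u,v,\dots,w,u$ built from a shortest path in $G-(N[u]\setminus\{v,w\})$ is chordless of length at least $4$) and the completeness argument (every hole vertex $x$ is caught when $u=x$ and $v,w$ are its neighbours along its hole); both are correct and are exactly the facts the paper leaves implicit. For $\mathcal{R}$, the paper simply invokes Lemma~\ref{lem:circleVertices}; strictly speaking that lemma is stated for three neighbours on a \emph{single} hole $L$, while Procedure~3 only guarantees $v,w,x\in\mathcal{L}$, i.e.\ possibly on different holes. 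Your inline argument (if $u$ sat entirely in a tree hanging off the circle at a node $n$, then $v$ and $x$, each meeting both $u$ and the circle, would both contain $n$, contradicting $v\not\sim x$) handles this more general situation directly, which is a genuine improvement over the citation. The one place where you assert rather than prove is that every individual hole vertex must meet the circle ("since $T$-graphs are chordal"); this is true but needs a short restriction-to-a-subtree argument, and it is asserted at the same level of rigor in the paper itself, so it is not a gap relative to the paper's standard.
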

	
	\begin{proof}
		The correctness and complexity of identifying the set $\mathcal{L}$ of hole vertices depends on the Dijkstra's algorithm \cite{DBLP:journals/nm/Dijkstra59}, and greedily checking the neighborhood of each vertex of $G$.
		We now prove that $\mathcal{R}$ is indeed a revealed set. It is clear that $\mathcal{R}$ can be computed in polynomial time. Since only the consecutive vertices of holes are connected by an edge, each vertex $u$ having at least three neighbors $v, w, x \in \mathcal{L}$ appearing as an induced path in $G$ has to be placed on the circle by Lemma~\ref{lem:circleVertices}. Also, if $u$ has more that three such neighbors, some triple among them still form an induced path of length $2$, and therefore, it is sufficient to check all three neighbors of $u$ in polynomial time. 
	\end{proof}
	
	Note that all holes found using Procedure~\ref{proc:allInduced} lie on the circle of $U$ in every proper $U$-representation of $G$ if $G$ is a proper $U$-graph since they are holes also in $G$ and the rest of $G$ is chordal. We now give Procedure~\ref{proc:generalRecog} to recognize general proper $U$-graphs.
	
	\begin{proc}\label{proc:generalRecog} \rm
		Given a connected graph $G$ on $n$ vertices and a fixed unicyclic graph $U$, we decide whether $G$ is a proper $U$-graph as follows:
		\begin{enumerate}
			\item If $G$ is chordal, then run Procedure~\ref{proc:givenCliquesRecog} and return its output.
			
			\item Check whether $G$ is a proper Helly $U$-graph using Procedure~\ref{proc:givenCliquesRecog} with the modifications proven in Theorem~\ref{theo:HellyRecog}. If it returns that $G$ is a proper Helly $U$-graph, return its output.
			
			\item Find the set $\mathcal{L}$ of hole vertices and the set $\mathcal{R}$ of revealed vertices using Procedure~\ref{proc:allInduced}.

			\item Identify the set $\mathcal{C}$ of all vertices contained in non-Helly cliques (i.e. the cliques which can not be represented as Helly due to their adjacencies in $\mathcal{L} \cup \mathcal{R}$) and dominating edges of $G$ using Lemma~\ref{lem:nonHellytogether} and Lemma~\ref{lem:dominatingEdgesOrT-graph} given the set $\mathcal{L} \cup \mathcal{R}$. If $G[\mathcal{L} \cup \mathcal{R} \cup \mathcal{C}]$ is not a circular-arc graph, return that $G$ is not a proper $U$-graph. \Comment{\emph{We do not check whether $G[\mathcal{L} \cup \mathcal{R} \cup \mathcal{C}]$ is a proper circular-arc graph since properness can be achieved on the branches of $U$.}}

			\item Find the set $\mathcal{X}$ of connected components of $G-(\mathcal{L} \cup \mathcal{R} \cup \mathcal{C})$. If $\mathcal{X}$ has more than $\vert U \vert$ components, return that $G$ is not a proper $U$-graph.
			
			\item Identify the upper attachments of each component of $\mathcal{X}$ in $G[\mathcal{L} \cup \mathcal{R} \cup \mathcal{C}]$ with respect to the fixed cyclic order of $G[\mathcal{L} \cup \mathcal{R}]$. If there is a connected component with at least $3$ upper attachments, return that $G$ is not a proper $U$-graph.

			\item Every connected component in $\mathcal{X}$ has at most $2$ upper attachments. Let $\mathcal{B}$ denote the branching nodes on the circle of $U$, $deg(\mathcal{B})$ be the sum of degrees of nodes in $\mathcal{B}$ and $\mathcal{F}$ be the family of all upper attachments. If there are more than $deg(\mathcal{B})$ such attachments in $\mathcal{F}$, return that $G$ is not a proper $U$-graph.

			\item \label{item:partition} To identify the Helly cliques placed on $\mathcal{B}$ (if $G$ is a proper $U$-graph), consider every partition of $\mathcal{X}$ containing at most $\vert U \vert$ components into subsets $\mathcal{X}_1, \dots, \mathcal{X}_{\vert \mathcal{B} \vert}$ such that:
			\begin{enumerate}
				\item For every pair $1 \leq i < j \leq \vert \mathcal{B} \vert$, it holds that $\mathcal{X}_i \cap \mathcal{X}_j = \emptyset$.

				\item For each $1 \leq i \leq \vert \mathcal{B} \vert$, $\mathcal{X}_i$ contains at most $deg(b_i) - 2$ connected components with exactly $1$ upper attachment where $deg(b_i)$ denotes the degree of $b_i$.
				
				\item Let $\mathcal{F}_1, \dots, \mathcal{F}_{\vert \mathcal{B} \vert}$ be the subfamilies of $\mathcal{F}$ such that for every $1 \leq i \leq \vert \mathcal{B} \vert$,
				$\mathcal{F}_i$ consists of exactly $1$ upper attachment of each component $X_j \in \mathcal{X}_i$ with respect to the fixed cyclic order of $G[\mathcal{L} \cup \mathcal{R}]$. 
				\Comment{There are at most $4^{\vert \mathcal{B} \vert}$ distinct choices of such subfamilies.}

				\item For each $b_i \in \mathcal{B}$, it holds that $\vert \mathcal{X}_i \vert = \vert \mathcal{F}_i \vert \leq deg(b_i)$.
				
				\item The union $C_{\mathcal{F}_i}$ of all vertices of $G$ contained in each $\mathcal{F}_i$ corresponds to a  clique in $G$.
				
				\item The circular-arc graph $G[\mathcal{L} \cup \mathcal{R} \cup \mathcal{C}]$ has a circular-ones property with $C_{\mathcal{F}_1}, \dots, C_{\mathcal{F}_{\vert \mathcal{B} \vert}}$ appearing in this order.

				\item For every pair $1 \leq i \neq j \leq \vert \mathcal{B} \vert$, there exists no $X_k \in \mathcal{X}_i$ such that any of the (at most $2$) upper attachments of $X_k$ has an intersection with $C_{\mathcal{F}_j}$ which is a strict superset of the intersection of the upper attachment of some $X_l \in \mathcal{X}_j$.
			\end{enumerate}

			\item \label{item:partSatisfied} For each partition satisfying all prescribed conditions and for each choice of subfamilies of $\mathcal{F}$, use Procedure~\ref{proc:givenCliquesRecog} from Step $(a)$ with $f(\mathcal{C}^*) = \{ C_{\mathcal{F}_1}, \dots, C_{\mathcal{F}_{\vert \mathcal{B} \vert}} \}$ and the connected components $\mathcal{X}$ of $G-f(\mathcal{C^*})$.
			\Comment{\emph{Here, we may additionally identify a set of maximal Helly-cliques containing the corresponding Helly-cliques by Lemma~\ref{lem:maximalHelly}.}}
			\begin{enumerate}
				\item If Procedure~\ref{proc:givenCliquesRecog} returns that $G$ is a proper $U$-graph, return its output. 
			\end{enumerate}
			\item Return that $G$ is not a proper $U$-graph.
			
		\end{enumerate}
		
	\end{proc}

	We first give the following lemmas regarding Steps \ref{item:partition} and \ref{item:partSatisfied} of Procedure~\ref{proc:generalRecog} which follow from the bound on the number of connected components. 
	
	\begin{lemma}\label{step7inFPT}
		The number of partitions of $\mathcal{X}$ into $\mathcal{X}_1, \dots, \mathcal{X}_{\vert \mathcal{B} \vert}$ computed in Step \ref{item:partition} of Procedure~\ref{proc:generalRecog} is bounded by a function of the parameter $\vert U \vert$, and all partitions satisfying the given conditions can be computed in \emph{FPT}-time.
	\end{lemma}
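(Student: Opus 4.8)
The plan is to reduce the statement to a counting argument resting on the two size bounds already enforced earlier in Procedure~\ref{proc:generalRecog}. First I would record that, by the rejection test in Step~5, the family $\mathcal{X}$ of connected components of $G-(\mathcal{L}\cup\mathcal{R}\cup\mathcal{C})$ that survives into Step~\ref{item:partition} satisfies $\vert\mathcal{X}\vert\le\vert U\vert$; otherwise the procedure has already answered. Likewise $\mathcal{B}$ is a set of nodes lying on the circle of $U$, so $\vert\mathcal{B}\vert\le\vert U\vert$. Since the blocks $\mathcal{X}_1,\dots,\mathcal{X}_{\vert\mathcal{B}\vert}$ are pairwise disjoint by condition~$(a)$ and cover $\mathcal{X}$, an ordered partition is exactly a map sending each component to one of the $\vert\mathcal{B}\vert$ indices, so the number of ordered partitions is at most $\vert\mathcal{B}\vert^{\vert\mathcal{X}\vert}\le\vert U\vert^{\vert U\vert}$, a function of the parameter alone. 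This already gives the first claim.

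For the second claim I would enumerate all at most $\vert U\vert^{\vert U\vert}$ ordered partitions and, for each, verify conditions~$(a)$--$(g)$. Conditions~$(a)$, $(b)$, and $(d)$ refer only to the blocks $\mathcal{X}_i$ and to the upper attachments of their components --- each component carrying at most $2$ upper attachments by Step~6 --- so they are directly decidable in polynomial time. Conditions~$(c)$, $(e)$, $(f)$, and $(g)$ are phrased through the subfamilies $\mathcal{F}_i$, where $\mathcal{F}_i$ picks exactly one upper attachment of each component of $\mathcal{X}_i$; as the $\mathcal{X}_i$ partition $\mathcal{X}$ and each component offers at most two upper attachments, there are at most $2^{\vert\mathcal{X}\vert}\le 2^{\vert U\vert}$ tuples $(\mathcal{F}_1,\dots,\mathcal{F}_{\vert\mathcal{B}\vert})$, consistent with the $4^{\vert\mathcal{B}\vert}$ bound noted in the comment. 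For each such tuple the clique test~$(e)$, the circular-ones test~$(f)$ on the fixed circular-arc graph $G[\mathcal{L}\cup\mathcal{R}\cup\mathcal{C}]$, and the attachment-inclusion test~$(g)$ are all polynomial in $n$. Multiplying the enumeration bound $\vert U\vert^{\vert U\vert}\cdot 2^{\vert U\vert}$ by the polynomial per-check cost yields total running time $f(\vert U\vert)\cdot\mathrm{poly}(n)$, i.e.\ \emph{FPT}-time.

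The step I expect to be the main obstacle is making condition~$(f)$ genuinely polynomial-time checkable: deciding that $G[\mathcal{L}\cup\mathcal{R}\cup\mathcal{C}]$ has the circular-ones property witnessing the prescribed cliques $C_{\mathcal{F}_1},\dots,C_{\mathcal{F}_{\vert\mathcal{B}\vert}}$ in the fixed cyclic order. Here I would lean on the fact, already secured in Step~4, that $G[\mathcal{L}\cup\mathcal{R}\cup\mathcal{C}]$ is a circular-arc graph, and that the cyclic order of $G[\mathcal{L}\cup\mathcal{R}]$ is fixed up to reversal by the holes; standard polynomial-time circular-arc recognition and circular-ones testing then pin down whether the candidate cliques can occur consecutively in that order. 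A minor bookkeeping point to settle is whether a component may remain unplaced on any branching block (being pushed onto a leaf branch instead), which would replace the base $\vert\mathcal{B}\vert$ by $\vert\mathcal{B}\vert+1$ in the count; either convention leaves the bound a function of $\vert U\vert$, so the conclusion is unaffected.
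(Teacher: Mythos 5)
Your proposal is correct and follows essentially the same route as the paper's proof: bound the number of ordered partitions by a function of $\vert U\vert$ using $\vert\mathcal{X}\vert\le\vert U\vert$ and $\vert\mathcal{B}\vert\le\vert U\vert$, bound the number of choices of subfamilies by an exponential function of $\vert U\vert$, and verify the conditions in polynomial time per candidate, handling the circular-ones condition via the circular-arc structure of $G[\mathcal{L}\cup\mathcal{R}\cup\mathcal{C}]$ (the paper does this with its PC-tree, citing the same machinery you invoke). Your intermediate counts, $\vert\mathcal{B}\vert^{\vert\mathcal{X}\vert}$ and $2^{\vert\mathcal{X}\vert}$ rather than the paper's $\vert U\vert^{\vert\mathcal{B}\vert}$ and $4^{\vert\mathcal{B}\vert}$, differ only cosmetically and are, if anything, the more natural expressions.
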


	\begin{proof}
		Since $\mathcal{X}$ has at most $\vert U \vert$ connected components and $\vert \mathcal{B} \vert < \vert U \vert$, the number of distinct partitions of $\mathcal{X}$ into $\mathcal{X}_1, \dots, \mathcal{X}_{\vert \mathcal{B} \vert}$ is upper bounded by ${\vert U \vert} ^{\vert \mathcal{B} \vert} \leq {\vert U \vert} ^{\vert U \vert}$. In addition, the number of upper attachments is bounded by $2 \vert U \vert$ since each connected component has at most two upper attachments. Thus, for each partition, there are at most $4^{\vert \mathcal{B} \vert}$ distinct choices of such subfamilies as each $\mathcal{X}_i$ contains at most two connected components with two upper attachments. Overall, we have a bounded number of partitions $\mathcal{X}$ and subfamilies of $\mathcal{F}$. We can in polynomial time check whether a set of vertices corresponds to a clique of $G$, and a given subset $C_{\mathcal{F}_1}, \dots, C_{\mathcal{F}_{\vert \mathcal{B} \vert}}$ of cliques placed on the circle can appear in this order using the PC-tree of the circular-arc graph $G[\mathcal{L} \cup \mathcal{R} \cup \mathcal{C}]$ as in \cite{DBLP:journals/dmtcs/CurtisLMNSSS13}. We also note here that this check is not the same as partial representation extension with predefined sets $C_{\mathcal{F}_1}, \dots, C_{\mathcal{F}_{\vert \mathcal{B} \vert}}$ which has been shown to be NP-complete \cite{DBLP:journals/corr/abs-2108-13076} since we do not specify the endpoints of the arcs for the corresponding vertices.
	\end{proof}
	
	\begin{lemma}\label{lem:maximalHelly}
		Considering the setting in Step \ref{item:partition} of Procedure~\ref{proc:generalRecog}, one can in \emph{FPT}-time identify a set of maximal Helly-cliques to be used in Procedure~\ref{proc:generalRecog} as commented in Step \ref{item:partSatisfied} if such a representation exists.
	\end{lemma}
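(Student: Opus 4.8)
The plan is to read off each clique $C_{\mathcal{F}_i}$ as a point on the circle and then grow it to the maximal clique realizable at that point. By conditions (e) and (f) of Step~\ref{item:partition}, the cliques $C_{\mathcal{F}_1},\dots,C_{\mathcal{F}_{\vert \mathcal{B}\vert}}$ occur in a fixed cyclic order in the circular-arc graph $G[\mathcal{L}\cup\mathcal{R}\cup\mathcal{C}]$, so each $C_{\mathcal{F}_i}$ is a Helly clique sitting at a clique-point $p_i$ of that representation. In any proper $U$-representation consistent with the current partition, the branching clique placed on $b_i$ is precisely the set $M_i$ of all vertices of $G$ whose subgraph covers $p_i$; this $M_i$ is a maximal Helly clique containing $C_{\mathcal{F}_i}$, and it is exactly the object we want to hand to Procedure~\ref{proc:givenCliquesRecog}.

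First I would compute the circle part of $M_i$. Using the PC-tree of $G[\mathcal{L}\cup\mathcal{R}\cup\mathcal{C}]$ as in~\cite{DBLP:journals/dmtcs/CurtisLMNSSS13}, one can in polynomial time list all vertices of $\mathcal{L}\cup\mathcal{R}\cup\mathcal{C}$ whose arc covers $p_i$; these form the maximal clique among the forced-circle vertices at $p_i$ and already contain $C_{\mathcal{F}_i}$. Next I would add the component part: for each component $X_j\in\mathcal{X}_i$ assigned to $b_i$, the vertices of $X_j$ that must cover $p_i$ are exactly the common neighbors of this circle clique inside $X_j$, since a component vertex can lie over $p_i$ only if it is adjacent to every arc covering $p_i$, and by the placement argument of Lemma~\ref{lem:circleVertices} such vertices are forced onto the circle. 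Taking the union over all $X_j\in\mathcal{X}_i$ yields the candidate $M_i$, which can be verified in polynomial time to be a clique and to be Helly (all its members pairwise intersect at $p_i$); if either test fails, the current partition and choice of subfamilies is rejected. Since $\vert \mathcal{B}\vert\le\vert U\vert$ and each $M_i$ is built with polynomially many operations, and since by Lemma~\ref{step7inFPT} only boundedly many partitions and subfamilies are considered, the whole identification runs in \emph{FPT}-time.

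The main obstacle is to argue that passing to the maximal cliques $M_i$ loses no generality, i.e.\ that $G$ has a proper $U$-representation with $C_{\mathcal{F}_i}$ on $b_i$ for all $i$ if and only if it has one with $M_i$ on $b_i$ for all $i$. The ``if'' direction is immediate, and for the converse I would observe that any vertex added when passing from $C_{\mathcal{F}_i}$ to $M_i$ already covers $p_i$ in the given representation, so enlarging the branching clique to its maximal form does not disturb the arcs of the remaining vertices nor the components hanging off the other edges. A delicate point here is the component part: two distinct components of $\mathcal{X}$ share no edge, so their contributions to $M_i$ must be mutually adjacent, which can occur only when at most one component actually reaches over $p_i$; verifying that the union is a single clique, together with condition~(g) of Step~\ref{item:partition}, rules out the bad configurations and certifies that $M_i$ is realizable as one Helly branching clique. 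Once this is established, feeding $\{M_1,\dots,M_{\vert\mathcal{B}\vert}\}$ as $f(\mathcal{C}^*)$ into Procedure~\ref{proc:givenCliquesRecog} in Step~\ref{item:partSatisfied} is justified.
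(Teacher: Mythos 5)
Your construction is anchored to one concrete circular-arc representation of $G[\mathcal{L}\cup\mathcal{R}\cup\mathcal{C}]$: you read off a point $p_i$ for each $C_{\mathcal{F}_i}$ from the PC-tree representation and define $M_i$ as the set of all vertices ``covering $p_i$''. This is where the argument has a genuine gap. Circular-arc representations are not unique, and in the regime of Procedure~\ref{proc:generalRecog} the graph $G[\mathcal{L}\cup\mathcal{R}\cup\mathcal{C}]$ contains non-Helly cliques, so the family of arcs through a point is not an invariant of the graph. The set of circle vertices whose arcs pass over $p_i$ in the auxiliary representation need not coincide with the set of circle vertices whose subgraphs contain the node $b_i$ in the (unknown) proper $U$-representation one is trying to build. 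Your key claim --- that in any proper $U$-representation consistent with the partition the branching clique on $b_i$ is \emph{precisely} $M_i$ --- is exactly the assertion that these two representations agree at that point, and it is never justified; the same unproved identification reappears in your ``loses no generality'' paragraph, where the vertices of $M_i\setminus C_{\mathcal{F}_i}$ are said to ``already cover $p_i$ in the given representation'' although their covering $p_i$ was established only in the auxiliary representation. A concrete failure mode: the auxiliary representation may place extra arcs over $p_i$ that make your greedy $M_i$ fail to be a clique, whereupon you reject the partition and the choice of subfamilies, even though a valid maximal Helly branching clique extending $C_{\mathcal{F}_i}$ exists; so the procedure as you describe it can reject yes-instances. (A smaller point: invoking Lemma~\ref{lem:circleVertices} for the component part is misplaced, since vertices of components in $\mathcal{X}$ are by construction outside $\mathcal{L}\cup\mathcal{R}$ and that lemma concerns vertices with three neighbors on a hole, not common neighbors of a clique.)

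The paper's proof avoids representations and points entirely and works combinatorially with upper attachments. Since vertices from two distinct components of $\mathcal{X}$ are never adjacent, and a component vertex is adjacent to all of $C_{\mathcal{F}_i}$ exactly when its attachment contains $C_{\mathcal{F}_i}$, a non-maximal $C_{\mathcal{F}_i}$ can be extended only by vertices of a single component $X_j\in\mathcal{X}_i$ whose upper attachment is a superset of the upper attachment of every other component contributing to $C_{\mathcal{F}_i}$; if no such dominating component exists, then $C_{\mathcal{F}_i}$ is either already maximal or cannot be extended to a maximal clique placeable on a branching node at all. This inclusion test is representation-independent, runs in polynomial time per partition, and together with Lemma~\ref{step7inFPT} gives the \emph{FPT} bound. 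To repair your argument you would need to replace the point $p_i$ by such a representation-free criterion, rather than by a feature of one particular circular-arc representation.
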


	\begin{proof}
		
		By Lemma~\ref{step7inFPT}, we have a bounded number of partitions $\mathcal{X}_1, \dots, \mathcal{X}_{\vert \mathcal{B} \vert}$ and by the conditions $7.d$, $7.e$ and $7.g$ of Procedure~\ref{proc:generalRecog}, for each $1 \leq i \leq j \leq \vert \mathcal{B} \vert$, any $X_k \in \mathcal{X}_i$ and its upper attachment contains a maximal clique $C$ of $G$ which can not be obtained from any other $\mathcal{X}_j$ for $j \neq i$ as the connected components in $\mathcal{X}$ are disjoint. Since any set $C_{\mathcal{F}_1}, \dots, C_{\mathcal{F}_{\vert \mathcal{B} \vert}}$ of cliques satisfying all conditions of Step \ref{item:partition} has a circular-ones ordering, any non-maximal $C_{\mathcal{F}_i}$ can be extended to a maximal clique if there exists a component $X_j \in \mathcal{X}_i$ such that the upper attachment of $X_j$ is a superset of the upper attachment of every other component in $C_{\mathcal{F}_i}$. Otherwise, either $C_{\mathcal{F}_i}$ is already a maximal clique, or its vertices appear in some Helly-clique and it can not be extended to a maximal clique to be placed on a branch. On the other hand, every clique in $C_{\mathcal{F}_1}, \dots, C_{\mathcal{F}_{\vert \mathcal{B} \vert}}$ which can be extended to a maximal clique can be computed in \emph{FPT}-time and used in Procedure~\ref{proc:givenCliquesRecog} from Step $(a)$ as the current bijection $f(\mathcal{C}^*)$. 	
	\end{proof}

	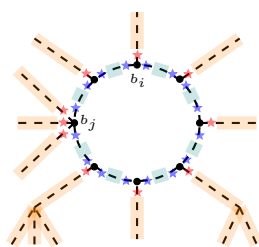
\begin{figure}[t]
		\centering
		\begin{subfigure}[t]{0.32\linewidth}
			\centering
			\begin{tikzpicture}[xscale=0.75,yscale=0.7]
				
				\draw[dashed, thick] (6,1) circle (1.1cm);
				
				\draw[dashed, thick] (6,2.1) -- (6,3.1);
				\draw[dashed, thick] (6.8,1.85) -- (7.8,2.6);
				\draw[dashed, thick] (7.1,1) -- (8.1,1);
				\draw[dashed, thick] (6.8,0.15) -- (7.8,-0.6);
				\draw[dashed, thick] (6,-0.2) -- (6,-1.2);
				\draw[dashed, thick] (4.2,-0.6) -- (5.2,0.15);
				\draw[dashed, thick] (4.9,1) -- (3.9,1);
				\draw[dashed, thick] (4.2,2.6) -- (5.2,1.85);
				
				\draw[dashed, thick] (7.8,-0.6) -- (8.1,-1.3);
				\draw[dashed, thick] (7.8,-0.6) -- (7.5,-1.3);
				
				\draw[dashed, thick] (4.6,-1.3) -- (4.2,-0.6);
				\draw[dashed, thick] (4.2,-1.3) -- (4.2,-0.6);
				\draw[dashed, thick] (3.8,-1.3) -- (4.2,-0.6);
				
				\draw[dashed, thick] (4.9,1) -- (3.9,0);
				\draw[dashed, thick] (4.9,1) -- (3.9,2);
				
				\node[label=left:{\tiny{$b_i$}}] (1) at (6.5,1.8) {};
				\node[label=left:{\tiny{$b_j$}}] (6) at (5.65,1) {};
				
				\node at (6,2.1) [circle,draw, fill=black, opacity=1, color=black, inner sep=0.3mm] (11) {};
				\node at (6.75,1.83) [circle,draw, fill=black, opacity=1, color=black, inner sep=0.3mm] (12) {};
				\node at (7.1,1) [circle,draw, fill=black, opacity=1, color=black, inner sep=0.3mm] (13) {};
				\node at (6.75,0.17) [circle,draw, fill=black, opacity=1, color=black, inner sep=0.3mm] (14) {};
				\node at (6,-0.11) [circle,draw, fill=black, opacity=1, color=black, inner sep=0.3mm] (15) {};
				\node at (5.25,0.17) [circle,draw, fill=black, opacity=1, color=black, inner sep=0.3mm] (16) {};
				\node at (4.9,1) [circle,draw, fill=black, opacity=1, color=black, inner sep=0.3mm] (17) {};
				\node at (5.25,1.82) [circle,draw, fill=black, opacity=1, color=black, inner sep=0.3mm] (18) {};
				
				\draw[line width=5pt, orange, opacity=0.2] (6,2.4) -- (6,3.1);
				\node[star, star points=5, star point ratio=2.25, draw=red, opacity=0.5, fill=red, inner sep=0.5pt, anchor=outer point 3] at (5.95,2.2) {};
				
				\draw[line width=5pt, orange, opacity=0.2] (7,2) -- (7.8,2.6);
				\node[star, star points=5, star point ratio=2.25, draw=red, opacity=0.5, fill=red, inner sep=0.5pt, anchor=outer point 3] at (6.84,1.84) {};
				
				\draw[line width=5pt, orange, opacity=0.2] (7.4,1) -- (8.1,1);
				\node[star, star points=5, star point ratio=2.25, draw=red, opacity=0.5, fill=red, inner sep=0.5pt, anchor=outer point 3] at (7.24,0.92) {};
				
				\draw[line width=5pt, orange, opacity=0.2] (7,0) -- (7.8,-0.6);
				\node[star, star points=5, star point ratio=2.25, draw=red, opacity=0.5, fill=red, inner sep=0.5pt, anchor=outer point 3] at (6.85,0) {};
				
				\draw[line width=5pt, orange, opacity=0.2] (6,-0.4) -- (6,-1.2);
				\node[star, star points=5, star point ratio=2.25, draw=red, opacity=0.5, fill=red, inner sep=0.5pt, anchor=outer point 3] at (5.95,-0.4) {};
				
				\draw[line width=5pt, orange, opacity=0.2] (4.2,-0.6) -- (5,0);
				\node[star, star points=5, star point ratio=2.25, draw=red, opacity=0.5, fill=red, inner sep=0.5pt, anchor=outer point 3] at (5.05,0) {};
				
				\draw[line width=5pt, orange, opacity=0.2] (4.6,1) -- (3.9,1);
				\node[star, star points=5, star point ratio=2.25, draw=red, opacity=0.5, fill=red, inner sep=0.5pt, anchor=outer point 3] at (4.65,0.925) {};
				
				\draw[line width=5pt, orange, opacity=0.2] (4.2,2.6) -- (5,2);
				\node[star, star points=5, star point ratio=2.25, draw=red, opacity=0.5, fill=red, inner sep=0.5pt, anchor=outer point 3] at (5.05,1.85) {};
				
				\draw[line width=5pt, orange, opacity=0.2] (7.8,-0.6) -- (8.1,-1.3);
				\draw[line width=5pt, orange, opacity=0.2] (7.8,-0.6) -- (7.5,-1.3);
				
				\draw[line width=5pt, orange, opacity=0.2] (4.6,-1.3) -- (4.2,-0.6);
				\draw[line width=5pt, orange, opacity=0.2] (4.2,-1.3) -- (4.2,-0.6);
				\draw[line width=5pt, orange, opacity=0.2] (3.8,-1.3) -- (4.2,-0.6);
				
				\draw[line width=5pt, orange, opacity=0.2] (4.6,0.73) -- (3.9,0);
				\node[star, star points=5, star point ratio=2.25, draw=red, opacity=0.5, fill=red, inner sep=0.5pt, anchor=outer point 3] at (4.65,1.15) {};
				
				\draw[line width=5pt, orange, opacity=0.2] (4.6,1.27) -- (3.9,2);
				\node[star, star points=5, star point ratio=2.25, draw=red, opacity=0.5, fill=red, inner sep=0.5pt, anchor=outer point 3] at (4.65,0.7) {};
				
				\draw[-, line width=5pt, teal, opacity=0.2] (6.53,0.04) -- (6.28,-0.04);
				\node[star, star points=5, star point ratio=2.25, draw=blue, opacity=0.5, fill=blue, inner sep=0.5pt, anchor=outer point 3] at (6.57,-0.04) {};
				\node[star, star points=5, star point ratio=2.25, draw=blue, opacity=0.5, fill=blue, inner sep=0.5pt, anchor=outer point 3] at (6.125,-0.18) {};
				
				\draw[-, line width=5pt, teal, opacity=0.2] (7.05,0.63) -- (6.95,0.4);
				\node[star, star points=5, star point ratio=2.25, draw=blue, opacity=0.5, fill=blue, inner sep=0.5pt, anchor=outer point 3] at (6.81,0.22) {};
				\node[star, star points=5, star point ratio=2.25, draw=blue, opacity=0.5, fill=blue, inner sep=0.5pt, anchor=outer point 3] at (7.04,0.68) {};
				
				\draw[-, line width=5pt, teal, opacity=0.2] (7,1.4) -- (6.9,1.63);
				\node[star, star points=5, star point ratio=2.25, draw=blue, opacity=0.5, fill=blue, inner sep=0.5pt, anchor=outer point 3] at (6.8,1.62) {};
				\node[star, star points=5, star point ratio=2.25, draw=blue, opacity=0.5, fill=blue, inner sep=0.5pt, anchor=outer point 3] at (7,1.2) {};
				
				\draw[-, line width=5pt, teal, opacity=0.2] (6.53,1.96) -- (6.28,2.04);
				\node[star, star points=5, star point ratio=2.25, draw=blue, opacity=0.5, fill=blue, inner sep=0.5pt, anchor=outer point 3] at (6.565,1.84) {};
				\node[star, star points=5, star point ratio=2.25, draw=blue, opacity=0.5, fill=blue, inner sep=0.5pt, anchor=outer point 3] at (6.1,2) {};
				
				\draw[-, line width=5pt, teal, opacity=0.2] (5.47,0.04) -- (5.72,-0.04);
				\node[star, star points=5, star point ratio=2.25, draw=blue, opacity=0.5, fill=blue, inner sep=0.5pt, anchor=outer point 3] at (5.32,-0.04) {};
				\node[star, star points=5, star point ratio=2.25, draw=blue, opacity=0.5, fill=blue, inner sep=0.5pt, anchor=outer point 3] at (5.75,-0.18) {};
				
				\draw[-, line width=5pt, teal, opacity=0.2] (5,0.6) -- (5.1,0.37);
				\node[star, star points=5, star point ratio=2.25, draw=blue, opacity=0.5, fill=blue, inner sep=0.5pt, anchor=outer point 3] at (5.07,0.19) {};
				\node[star, star points=5, star point ratio=2.25, draw=blue, opacity=0.5, fill=blue, inner sep=0.5pt, anchor=outer point 3] at (4.865,0.68) {};
				
				\draw[-, line width=5pt, teal, opacity=0.2] (4.95,1.4) -- (5.05,1.63);
				\node[star, star points=5, star point ratio=2.25, draw=blue, opacity=0.5, fill=blue, inner sep=0.5pt, anchor=outer point 3] at (5.07,1.62) {};
				\node[star, star points=5, star point ratio=2.25, draw=blue, opacity=0.5, fill=blue, inner sep=0.5pt, anchor=outer point 3] at (4.865,1.2) {};
				
				\draw[-, line width=5pt, teal, opacity=0.2] (5.47,1.96) -- (5.72,2.04);
				\node[star, star points=5, star point ratio=2.25, draw=blue, opacity=0.5, fill=blue, inner sep=0.5pt, anchor=outer point 3] at (5.32,1.84) {};
				\node[star, star points=5, star point ratio=2.25, draw=blue, opacity=0.5, fill=blue, inner sep=0.5pt, anchor=outer point 3] at (5.75,2) {};	
				
				\node (111) at (6,-2.1) {\textbf{(a)}};
				\label{a}
			\end{tikzpicture}
			
			\label{fig:hell}
		\end{subfigure}
		
		\caption{(a) After the hole vertices, the revealed vertices, non-Helly cliques and dominating edges are identified using Procedure~\ref{proc:allInduced}, the teal and the orange connected components are found without deterministically knowing their specific placements on $U$. Each teal component has two starred blue upper attachments and must be placed on the circle of $U$, and each orange component has one red starred upper attachment and must not be placed on the circle of $U$ on $\mathcal{L} \cup \mathcal{R} \cup \mathcal{C}$.}
		\label{fig:hellyCliques}
	\end{figure}
	
	\begin{theorem}\label{theo:generalRecog}
		Procedure~\ref{proc:generalRecog} correctly decides whether a given graph is a proper $U$-graph in \emph{FPT}-time parameterized by $\vert U \vert$.
	\end{theorem}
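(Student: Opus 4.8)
The plan is to prove correctness by a case analysis mirroring the three entry points of the procedure, and to bound the running time by composing the FPT guarantees already established. First I would dispose of the two easy cases. If $G$ is chordal, Step~1 returns the output of Procedure~\ref{proc:givenCliquesRecog}, correct by Theorem~\ref{theo:givenCliquesRecog}; if $G$ is a proper Helly $U$-graph, Step~2 detects this through the modifications validated in Theorem~\ref{theo:HellyRecog}. Hence I may assume throughout that $G$ is neither chordal nor a proper Helly $U$-graph, which is precisely the hypothesis of Lemmas~\ref{lem:inducedcycle<=6andU}--\ref{lem:dominatingEdgesOrT-graph}.

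In this regime every proper $U$-representation places all holes completely on the circle, and by Lemma~\ref{lem:inducedcycle<=6andU} their lengths are bounded. By Lemma~\ref{lem:revealerSet}, Procedure~\ref{proc:allInduced} correctly yields the hole vertices $\mathcal{L}$ and the revealed vertices $\mathcal{R}$ that Lemma~\ref{lem:circleVertices} forces onto the circle; Lemmas~\ref{lem:nonHellytogether} and~\ref{lem:dominatingEdgesOrT-graph} then add the vertices $\mathcal{C}$ lying in cliques that are non-Helly in every representation together with the dominating edges. The key point is that $\mathcal{L}\cup\mathcal{R}\cup\mathcal{C}$ is exactly the set of vertices that must meet the circle, so if $G$ is a proper $U$-graph then $G[\mathcal{L}\cup\mathcal{R}\cup\mathcal{C}]$ is a circular-arc graph and the rejection in Step~4 is sound. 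The remaining vertices split into the components $\mathcal{X}$, which sit completely off the circle, induce a chordal graph, and must be distributed over the branches hanging from the branching nodes $\mathcal{B}$.

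The heart of the argument is to show that the conditions of Step~\ref{item:partition} hold for some partition of $\mathcal{X}$ if and only if $G$ admits a proper $U$-representation with branching cliques $C_{\mathcal{F}_1},\dots,C_{\mathcal{F}_{\vert \mathcal{B} \vert}}$. For the forward direction I would read off such a partition from a given representation: since each circle edge carries at most one component, at most $deg(b_i)-2$ off-circle components attach at a branching node $b_i$, which yields the degree bounds; the cyclic arrangement of the branching cliques around the circle supplies the required circular-ones ordering; and properness forbids the containment of an upper attachment of a component in one part inside a branching clique assigned to another part. Conversely, a partition meeting all conditions determines a consistent candidate, and Lemma~\ref{lem:maximalHelly} extends each $C_{\mathcal{F}_i}$ to a maximal clique so that Procedure~\ref{proc:givenCliquesRecog}, entered at Step~$(a)$ with $f(\mathcal{C}^*)=\{C_{\mathcal{F}_1},\dots,C_{\mathcal{F}_{\vert \mathcal{B} \vert}}\}$, can verify each induced subtree part; Theorem~\ref{theo:givenCliquesRecog} makes this verification correct. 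I expect the main obstacle to be exactly this completeness half: reconciling the cyclic (circular-ones) order forced on the branching cliques with the fact that Procedure~\ref{proc:givenCliquesRecog} was designed for chordal inputs, so that after cutting a single circle edge the off-circle parts genuinely decompose into proper $T$-graphs that the procedure can certify.

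For the running time, Steps~1--2 are FPT by Theorems~\ref{theo:givenCliquesRecog} and~\ref{theo:HellyRecog}; Step~3 is polynomial by Lemma~\ref{lem:revealerSet}; Step~4 combines polynomial-time circular-arc recognition with Lemmas~\ref{lem:nonHellytogether} and~\ref{lem:dominatingEdgesOrT-graph}; and the component, attachment, and counting tests of Steps~5--7 are polynomial. By Lemma~\ref{step7inFPT} the partitions examined in Step~\ref{item:partition} number at most a function of $\vert U \vert$ and each is checkable in FPT-time, while Lemma~\ref{lem:maximalHelly} produces the maximal Helly cliques in FPT-time. Since each surviving partition triggers one FPT-time call to Procedure~\ref{proc:givenCliquesRecog} by Theorem~\ref{theo:givenCliquesRecog}, multiplying a bounded number of partitions by an FPT per-partition cost yields a total running time that is FPT in $\vert U \vert$.
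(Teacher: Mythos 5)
Your proposal follows the paper's own proof plan step for step (the same lemma citations for Steps 1--4, the same counting for Steps 5--7, Lemma~\ref{step7inFPT} and Lemma~\ref{lem:maximalHelly} plus boundedly many calls to Procedure~\ref{proc:givenCliquesRecog} for the running time), but two of its claims are genuine problems rather than presentational differences. The first is the assertion that the components of $\mathcal{X}=G-(\mathcal{L}\cup\mathcal{R}\cup\mathcal{C})$ ``sit completely off the circle.'' This is false, and it breaks the very counting you then invoke. Membership in $\mathcal{X}$ only means a vertex is not \emph{forced} onto the circle by holes, revealed vertices, or non-Helly cliques; a component of $\mathcal{X}$ can perfectly well be placed on an edge of the circle between two branching cliques. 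The paper's correctness argument for Steps 6 and 7 rests on exactly this dichotomy: on-circle components have exactly two upper attachments, off-circle components exactly one, so at most $\vert\mathcal{B}\vert$ components of the first kind and at most $deg(\mathcal{B})-2\vert\mathcal{B}\vert$ of the second kind can occur, giving the threshold $2\vert\mathcal{B}\vert+(deg(\mathcal{B})-2\vert\mathcal{B}\vert)=deg(\mathcal{B})$. Under your reading every component would have one upper attachment, Step 6's rejection test and condition (g) of Step~\ref{item:partition} would be vacuous, and the $deg(\mathcal{B})$ bound would be unexplained; your later phrase ``each circle edge carries at most one component'' tacitly assumes the correct picture, so the proposal is also internally inconsistent on this point. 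Relatedly, ``$\mathcal{L}\cup\mathcal{R}\cup\mathcal{C}$ is exactly the set of vertices that must meet the circle'' is an overclaim: Lemmas~\ref{lem:circleVertices}, \ref{lem:nonHellytogether} and~\ref{lem:dominatingEdgesOrT-graph} establish only the inclusion (these vertices must meet the circle), which is all that the soundness of Step 4 requires -- and even granting exactness, ``not forced onto the circle'' would still not imply ``placed off the circle.''

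The second problem is that you identify the equivalence ``some partition satisfies the conditions of Step~\ref{item:partition} and is certified by Procedure~\ref{proc:givenCliquesRecog} if and only if $G$ has a proper $U$-representation'' as the heart of the matter, sketch one direction, and then explicitly leave the rest as an ``expected obstacle.'' That unresolved half is the actual content of the theorem. The paper closes it by arguing that the cyclic order of $G[\mathcal{L}\cup\mathcal{R}]$ is fixed up to reversal (the holes cover the circle, by Lemma~\ref{lem:revealerSet}), that properness forces the intersections of the upper attachments with the branching cliques to be maximal, hence the circular-ones consistency demanded in Step~\ref{item:partition} is a necessary condition, and that the partition read off from any actual proper $U$-representation survives all tests and hands Procedure~\ref{proc:givenCliquesRecog} genuine branching cliques, for which the case analysis of Theorem~\ref{theo:givenCliquesRecog} (Steps iii--v, which test proper $Y$-graph and proper interval conditions that do not require chordality of all of $G$) goes through. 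A proof attempt that flags this step as an obstacle without discharging it has not established the statement.
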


	\begin{proof}
		We first prove that Procedure~\ref{proc:generalRecog} works correctly. The correctness of Step $1$ to $4$ follows from Theorem~\ref{theo:givenCliquesRecog}, Theorem~\ref{theo:HellyRecog}, Lemma~\ref{lem:nonHellytogether} and Lemma~\ref{lem:revealerSet}. In Step $5$, if there are more than $\vert U \vert$ connected components, then $G$ can not be a proper $U$-graph since there exists no proper $U$-representation in this case \cite{SdT-graphs2021efficient}. 
		
		In Step $6$, the cyclic order of $G[\mathcal{L} \cup \mathcal{R}]$ is fixed (up to reversal) since it only contains the vertices of holes and some additional vertices which have to be represented on the circle by Lemma~\ref{lem:revealerSet}, and every component must have at most $2$ upper attachments since they are either placed on the circle of $U$ having exactly $2$ distinct upper attachments due to the existence of a proper $U$-representation, or placed not on the circle of $U$ having exactly $1$ upper attachment due to the connectedness under our assumptions if $G$ is a proper $U$-graph. Then, in Step $7$, there are at most $\vert \mathcal{B} \vert$ connected components placed on the circle of $U$ with exactly two distinct upper attachments, and at most $deg(\mathcal{B}) -2 \vert \mathcal{B} \vert$ connected components with one upper attachment placed not on the circle of $U$. Thus, if there are more than $2\vert \mathcal{B} \vert + deg(\mathcal{B}) -2 \vert \mathcal{B} \vert = deg(\mathcal{B})$ upper attachments, $G$ can not be a proper $U$-graph.
		
		In Step $8$, even though we do not know the exact placements of the connected components on the edges of $U$, we try all possible placements. Due to the necessity of having a proper $U$-representation if $G$ is a proper $U$-graph, the intersections of the upper attachments of the connected components placed on incident edges with a branching clique must be maximal, and since $G[\mathcal{L} \cup \mathcal{R} \cup \mathcal{C}]$ is a circular-arc graph because of the prescribed conditions, the branching cliques $C_{\mathcal{F}_1}, \dots, C_{\mathcal{F}_{\vert \mathcal{B} \vert}}$ have a circular-ones ordering consistent with $G[\mathcal{L} \cup \mathcal{R} \cup \mathcal{C}]$.
		
		In Step $9$, we try each such partition as at least one of them passes the test of Procedure~\ref{proc:givenCliquesRecog} if $G$ is a proper $U$-graph using the identified Helly cliques. In Step $10$, no partition gives a proper $U$-representation meaning that $G$ is not a proper $U$-graph.
		
		We now prove that Procedure~\ref{proc:generalRecog} works in \emph{FPT}-time. Steps $1$ to $4$ takes \emph{FPT}-time due to Theorem~\ref{theo:givenCliquesRecog}, Theorem~\ref{theo:HellyRecog}, Lemma~\ref{lem:nonHellytogether} and Lemma~\ref{lem:revealerSet}, and Steps $5$, $6$ and $7$ takes trivially polynomial time. In Step $8$, the number of all possible partitions of $\mathcal{X}$ into $\mathcal{X}_1, \dots, \mathcal{X}_{\vert \mathcal{B} \vert}$ is bounded by our parameter $\vert U \vert$ by Lemma~\ref{step7inFPT} and the number of possible choices of subfamilies $\mathcal{F}_1, \dots, \mathcal{F}_{\vert \mathcal{B} \vert}$ of $\mathcal{F}$ is upper bounded by $4^{\vert \mathcal{B} \vert}$ since each $\mathcal{F}_i$ contains at most two connected components with two upper attachments. Finally, we use the \emph{FPT}-approach given as Procedure~\ref{proc:givenCliquesRecog} for each of boundedly many partitions of $\mathcal{X}$ and boundedly many choices of subfamilies of $\mathcal{F}$ in Step $9$ and use its result. Therefore, the overall procedure works correctly and in \emph{FPT}-time.	
	\end{proof}

	\section{Isomorphism testing for proper $\boldsymbol{U}$-graphs}
	\label{sec:isomorphism}
	
	In this section, we show how to test the isomorphism of two proper $U$-graphs in \emph{FPT}-time parameterized by $\vert U \vert$. We modify Procedure~\ref{proc:givenCliquesRecog} and Procedure~\ref{proc:generalRecog}, and have the previous assumptions given on the corresponding sections. Here, we assume connectedness also since the isomorphism of each pair of components of given disconnected graphs can be checked pairwise. We first modify Procedure~\ref{proc:givenCliquesRecog} as follows:
	
	\begin{enumerate}
		\item On the input, we are given the fixed unicyclic graph $U$, and two connected proper $U$-graphs $G$ and $H$ on $n$ vertices which are both proper Helly $U$-graphs or chordal graphs. Let $\mathcal{C^*}$ and $\mathcal{D^*}$ denote the isomorphism invariant sets of rich cliques of $G$ and $H$ by Lemma~\ref{lem:richBranchBczTpartsCliques}, respectively.

		\item In Step 1, we check whether $\vert \mathcal{C^*} \vert = \vert \mathcal{D^*} \vert$ holds.

		\item Using Procedure~\ref{proc:givenCliquesRecog}, we fix a proper $U$-representation of $G$ where $f$ denotes the assignment of branching cliques on the circle in this fixed representation. Let $C_i$ denote the maximal clique of $\mathcal{C}^*$ placed on the branching node $b_i \in \mathcal{B}$ in the current assignment $f$, i.e.~$C_i = f{\mid_{b_i}}$. 
		
		\item We denote each assignment of $\mathcal{D^*}$ to $\mathcal{B}$ by $f'$, and only consider the assignments $f'$ which match to $f$ as being in the case of Step $iii$, $iv$ and $v$ of Procedure~\ref{proc:givenCliquesRecog}. Therefore, $P$ is the same for both graphs in every step. Let $D_i$ denote the maximal clique of $\mathcal{D}^*$ placed on the branching node $b_i \in \mathcal{B}$ in the current assignment $f'$, i.e. $D_i = f'{\mid_{b_i}}$. Here, we require also that $\vert C_i \vert = \vert D_i \vert$ for every $b_i \in \mathcal{B}$. 
		
		\item For each branching node $b_i$ of $U$, we compute the subtree $Y \subsetneq U$ and the graph $\mathcal{Y^*}$ obtained by $G$ defined as in Procedure~\ref{proc:givenCliquesRecog}. We denote by $Z \subsetneq U$ and $\mathcal{Z^*}$ the corresponding subtree for $H$ and the graph obtained by $H$ in the current assignment $f'$, respectively.
		
		\item We first check whether $Y \simeq Z$, and instead of checking for proper $Y$-graph recognition, we test whether $G[\mathcal{Y^*}] \simeq H[\mathcal{Z^*}]$ using the proper $Y$-graph isomorphism testing for the tree $Y \simeq Z$ given in \cite{SdT-graphs2021efficient}.
		
		\item In Step $iv$, we additionally check whether the intersections $C_j \cap C_{j+1}$ and $D_j \cap D_{j+1}$ have the same cardinality.
		
		\item In Step $v$, we check whether the defined interval graphs are isomorphic proper interval graphs using the algorithm given in \cite{recogIntervalLinear}.
		
		\item We return either $G \simeq H$ or $G \not\simeq H$ analogous to Procedure~\ref{proc:givenCliquesRecog}.
	\end{enumerate}
	
	\begin{reptheorem}{easyIso}\label{theo:givenCliquesIso}
		The above modifications to Procedure~\ref{proc:givenCliquesRecog} result in a correct \emph{FPT}-time isomorphism testing for given two proper $U$-graphs which are both proper Helly $U$-graphs or chordal graphs.
	\end{reptheorem}
	
	\begin{proof}
		We prove that our modifications correctly cover the cases in the Steps $iii$, $iv$ and $v$ of Procedure~\ref{proc:givenCliquesRecog}. Precisely, we show that any attachment ignored in one step of the modified procedure is checked in some other step, thus the neighborhoods of bijected vertices can also be bijected to each other.
		
		In Step $iii$, $Y$ and $Z$ do not contain at least one branching node $b_j$. Therefore, $C_k \cap G[\mathcal{Y^*}] = \emptyset$ (and $D_k \cap H[\mathcal{Z^*}] = \emptyset$) and the attachment of $C_k$ in $G[\mathcal{Y^*}]$ (and $D_k$ in $H[\mathcal{Z^*}]$) is ignored. Due to our assumptions, $C_k$ (and $D_k$) has a non-empty attachment in $G[\mathcal{Y^*}]$ (and $H[\mathcal{Z^*}]$) which results in having $b_k$ together with $b_{k-1}$ and $b_{k+1}$, and possibly some other branching nodes of the circle, in at least one other $Y' \neq Y$ (and $Z' \neq Z$) even though $Y' \simeq Y$ (and $Z' \simeq Z$) may hold, thus the attachments of any $C_k$ (and $D_k$) ignored in this step are checked in another step.
		
		In Step $iv$, there is no connected components on the edges of $P$ and we consider the furthest branching nodes $b_j$ and $b_{j+1}$ from $b_i$. There is no ignored attachment in this case and since there is also no connected component between $b_i$ and $b_{j+1}$, having the same cardinality $\vert C_j \cap C_{j+1}\vert = \vert D_j \cap D_{j+1}\vert$ is sufficient.
		
		In Step $v$, there is exactly one connected component on the edge $e$ of $P$. Similar to the previous case, there is no ignored attachments and the proper interval graph isomorphism testing for the copied connected components and their immediate attachments guarantees a bijection between the same vertices.
		
		Since the proper $T$-graph isomorphism for a tree $T$ takes \emph{FPT}-time~\cite{SdT-graphs2021efficient} and proper interval graph isomorphism takes linear time~\cite{recogIntervalLinear}, the overall proper $U$-graph isomorphism approach for given two proper Helly $U$-graphs or two chordal graphs works correctly in \emph{FPT}-time.
	\end{proof}
	
	By analogous modifications to Procedure~\ref{proc:generalRecog}, we get the following.

	\begin{reptheorem}{generalIso}\label{theo:generalIso}
		The isomorphism problem for proper $U$-graphs can be solved in \emph{FPT}-time parameterized by $\vert U \vert$.
	\end{reptheorem}

	\begin{proof}
		
		We first give the following modifications to Procedure~\ref{proc:generalRecog} for general proper $U$-graph isomorphism in addition to the modifications applied to Procedure~\ref{proc:givenCliquesRecog}:
		
		\begin{enumerate}
			
			\item We denote the sets of hole vertices of $G$ and $H$ by $\mathcal{L}$ and $\mathcal{K}$, and the sets of revealed vertices of $G$ and $H$ by $\mathcal{R}$ and $\mathcal{P}$, respectively.
			
			\item We denote the non-Helly cliques of $G$ and $H$ by $\mathcal{C}$ and $\mathcal{D}$, respectively.
			
			\item We additionally check whether $G[\mathcal{L} \cup \mathcal{R} \cup \mathcal{C}] \simeq H[\mathcal{K} \cup \mathcal{P} \cup \mathcal{D}]$ using the algorithm of \cite{DBLP:journals/corr/abs-1904-04501}. We denote the sets of connected components of $G-(\mathcal{L} \cup \mathcal{R} \cup \mathcal{C})$ and $H-(\mathcal{K} \cup \mathcal{P} \cup \mathcal{D})$ by $\mathcal{X}$ and $\mathcal{X'}$, respectively.
			
			\item We denote the families of all upper attachments of $\mathcal{X}$ and $\mathcal{X'}$ by $\mathcal{F}^{\mathcal{X}}$ and $\mathcal{F}^{\mathcal{X'}}$, respectively.  We additionally check whether the number of upper attachments in $\mathcal{F}^{\mathcal{X}}$ and $\mathcal{F}^{\mathcal{X'}}$ are the same.
			
			\item We denote the subsets of the partitionings of $\mathcal{X}$ and $\mathcal{X'}$ by $\mathcal{X}_1, \dots, \mathcal{X}_s$ and $\mathcal{X'}_1, \dots, \mathcal{X'}_t$, respectively. We additionally check whether $s = t$ and $\vert \mathcal{X}_i \vert = \vert \mathcal{X'}_i \vert$ for all $1 \leq i \leq s$.
			
			\item We denote the corresponding subfamilies of $\mathcal{F}^{\mathcal{X}}$ and $\mathcal{F}^{\mathcal{X'}}$ by $\mathcal{F}^{\mathcal{X}}_1, \dots, \mathcal{F}^{\mathcal{X}}_s$ and $\mathcal{F}^{\mathcal{X'}}_1, \dots, \mathcal{F}^{\mathcal{X'}}_t$, respectively. We additionally check whether $\vert \mathcal{F}^{\mathcal{X}}_i \vert = \vert \mathcal{F}^{\mathcal{X'}}_i \vert$ for all $1 \leq i \leq s$.
			
			\item We denote the unions of all vertices of $G$ and $H$ contained in each $\mathcal{F}^{\mathcal{X}}_i$ and $\mathcal{F}^{\mathcal{X'}}_j$ by $C_{\mathcal{F}^{\mathcal{X}}_i}$ and $D_{\mathcal{F}^{\mathcal{X'}}_j}$, respectively.
			
		\end{enumerate}
		
		We now prove the correctness and the complexity. By  Lemma~\ref{lem:nonHellytogether}, Lemma~\ref{lem:dominatingEdgesOrT-graph} and Lemma~\ref{lem:revealerSet} showing all vertices contained in the holes and non-Helly cliques can be identified in \emph{FPT}-time, Lemma~\ref{step7inFPT} showing the number of all possible partitions of $\mathcal{X}$ into $\mathcal{X}_1, \dots, \mathcal{X}_{\vert \mathcal{B} \vert}$ (and, therefore, $\mathcal{X'}$ into $\mathcal{X'}_1, \dots, \mathcal{X'}_{\vert \mathcal{B} \vert}$) is bounded by a function of the parameter $\vert U \vert$, by Theorem~\ref{theo:generalRecog} showing that the Procedure~\ref{proc:givenCliquesRecog} can be extended to recognize proper $U$-graphs in general, and by Lemma~\ref{theo:givenCliquesIso} showing that the given modifications to Procedure~\ref{proc:givenCliquesRecog} results in a correct \emph{FPT}-time isomorphism testing for proper $U$-graphs which are both proper Helly $U$-graphs or chordal graphs,
		the isomorphism problem for proper $U$-graphs in general can be tested in \emph{FPT}-time.
		
		We also note that we can use Lemma~\ref{lem:maximalHelly} to identify every set of maximal Helly-cliques possible to be used in isomorphism testing. Since the number of connected components is bounded in $G$ and $H$, here we get a bounded number of choices for such sets.
	\end{proof}

	\section{GI-completeness for $\boldsymbol{H}$-graphs and proper $\boldsymbol{H}$-graphs}
	\label{sec:GI-completeness}
	
	In this section, we prove that if $H$ is not a unicyclic graph, then the isomorphism problem is \emph{GI-complete} on proper $H$-graphs, i.e., as hard as the general graph isomorphism problem.
	Note that, for every $H$, the class of proper $H$-graphs is a subclass of $H$-graphs.
	Thus, this implies the GI-completeness of the isomorphism problem for $H$-graphs as well. 
	
	The key is the following theorem where $B$ denotes the $5$-vertex graph consisting of two triangles with one vertex identified.
	
	\begin{theorem}
		\label{thm:8-graphs_gic}
		The isomorphism problem for the class of all proper $B$-graphs is GI-complete.
	\end{theorem}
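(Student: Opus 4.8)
The plan is to establish the two directions of GI-completeness separately. Membership in GI is immediate: since proper $B$-graphs form a subclass of all graphs, their isomorphism problem trivially reduces to general graph isomorphism. The whole content is therefore GI-\emph{hardness}, and I would prove it by a polynomial-time reduction from the isomorphism problem for bipartite graphs with a distinguished side-partition (equivalently, from $0/1$-matrix isomorphism), which is known to be GI-complete. The geometric intuition I would exploit is that a subdivision of $B$ is exactly two cycles $C_L$ and $C_R$ glued at the single degree-$4$ cut node $c$; the apex $c$ is the one feature absent from any unicyclic $U$, and it lets connected subgraphs pass between two otherwise independent rigid cyclic ``scales''. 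This junction is precisely what the reduction uses to correlate the two sides of a bipartite graph.

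Concretely, given a bipartite graph $G$ with parts $X=\{x_1,\dots,x_p\}$ and $Y=\{y_1,\dots,y_q\}$ and edge set $E$, I would build a proper $B$-graph $\Phi(G)$ as follows. Along $C_L$, starting just outside $c$ and moving away from it, place a ladder of mutually overlapping arcs $s^L_1,\dots,s^L_p$ (a proper interval substructure that contains no arc inside another), and cap the far end with a unique asymmetric marker gadget that pins a linear order on the ladder and fixes an orientation; do the same on $C_R$ with arcs $s^R_1,\dots,s^R_q$ and a \emph{distinct} marker, choosing the two markers (and, if needed, taking $p\neq q$ by padding) so that $C_L$ and $C_R$ cannot be interchanged. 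Since no ladder arc contains $c$, arcs on different circles are disjoint. For every edge $x_iy_j\in E$ add one \emph{connector} vertex, represented by a single arc that contains $c$ and reaches out along $C_L$ just past $s^L_i$ and along $C_R$ just past $s^R_j$. Thus each connector's neighborhood among the scale arcs is exactly the pair of prefixes $\{s^L_1,\dots,s^L_i\}\cup\{s^R_1,\dots,s^R_j\}$, so its two reach-lengths encode the ordered pair $(i,j)$; and since every connector contains $c$, the connectors form a clique recognizable as the ``edge set''. All arc lengths are chosen so that no arc contains another, making the representation proper, and the construction is clearly polynomial.

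For correctness, the forward direction is routine: a side-preserving isomorphism $G\to G'$ induces a relabeling of the two ladders and of the connectors that transports one representation to the other, so $\Phi(G)\simeq\Phi(G')$. The backward direction is the crux and is where almost all the work lies. Here I would argue that any \emph{abstract} graph isomorphism $\Phi(G)\to\Phi(G')$ is forced to be ``geometry-respecting'': using that proper interval representations are unique up to reversal, the distinctive combinatorial signature of each ladder together with its unique end marker forces the isomorphism to map $C_L$-scale arcs to $C_L$-scale arcs and $C_R$ to $C_R$ (not swapped, by the distinguishing markers) and to preserve their linear orders, hence to realize permutations $\sigma$ of $X$ and $\tau$ of $Y$; the connector clique must map to the connector clique; and each connector, being adjacent to exactly a length-$i$ prefix of the $L$-ladder and a length-$j$ prefix of the $R$-ladder, can only be matched to a connector with the corresponding prefix pair. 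This forces $x_iy_j\in E(G)\iff x_{\sigma(i)}y_{\tau(j)}\in E(G')$, i.e.\ $(\sigma,\tau)$ is a bipartite isomorphism, completing the equivalence $G\simeq G'\iff\Phi(G)\simeq\Phi(G')$.

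I expect the main obstacle to be exactly this rigidity analysis: ruling out \emph{spurious} automorphisms of $\Phi(G)$ that do not come from automorphisms of $G$. Two symmetries must be killed by the gadget design, namely the dihedral symmetry of each individual circle (handled by the unique end markers, which break reflection, and by the apex, which marks a canonical ``inner'' end of each ladder) and the symmetry exchanging $C_L$ with $C_R$ (handled by making the two sides combinatorially distinguishable). I would also have to verify carefully that the marker and connector gadgets create no unintended intersections---in particular that connectors meet only the intended prefixes and that properness is genuinely maintained---and that $\Phi(G)$ is realizable on a single subdivision of $B$, choosing the subdivision lengths of $C_L$ and $C_R$ large enough to host all arcs. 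Once the gadgets are shown to have unique abstract ``roles'', the reduction and its correctness follow as above.
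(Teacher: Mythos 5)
Your reduction does not produce \emph{proper} $B$-graphs, and that is fatal here because properness is the entire point of the theorem (hardness for the subclass of proper $B$-graphs is the claim; hardness for general $B$-graphs would be weaker). In your construction every connector arc contains the apex $c$, so a connector is determined, up to sliding inside a window, by its two reaches: the connector for an edge $x_iy_j$ must end in the region $W^L_i$ of points covered by $s^L_i$ but by none of $s^L_{i+1},\dots,s^L_p$, and in the analogous region $W^R_j$. These windows are pairwise disjoint and linearly ordered by distance from $c$ along each circle. Consequently, if $G$ contains two edges $x_iy_j$ and $x_{i'}y_{j'}$ with $i<i'$ and $j<j'$, the connector for $(i,j)$ is \emph{forced} to be contained in the connector for $(i',j')$: on both circles its endpoint lies strictly closer to $c$. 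No choice of arc lengths can avoid this, because the prefix encoding you rely on (``adjacent to exactly $s^L_1,\dots,s^L_i$ and $s^R_1,\dots,s^R_j$'') pins each endpoint into its window. Since a typical bipartite graph has many pairs of edges that are strictly comparable in the product order, $\Phi(G)$ as described is a $B$-graph but not a proper one, and the reduction does not establish GI-hardness for proper $B$-graphs. (A secondary, lesser issue: the marker gadgets and the whole rigidity analysis in the backward direction are only sketched, so even the non-proper version of your argument is not complete as written.)

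This nestedness obstruction is exactly what the paper's proof is engineered to avoid, and it does so by working in the complement. There, the input graph is subdivided twice (giving vertex classes $V$, $V_1$, $V_2$), all edges between $V$ and $V_1$ are added, and the result is complemented; the two induced subgraphs $G_2[V\cup V_2]$ and $G_2[V_1\cup V_2]$ are disjoint unions of stars, and the complement of a disjoint union of stars is a proper circular-arc graph (its true-twin quotient is $K_{2n}-nK_2$). Nested neighborhoods, which in your direct construction become nested arcs, become incomparable arcs after complementation, so properness comes for free on each of the two circles and is preserved when the two representations are glued at the apex. The isomorphism-invariance that you try to obtain with ad hoc marker gadgets is obtained there by a degree argument: restricting to input graphs of minimum degree three forces any isomorphism of the constructed graphs to preserve the classes $V$, $V_1$, $V_2$. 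If you want to keep a direct arc-by-arc construction, you would need to redesign the connector gadget so that distinct connectors are never comparable as arcs --- and that essentially pushes you back toward the complementation trick.
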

	
	Note that if a graph $H_1$ is a minor of a graph $H_2$, then every $H_1$-graph is an $H_2$-graph.
	This immediately gives the following.
	
	\begin{corollary}
		If $H$ is not unicyclic, then the isomorphism problem for the class of all proper $H$-graphs and the class of all $H$-graphs is GI-complete.
	\end{corollary}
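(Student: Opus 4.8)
The plan is to derive the corollary from Theorem~\ref{thm:8-graphs_gic} by a reduction that uses only the monotonicity of $H$-graph classes under minors recorded just above the statement. Membership in GI is immediate in every case: proper $H$-graphs and $H$-graphs are ordinary graphs, so the isomorphism question for either class trivially reduces to general graph isomorphism. Hence the whole content is the GI-hardness, and it suffices to prove a single class inclusion, namely that \emph{every proper $B$-graph is a proper $H$-graph}. Granting this, the identity map sending a pair of proper $B$-graphs to itself is a polynomial reduction from proper-$B$-graph isomorphism to proper-$H$-graph isomorphism: the two graphs are left unchanged (so they are isomorphic iff they were), and both are legitimate instances by the inclusion. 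Composing with Theorem~\ref{thm:8-graphs_gic} gives GI-hardness for proper $H$-graphs; since the same pair is also a valid instance of $H$-graph isomorphism, that problem is GI-hard too.

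I would first reduce the required inclusion to a purely combinatorial statement: \emph{there is a subdivision $H'$ of $H$ such that $B$ is a minor of $H'$}. This is the right formulation because $H$-graph classes are monotone under passing to a coarser host in the direction we need: any representation on a subdivision of $H'$ is simultaneously a representation on a subdivision of $H$, so $H'$-graphs $\subseteq$ $H$-graphs, and the same representation also witnesses properness, so proper $H'$-graphs $\subseteq$ proper $H$-graphs. Combining this with the recorded observation (that a minor relation yields the inclusion of the corresponding $H$-graph classes) would give $B$-graphs $\subseteq$ $H'$-graphs $\subseteq$ $H$-graphs, and likewise for the proper versions once we control properness in the first inclusion.

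To prove the combinatorial statement, I would use that a connected graph is non-unicyclic exactly when it contains two independent cycles (cyclomatic number at least $2$); we may assume $H$ connected, as in the paper's convention. If the two cycles share at most one vertex (possibly joined by a path), contracting that path and then contracting each cycle down to a triangle already exhibits $B$ as a minor of $H$ itself, with no subdivision needed since cycles have length at least $3$. The remaining case is when the two cycles share an edge, so that $H$ contains a theta subgraph; the extreme instances are the small hosts such as the diamond and $K_4$, for which $B$ fails to be a minor only because they have fewer than five vertices. Here I would first pass to a subdivision $H'$: lengthening two of the three branch-paths of the theta to length $3$ produces a host in which contracting the third branch-path identifies the two branch vertices into a single degree-$4$ node incident to two triangles, that is, exactly $B$. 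This covers all connected non-unicyclic $H$ uniformly, the $4$-vertex exceptions included.

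The step I expect to be the main obstacle is verifying that the inclusion $B$-graphs $\subseteq$ $H'$-graphs can be taken to \emph{preserve properness}, since the recorded fact is stated only for ordinary $H$-graphs. When $B$ embeds as a topological minor (a degree-$4$ vertex of $H$ hosts the centre of $B$), this is transparent: a subdivision of $B$ sits as a subgraph of a subdivision of $H$, so any proper $B$-representation is already a proper representation on the larger host, leaving the unused part of $H$ empty. The delicate case is when $B$ arises only through a genuine contraction, as for subcubic hosts where no vertex has degree $4$; there I would realize the minor by its branch sets, routing each node of a subdivision of $B$ into the corresponding connected branch set of the subdivided $H'$ and each edge of $B$ along a connecting path, and then check that this map embeds the figure-eight host homeomorphically onto its image. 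Because such an embedding is injective and sends connected sub-arcs to connected sub-arcs while respecting inclusion in both directions, an incomparable (proper) family of $B$-arcs maps to an incomparable family of $H'$-arcs, so properness is retained. Carrying out this containment-order verification carefully — rather than the minor extraction, which is routine — is where the real work lies, after which the corollary follows from Theorem~\ref{thm:8-graphs_gic} for both proper $H$-graphs and $H$-graphs.
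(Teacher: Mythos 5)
Your proposal is correct and follows essentially the paper's own route: the corollary is derived from Theorem~\ref{thm:8-graphs_gic} together with the monotonicity of (proper) $H$-graph classes under the minor relation, exactly as the paper does in the remark preceding the statement. You are in fact more careful than the paper's one-line derivation, since you explicitly verify that $B$ is a minor of a suitable \emph{subdivision} of $H$ (a point that matters for small hosts such as $K_4$ or the diamond, of which $B$ itself is not a minor, having more vertices) and that the resulting inclusion of representations preserves properness in both the subdivision and the contraction steps.
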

	
	To prove Theorem~\ref{thm:8-graphs_gic}, we first give several lemmas.

	\begin{lemma}
		\label{lem:complement_of_stars}
		If $G$ is a disjoint union of stars, then the complement of $G$ is a proper circular-arc graph.
	\end{lemma}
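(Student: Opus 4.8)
The plan is to analyze the complement directly and then certify that it is a proper circular-arc graph either combinatorially, via a round enumeration, or geometrically, via equal-length arcs. Write $G = S_1 \sqcup \cdots \sqcup S_m$, where $S_i$ is a star with center $c_i$ and leaf set $\ell_i = \{\ell_{i,1},\dots,\ell_{i,d_i}\}$, the degenerate case $d_i = 0$ corresponding to an isolated vertex. The only non-edges of $\overline{G}$ are precisely the star edges $c_i\ell_{i,k}$ of $G$, so in $\overline{G}$ the centers form a clique, all leaves together form a clique, and a center $c_i$ is adjacent to a leaf $\ell_{j,k}$ exactly when $i \neq j$. Equivalently, $N_{\overline{G}}[c_i] = V(\overline{G}) \setminus \ell_i$ and $N_{\overline{G}}[\ell_{i,k}] = V(\overline{G}) \setminus \{c_i\}$. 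I would first record these neighborhoods precisely, since the entire argument rests on them.

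Next I would exhibit a \emph{round enumeration} of $\overline{G}$, that is, a cyclic ordering of the vertices in which every closed neighborhood is a set of consecutive vertices; by the standard characterization of proper circular-arc graphs this is exactly what is required~\cite{DBLP:journals/siamcomp/DengHH96}. The ordering I propose is to list all centers first, as $c_1, c_2, \dots, c_m$, and then all leaves grouped by their star, as $\ell_{1,1},\dots,\ell_{1,d_1},\ell_{2,1},\dots,\ell_{m,d_m}$, closing the cycle back to $c_1$. In this order each leaf set $\ell_i$ occupies a contiguous block and each center $c_i$ is a single element, so $N_{\overline{G}}[c_i] = V(\overline{G}) \setminus \ell_i$ and $N_{\overline{G}}[\ell_{i,k}] = V(\overline{G}) \setminus \{c_i\}$ are each the complement of a contiguous block. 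Since the complement of a contiguous block on a cycle is again a contiguous block, and it contains the vertex in question, every closed neighborhood is consecutive. Hence the ordering is a round enumeration and $\overline{G}$ is a proper circular-arc graph.

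To make the statement self-contained and to match the half-arc viewpoint of Lemma~\ref{lem:hardness}, I would also spell out the equivalent geometric model: assign to every vertex an arc of the \emph{same} length slightly below a half-circle, placing each center $c_i$ at a rotation $\alpha_i$ with the $\alpha_i$ well separated, and placing the leaves of $S_i$ as nearly antipodal arcs that are disjoint from $A_{c_i}$ but mutually overlapping. Because all arcs share one length and sit at distinct positions, no arc can contain another, so properness is automatic; this is the key simplification that the equal-length choice buys. The step I expect to be the most delicate is verifying the center-leaf incidences, namely that $A_{c_i}$ meets every leaf arc of a \emph{different} star while remaining disjoint from its own leaves. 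This reduces to a seam analysis at the two small gaps between $A_{c_i}$ and the antipodal leaf arcs: for $i \neq j$ exactly one of the two seams must produce an overlap, while for $i = j$ neither does, and this is guaranteed once the rotations $\alpha_i$ are pairwise separated by more than the gap size. The round-enumeration argument is really just the combinatorial shadow of this model, so either route closes the proof; I would lead with the round enumeration and present the arc model as its concrete realization.
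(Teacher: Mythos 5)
Your conclusion is correct, and your route (building a representation of $\overline{G}$ directly) is genuinely different from the paper's, which instead quotients $\overline{G}$ by true twins (all leaves of one star share the closed neighborhood $V\setminus\{c_i\}$ in $\overline{G}$), observes that the quotient is the complement of a perfect matching, and expands the twin classes back. However, your lead argument has a genuine gap: the property you verify --- a circular ordering in which every closed neighborhood is consecutive --- is \emph{not} the characterization of proper circular-arc graphs proved in \cite{DBLP:journals/siamcomp/DengHH96}. Graphs admitting such an ordering are the \emph{concave-round} graphs of Bang-Jensen, Huang, and Yeo (\emph{Convex-round and concave-round graphs}, SIAM J.\ Discrete Math.\ 2000), and that class properly contains the proper circular-arc graphs. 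What Deng, Hell, and Huang actually require is a \emph{round orientation}: an orientation of the edges (a round local tournament) together with a circular order in which the out-neighbors of each vertex are exactly the vertices immediately following it and the in-neighbors exactly those immediately preceding it. The implication you invoke --- from the unoriented consecutiveness condition to proper circular-arc --- is exactly the direction that fails in general, so your round enumeration by itself does not close the proof. The gap is repairable for your specific ordering: orienting every edge ``forward'' (from $c_i$ to all later centers and to the leaf blocks $\ell_1,\dots,\ell_{i-1}$, and from $\ell_{i,k}$ to all later leaves and to $c_1,\dots,c_{i-1}$) is consistent and makes each in-/out-neighborhood precisely the block preceding/following the vertex, which is what the cited theorem needs; but this orientation check is exactly what is missing from your write-up.

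The geometric argument has a smaller but also real flaw: ``well separated'' (or ``pairwise separated by more than the gap size'') is not the right hypothesis on the rotations. Take $m=2$, $\alpha_1=0^\circ$, $\alpha_2=180^\circ$: these are maximally separated, yet the two center arcs of length $180^\circ-\epsilon$ are disjoint, and so are the leaf arcs of star $1$ (near $180^\circ$) and those of star $2$ (near $0^\circ$) --- but centers of different stars must be adjacent in $\overline{G}$, as must leaves of different stars. You analyze only the center-versus-foreign-leaf seams; the center--center and cross-star leaf--leaf pairs are precisely the ones that break at antipodal rotations. The fix is to require pairwise separation \emph{modulo} $180^\circ$, bounded away from both $0^\circ$ and $180^\circ$ --- for instance, place all $\alpha_i$ in one quadrant with spacing $90^\circ/m \gg \epsilon$. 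With that change the equal-length construction is correct and self-contained, and in fact it proves the slightly stronger statement that $\overline{G}$ is a \emph{unit} circular-arc graph, something the paper's twin-quotient argument does not directly give.
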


	\begin{proof}
		Let $X\subseteq V(G)$ be the vertices of degree $1$ in $G$, $N_G(v)$ be the neighborhood of $v \in V(G)$, $\overline{G}$ be the complement of $G$, and let $Y = V(G)\setminus X$.
		For every vertex $v \in Y$, $N_G(v)$ forms an equivalence class of true twins in $\overline{G}$ where true twins are a set of vertices with the same neighborhood which are pairwise adjacent.
		The quotient graph of $\overline{G}$ with respect to this equivalence relation is exactly the complement of a matching $K_{2n} - nK_2$, which is a well-known proper circular-arc graph.
		Therefore, $\overline{G}$ is also a proper circular-arc graph.
	\end{proof}
	
	Let $G = (V, E)$ be any connected graph.
	We construct a new graph $G'$ in several steps as follows.
	First, let $G_1 = (V \cup V_1, E')$ be the graph resulting from subdividing each edge of $G$ and  let $G_2 = (V\cup V_1\cup V_2, E'')$ be the graph resulting from subdividing each edge of $G_1$.
	Further, let $G_3$ be the graph obtained from $G_2$ by adding all the edges between $V$ and $V_1$.
	Finally, we set the graph $G'$ to be the complement of $G_3$.
	
	\begin{lemma}
		\label{lem:8-graph}
		Let $G$ be a connected graph.
		Then, $G'$ is a proper $B$-graph.
	\end{lemma}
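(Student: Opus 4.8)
The plan is to exploit the three-clique structure of $G'=\overline{G_3}$ and to build the representation by gluing two proper circular-arc representations along the cut node of $B$.

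\emph{Structure of $G'$.} First I would record the adjacencies of $G_3$. Since $G$ is subdivided twice, $V$, $V_1$ and $V_2$ are independent sets in $G_2$; adding all edges between $V$ and $V_1$ keeps each of them independent and creates no edge inside $V_2$. Hence in $G'$ the three sets $V$, $V_1$, $V_2$ are \emph{cliques}, and $V$ is completely non-adjacent to $V_1$. Moreover, each $z\in V_2$ subdivides a $G_1$-edge, whose endpoints are one vertex of $V$ and one vertex of $V_1$; so in $G_3$ each $z$ has exactly one neighbour in $V$ and one in $V_1$. Consequently $G_3[V\cup V_2]$ and $G_3[V_1\cup V_2]$ are \emph{disjoint unions of stars}, centred at $V$ and at $V_1$ respectively, the leaves being the incident $V_2$-vertices (the $V_1$-stars having exactly two leaves each). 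By Lemma~\ref{lem:complement_of_stars}, both $G'[V\cup V_2]$ and $G'[V_1\cup V_2]$ are proper circular-arc graphs.

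\emph{The target shape.} I view a suitable subdivision of $B$ as two cycles $C_L$ and $C_R$ sharing the single node $c$ (the vertex identified in the two triangles). The idea is to place the representation of $G'[V\cup V_2]$ on $C_L$ and that of $G'[V_1\cup V_2]$ on $C_R$, with $V_2$ as the common clique: each $u\in V$ becomes an arc strictly inside $C_L$ \emph{avoiding} $c$, each $y\in V_1$ an arc strictly inside $C_R$ avoiding $c$, and each $z\in V_2$ a connected subgraph passing \emph{through} $c$, made of a $C_L$-arc and a $C_R$-arc meeting at $c$. If this succeeds then $V$ and $V_1$ occupy disjoint parts of $B$ and are automatically non-adjacent, the $V_2$-subgraphs all meet at $c$ and form a clique, and the $V$--$V_2$ (resp. $V_1$--$V_2$) adjacencies are inherited from the representation on $C_L$ (resp. $C_R$).

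\emph{The Helly gluing (the crux).} The only thing that is not free is that connectivity of each $z\in V_2$ forces every $V_2$-arc to contain $c$ in \emph{both} cycles; equivalently, I must realize the star-complement $G'[V\cup V_2]$ as a proper circular-arc graph in which the clique $V_2$ is \emph{Helly at a prescribed point $c$} while no $V$-arc contains $c$. I would make $V$ the antipodal Helly clique, all its arcs through the point $c'$ opposite $c$. Placing $c=0$ and $c'=180$ on the circle, write each $V$-arc as $(180-\gamma_i,\,180+\delta_i)$ and each leaf-class of centre $u_j$ as $(360-\alpha_j,\,\beta_j)$, all parameters in $(0,180)$ so that $V$-arcs avoid $c$ and $V_2$-arcs avoid $c'$. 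Then $u_i$ misses the leaves of $u_j$ exactly when $\beta_j+\gamma_i<180$ and $\alpha_j+\delta_i<180$, and I need this to hold iff $j=i$. This is a point-versus-rectangle domination problem: put the pairs $(\beta_j,\alpha_j)$ on a strictly decreasing staircase (an antichain) and choose the thresholds $(180-\gamma_i,180-\delta_i)$ to isolate the $i$-th point. The same antichain makes the $V$-arcs pairwise incomparable and the $V_2$-arcs pairwise incomparable; a $V$-arc and a $V_2$-arc are incomparable since each contains one of $c,c'$ that the other omits; and twins within a leaf-class are separated by an infinitesimal rotation. Thus the representation is \emph{proper}, with $V_2$ Helly at $c$ and $V$ avoiding $c$. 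I expect this to be the main obstacle.

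\emph{Assembling.} Running this on $C_L$ for $G'[V\cup V_2]$ and symmetrically on $C_R$ for $G'[V_1\cup V_2]$, every $z\in V_2$ receives a $C_L$-arc and a $C_R$-arc both containing $c$, whose union is a connected subgraph of the subdivided $B$. Subdividing $B$ finely enough to place all endpoints at distinct nodes, I would finally verify that all required adjacencies and non-adjacencies hold and that no subgraph properly contains another, concluding that $G'$ is a proper $B$-graph.
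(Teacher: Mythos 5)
Your proposal is correct and follows essentially the same route as the paper: decompose $G'$ into $G'[V\cup V_2]$ and $G'[V_1\cup V_2]$ (both complements of disjoint unions of stars, hence proper circular-arc by Lemma~\ref{lem:complement_of_stars}), choose representations in which all $V_2$-arcs share a point avoided by the $V$-arcs (resp.\ $V_1$-arcs), and glue the two circles at that point to obtain a subdivision of $B$, mapping each $V_2$-vertex to the union of its two arcs. The only difference is one of detail: where the paper simply asserts that representations satisfying this ``Helly at a prescribed point'' condition exist, you construct them explicitly via the staircase/antichain argument, which is a legitimate (and slightly more self-contained) way to justify the same step.
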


	\begin{proof}
		First note that the induced subgraphs $G_2[V\cup V_2]$ and $G_2[V_1\cup V_2]$ are disjoint unions of stars.
		Thus, by Lemma~\ref{lem:complement_of_stars}, the induced subgraphs $G'[V\cup V_2]$ and $G'[V_1\cup V_2]$ are circular-arc graphs with representations $R_1\colon V\cup V_2 \to \mathcal{S}(C_1)$ and $R_2\colon V_1\cup V_2 \to \mathcal{S}(C_2)$ where $\mathcal{S}(C_1)$ and $\mathcal{S}(C_2)$ denote the set of all connected subgraphs of some cycles $C_1$ and $C_2$, respectively.
		Moreover, both representations can be chosen so that there are points $p_1 \in V(C_1)$ and $p_2 \in V(C_2)$ such that
		\begin{equation}
			\label{eq1}
			p_1 \in \bigcap_{v\in V_2}R_1(v), \quad p_1 \notin \bigcup_{v \in V}R_1(v), \quad\text{and}\quad p_2\in \bigcap_{v\in V_2}R_2(v),\quad p_2\notin \bigcup_{v\in V_1}R_2(v).
		\end{equation}
		Let $H$ be the graph obtained from $C_1$ and $C_2$ by identifying the points $p_1$ and $p_2$.
		Clearly $H$ is a subdivision of $B$.
		Let $R\colon V\cup V_1\cup V_2 \to \mathcal{S}(H)$ be the mapping defined by
		$$R(v) =
		\begin{cases}
			R_1(v) &\quad \text{if $v \in V$},\\
			R_1(v)\cup R_2(v) &\quad \text{if $v \in V_2$},\\
			R_2(v) &\quad \text{if $v \in V_1$}.
		\end{cases}
		$$
		Since there are no edges between $V$ and $V_1$ in $G'$, it follows from~\eqref{eq1} that $R$ is an $H$-representation of $G'$ and, therefore, $G'$ is an $B$-graph.
		
		It remains to note that both $R_1$ and $R_2$ are proper representations of the corresponding circular-arc graphs.
		Thus, the resulting representation $R$ of $G'$ is also a proper $B$-representation.
	\end{proof}
	
	\begin{lemma}
		\label{lem:isomorphic}
		Let $G$ and $H$ be connected graphs with minimum degree at least three.
		Then, $G$ and $H$ are isomorphic if and only if $G'$ and $H'$ are isomorphic.
	\end{lemma}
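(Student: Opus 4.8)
The forward implication is straightforward because the construction $G\mapsto G'$ is canonical, and I would simply track it through each stage. Given an isomorphism $\phi\colon G\to H$, I would extend $\phi$ to the subdivision vertices by sending each vertex of $V_1$ and of $V_2$ to the corresponding subdivision vertex of $H_1$ and $H_2$ determined by $\phi$; this yields an isomorphism $G_2\to H_2$ that maps $V$ onto $V$ and $V_1$ onto $V_1$, hence respects the complete bipartite graph added between $V$ and $V_1$ and is therefore an isomorphism $G_3\to H_3$. Since taking complements preserves isomorphism, the same map is an isomorphism $G'\to H'$.

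For the converse the plan is to recover $G$ from $G_3$ by an isomorphism-invariant recipe. Because $G'=\overline{G_3}$ and complementation commutes with isomorphism, an isomorphism $G'\to H'$ is the same thing as an isomorphism $G_3\to H_3$; hence a canonical reconstruction of $G$ from $G_3$ forces $G\simeq H$. Write $n=\vert V\vert$ and $m=\vert E(G)\vert$, so that $\vert V_1\vert=m$ and $\vert V_2\vert=2m$. First I would recover the tripartition $V\cup V_1\cup V_2$ of $V(G_3)$ from the degree sequence alone: tracing the construction, each vertex of $V_2$ keeps degree $2$, each vertex of $V_1$ has degree $n+2$ (its two neighbours in $V_2$ together with all of $V$), and each $v\in V$ has degree $\deg_G(v)+m$ (its neighbours in $V_2$ together with all of $V_1$).

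The hypothesis enters exactly here. Since $G$ has minimum degree at least $3$, the handshake lemma gives $2m\ge 3n$, hence $m\ge n$, and therefore $\deg_G(v)+m\ge m+3>n+2$ for every $v\in V$. Thus the three degree values are cleanly separated, and I would read off the parts as follows: $V_2$ is the set of degree-$2$ vertices, $V_1$ is the class of minimum degree among the remaining vertices (all of degree $n+2$, strictly below every degree in $V$), and $V$ is the rest. The same separation holds in $H_3$ by the assumption on $H$, so any isomorphism $G_3\to H_3$ maps $V$, $V_1$, $V_2$ onto the corresponding parts. Verifying this clean separation of degrees is, I expect, the main obstacle, and it is the only place where minimum degree at least $3$ is used.

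Finally I would reconstruct the edges of $G$, using that the edges added between $V$ and $V_1$ do not disturb the low-degree connectors: each $w\in V_2$ still has exactly one neighbour in $V$ and one in $V_1$, and each $m'\in V_1$ still has exactly two neighbours in $V_2$, namely the two $V_2$-vertices created on the two halves of the edge of $G$ from which $m'$ arose. I would therefore declare two vertices $u,u'\in V$ adjacent in $G$ precisely when some $m'\in V_1$ has $V_2$-neighbours $w,w'$ whose unique $V$-neighbours are $u$ and $u'$; tracing the construction shows this yields a bijection between $V_1$ and $E(G)$ and recovers the edge set of $G$ exactly, with no spurious pairs coming from the added edges. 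Since this description is phrased entirely in terms preserved by any isomorphism respecting the tripartition, an isomorphism $G_3\to H_3$ restricts to an isomorphism $G\to H$, completing the converse.
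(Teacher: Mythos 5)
Your proof is correct and follows essentially the same route as the paper's: the forward direction extends the isomorphism through the two subdivisions and notes it respects the added complete bipartite edges, and the converse hinges on the identical degree-separation computation (degree $2$ on $V_2$, degree $n+2$ on $V_1$, and degree $\deg_G(v)+m > n+2$ on $V$, which is exactly where minimum degree at least $3$ enters) to force any isomorphism $G_3 \to H_3$ to preserve the tripartition. The only difference is one of bookkeeping: where the paper strips the added $V$--$V_1$ edges to obtain an isomorphism $G_2 \to H_2$ and appeals to the fact that full subdivision reflects isomorphism, you reconstruct $E(G)$ explicitly through the degree-$2$ connectors in $V_2$, making rigorous a step the paper leaves implicit.
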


	\begin{proof}
		Since subdividing each edge and taking complements preserve the isomorphism relation, it follows that $G \cong H$ if and only if $G_2 \cong H_2$, and, $G_3 \cong H_3$ if and only if $G' \cong H'$.
		Thus, it suffices to prove that $G_2 \cong H_2$ if and only if $G_3 \cong H_3$.
		
		Let $f\colon G \to H$ be an isomorphism and let $f_2 \colon G_2 \to H_2$ be the isomorphism such that $f_2|_{V(G)} = f$, i.e., the restriction of $f_2$ to $V(G)$ equals to $f$.
		We have
		\begin{equation}
			\label{eq2}
			f_2(V(G)) = V(H) \quad \text{and} \quad f_2(V_1(G)) = V_1(H).
		\end{equation}
		
		The graph $G_3$ is constructed by adding all edges between $V(G)$ and $V_1(G)$.
		Likewise, the graph $H_3$ is constructed by adding all edges between $V(H)$ and $V_1(H)$.
		From~\eqref{eq2} it follows that $f_2$ is also an isomorphism from $G_3$ to $H_3$.
		
		For the reverse implication, first note that for any vertex $v$ of $G_3$, we have
		$$
		deg_{G_3}(v) = 
		\begin{cases}
			deg_{G}(v) + |V_1(G)| &\quad \text{if $v\in V(G)$},\\
			2 + |V(G)| &\quad \text{if $v \in V_1(G)$},\\
			2  &\quad \text{if $v \in V_2(G)$}.
		\end{cases}
		$$
		Note that the degree is constant on $V_1(G)$ and on $V_2(G)$.
		An analogous formula is true for every vertex in $H_3$.
		Since the minimum degree in $G$ is three, we have
		$$2 + |V(G)| \geq 6\quad \text{and}\quad deg_G(v) + |V_1(G)|\geq 3 + 3/2|V(G)| > 2 + |V(G)|.$$
		Again, analogous inequalities hold in $H_3$.
		It follows that an isomorphism $f_3 \colon G_3 \to H_3$ satisfies
		$$f_3(V(G)) = V(H), \quad f_3(V_1(G)) = V_1(H) \quad \text{and} \quad f_3(V_2(G)) = V_2(H).$$
		Thus, $f_3$ is also an isomorphism $G_2\to H_2$.
	\end{proof}
	
	\begin{proof}[Proof of Theorem~\ref{thm:8-graphs_gic}]
		$G'$ can be constructed from $G$ in polynomial time.
		Moreover, by Lemma~\ref{lem:8-graph}, $G'$ is a proper $B$-graph.
		Clearly, the isomorphism problem for the class of all graphs with minimum degree three is GI-complete.
		By Lemma~\ref{lem:isomorphic}, $G$ and $H$ with minimum degree three are isomorphic if and only if $G'$ and $H'$ are isomorphic.
		It follows that the isomorphism problem for the class of all proper $B$-graphs is GI-complete.
	\end{proof}

	\section{Conclusions}

	In this paper, we have shown that, for a fixed unicyclic graph $U$ and parameterized by the size of $U$, the recognition problem for proper $U$-graphs and proper Helly $U$-graphs is in \emph{FPT}, and the isomorphism problem for proper $U$-graphs (and therefore, proper Helly $U$-graph) is in \emph{FPT}. We have complemented these positive results with the following hardness results. The recognition problem for $\mathcal{U}$-graphs and proper $\mathcal{U}$-graphs is NP-hard where $\mathcal{U}$ is the class of all unicyclic graphs, and the isomorphism problem for $H$-graphs and proper $H$-graphs is GI-complete when $H$ is not unicyclic. The problems whether general $U$-graphs can be recognized in \emph{FPT}-time and whether they can be tested for isomorphism in \emph{FPT}-time remain open, and we would like to consider these problems as future work.

	\bibliography{Union-bibliography}

\begin{thebibliography}{10}

\bibitem{DBLP:journals/corr/abs-2107-10689}
Vikraman Arvind, Roman Nedela, Ilia Ponomarenko, and Peter Zeman.
\newblock Testing isomorphism of chordal graphs of bounded leafage is
  fixed-parameter tractable.
\newblock {\em CoRR}, abs/2107.10689, 2021.
\newblock URL: \url{https://arxiv.org/abs/2107.10689}, \href
  {http://arxiv.org/abs/2107.10689} {\path{arXiv:2107.10689}}.

\bibitem{aaolu2019isomorphism}
Deniz A\u{g}ao\u{g}lu and Petr Hlin\v{e}n{\'{y}}.
\newblock Isomorphism problem for {S{\_}d}-graphs.
\newblock In Javier Esparza and Daniel Kr{\'{a}}l', editors, {\em 45th
  International Symposium on Mathematical Foundations of Computer Science,
  {MFCS} 2020, August 24-28, 2020, Prague, Czech Republic}, volume 170 of {\em
  LIPIcs}, pages 4:1--4:14. Schloss Dagstuhl - Leibniz-Zentrum f{\"{u}}r
  Informatik, 2020.
\newblock \href {https://doi.org/10.4230/LIPIcs.MFCS.2020.4}
  {\path{doi:10.4230/LIPIcs.MFCS.2020.4}}.

\bibitem{SdT-graphs2021efficient}
Deniz A\u{g}ao\u{g}lu and Petr Hlin\v{e}n{\'{y}}.
\newblock Efficient isomorphism for {$S_d$}-graphs and {$T$}-graphs.
\newblock {\em CoRR}, abs/1907.01495, 2021.
\newblock \href {http://arxiv.org/abs/1907.01495} {\path{arXiv:1907.01495}}.

\bibitem{isoWalcom}
Deniz A\u{g}ao\u{g}lu and Petr Hlin\v{e}n{\'{y}}.
\newblock Isomorphism testing for {T}-graphs in {FPT}.
\newblock {\em CoRR}, abs/2111.10910, 2021.
\newblock Accepted to WALCOM 2022.
\newblock \href {http://arxiv.org/abs/2111.10910} {\path{arXiv:2111.10910}}.

\bibitem{biro}
Miklós Biró, Mihály Hujter, and Zsolt Tuza.
\newblock Precoloring extension. i. interval graphs.
\newblock {\em Discrete Mathematics \textbf{100}}, pages 267--279, 1992.

\bibitem{recogIntervalLinear}
Kellogg~S. Booth and George~S. Lueker.
\newblock Testing for the consecutive ones property, interval graphs, and graph
  planarity using {PQ}-tree algorithms.
\newblock {\em J. Comput. Syst. Sci.}, 13(3):335--379, 1976.
\newblock \href {https://doi.org/10.1016/S0022-0000(76)80045-1}
  {\path{doi:10.1016/S0022-0000(76)80045-1}}.

\bibitem{chaplick2020recognizing}
Steven Chaplick, Petr~A. Golovach, Tim~A. Hartmann, and Dusan Knop.
\newblock Recognizing proper tree-graphs.
\newblock In Yixin Cao and Marcin Pilipczuk, editors, {\em 15th International
  Symposium on Parameterized and Exact Computation, {IPEC} 2020, December
  14-18, 2020, Hong Kong, China (Virtual Conference)}, volume 180 of {\em
  LIPIcs}, pages 8:1--8:15. Schloss Dagstuhl - Leibniz-Zentrum f{\"{u}}r
  Informatik, 2020.

\bibitem{zemanWG}
Steven Chaplick, Martin T\"{o}pfer, Jan Voborn{\'{\i}}k, and Peter Zeman.
\newblock On {H}-topological intersection graphs.
\newblock In Hans~L. Bodlaender and Gerhard~J. Woeginger, editors, {\em
  Graph-Theoretic Concepts in Computer Science - 43rd International Workshop,
  {WG} 2017, Eindhoven, The Netherlands, June 21-23, 2017, Revised Selected
  Papers}, volume 10520 of {\em Lecture Notes in Computer Science}, pages
  167--179. Springer, 2017.
\newblock \href {https://doi.org/10.1007/978-3-319-68705-6\_13}
  {\path{doi:10.1007/978-3-319-68705-6\_13}}.

\bibitem{zemanAlgo}
Steven Chaplick, Martin T{\"{o}}pfer, Jan Voborn{\'{\i}}k, and Peter Zeman.
\newblock On {H}-topological intersection graphs.
\newblock {\em Algorithmica}, 83(11):3281--3318, 2021.

\bibitem{zeman2}
Steven Chaplick and Peter Zeman.
\newblock Combinatorial problems on {H}-graphs.
\newblock {\em Electron. Notes Discret. Math.}, 61:223--229, 2017.
\newblock \href {https://doi.org/10.1016/j.endm.2017.06.042}
  {\path{doi:10.1016/j.endm.2017.06.042}}.

\bibitem{DBLP:journals/dmtcs/CurtisLMNSSS13}
Andrew~R. Curtis, Min~Chih Lin, Ross~M. McConnell, Yahav Nussbaum, Francisco~J.
  Soulignac, Jeremy~P. Spinrad, and Jayme~Luiz Szwarcfiter.
\newblock Isomorphism of graph classes related to the circular-ones property.
\newblock {\em Discret. Math. Theor. Comput. Sci.}, 15(1):157--182, 2013.
\newblock URL: \url{http://dmtcs.episciences.org/625}.

\bibitem{DBLP:journals/siamcomp/DengHH96}
Xiaotie Deng, Pavol Hell, and Jing Huang.
\newblock Linear-time representation algorithms for proper circular-arc graphs
  and proper interval graphs.
\newblock {\em {SIAM} J. Comput.}, 25(2):390--403, 1996.
\newblock \href {https://doi.org/10.1137/S0097539792269095}
  {\path{doi:10.1137/S0097539792269095}}.

\bibitem{DBLP:journals/nm/Dijkstra59}
Edsger~W. Dijkstra.
\newblock A note on two problems in connexion with graphs.
\newblock {\em Numerische Mathematik}, 1:269--271, 1959.
\newblock \href {https://doi.org/10.1007/BF01386390}
  {\path{doi:10.1007/BF01386390}}.

\bibitem{DBLP:conf/focs/DinurRS02}
Irit Dinur, Oded Regev, and Clifford~D. Smyth.
\newblock The hardness of 3 - uniform hypergraph coloring.
\newblock In {\em 43rd Symposium on Foundations of Computer Science {(FOCS}
  2002), 16-19 November 2002, Vancouver, BC, Canada, Proceedings}, page~33.
  {IEEE} Computer Society, 2002.
\newblock \href {https://doi.org/10.1109/SFCS.2002.1181880}
  {\path{doi:10.1109/SFCS.2002.1181880}}.

\bibitem{DBLP:journals/corr/abs-2108-13076}
Jir{\'{\i}} Fiala, Ignaz Rutter, Peter Stumpf, and Peter Zeman.
\newblock Extending partial representations of circular-arc graphs.
\newblock {\em CoRR}, abs/2108.13076, 2021.
\newblock URL: \url{https://arxiv.org/abs/2108.13076}, \href
  {http://arxiv.org/abs/2108.13076} {\path{arXiv:2108.13076}}.

\bibitem{DBLP:conf/esa/FominGR18}
Fedor~V. Fomin, Petr~A. Golovach, and Jean{-}Florent Raymond.
\newblock On the tractability of optimization problems on {H}-graphs.
\newblock In Yossi Azar, Hannah Bast, and Grzegorz Herman, editors, {\em 26th
  Annual European Symposium on Algorithms, {ESA} 2018, August 20-22, 2018,
  Helsinki, Finland}, volume 112 of {\em LIPIcs}, pages 30:1--30:14. Schloss
  Dagstuhl - Leibniz-Zentrum f{\"{u}}r Informatik, 2018.
\newblock \href {https://doi.org/10.4230/LIPIcs.ESA.2018.30}
  {\path{doi:10.4230/LIPIcs.ESA.2018.30}}.

\bibitem{chordalityInters}
Fǎnicǎ Gavril.
\newblock The intersection graphs of subtrees in trees are exactly the chordal
  graphs.
\newblock {\em Journal of Combinatorial Theory, Series B}, 16(1):47--56, 1974.
\newblock URL:
  \url{https://www.sciencedirect.com/science/article/pii/009589567490094X},
  \href {https://doi.org/https://doi.org/10.1016/0095-8956(74)90094-X}
  {\path{doi:https://doi.org/10.1016/0095-8956(74)90094-X}}.

\bibitem{telle}
Lars Jaffke, O{-}joung Kwon, and Jan~Arne Telle.
\newblock Mim-width {II.} the feedback vertex set problem.
\newblock {\em Algorithmica}, 82(1):118--145, 2020.

\bibitem{DBLP:journals/ipl/Keil85}
J.~Mark Keil.
\newblock Finding hamiltonian circuits in interval graphs.
\newblock {\em Inf. Process. Lett.}, 20(4):201--206, 1985.
\newblock \href {https://doi.org/10.1016/0020-0190(85)90050-X}
  {\path{doi:10.1016/0020-0190(85)90050-X}}.

\bibitem{KLAVIK201585}
Pavel Klav{\'{\i}}k, Jan Kratochv{\'{\i}}l, Yota Otachi, and Toshiki Saitoh.
\newblock Extending partial representations of subclasses of chordal graphs.
\newblock {\em Theor. Comput. Sci.}, 576:85--101, 2015.
\newblock \href {https://doi.org/10.1016/j.tcs.2015.02.007}
  {\path{doi:10.1016/j.tcs.2015.02.007}}.

\bibitem{DBLP:journals/corr/abs-1904-04501}
Tomasz Krawczyk.
\newblock Testing isomorphism of circular-arc graphs - {H}su's approach
  revisited.
\newblock {\em CoRR}, abs/1904.04501, 2019.
\newblock URL: \url{http://arxiv.org/abs/1904.04501}, \href
  {http://arxiv.org/abs/1904.04501} {\path{arXiv:1904.04501}}.

\bibitem{DBLP:conf/cocoon/LinS06}
Min~Chih Lin and Jayme~Luiz Szwarcfiter.
\newblock Characterizations and linear time recognition of helly circular-arc
  graphs.
\newblock In Danny~Z. Chen and D.~T. Lee, editors, {\em Computing and
  Combinatorics, 12th Annual International Conference, {COCOON} 2006, Taipei,
  Taiwan, August 15-18, 2006, Proceedings}, volume 4112 of {\em Lecture Notes
  in Computer Science}, pages 73--82. Springer, 2006.
\newblock \href {https://doi.org/10.1007/11809678\_10}
  {\path{doi:10.1007/11809678\_10}}.

\bibitem{helly}
Min~Chih Lin and Jayme~Luiz Szwarcfiter.
\newblock Characterizations and recognition of circular-arc graphs and
  subclasses: {A} survey.
\newblock {\em Discret. Math.}, 309(18):5618--5635, 2009.
\newblock \href {https://doi.org/10.1016/j.disc.2008.04.003}
  {\path{doi:10.1016/j.disc.2008.04.003}}.

\bibitem{listAllMaxCliques}
Kazuhisa Makino and Takeaki Uno.
\newblock New algorithms for enumerating all maximal cliques.
\newblock In Torben Hagerup and Jyrki Katajainen, editors, {\em Algorithm
  Theory - {SWAT} 2004, 9th Scandinavian Workshop on Algorithm Theory,
  Humlebaek, Denmark, July 8-10, 2004, Proceedings}, volume 3111 of {\em
  Lecture Notes in Computer Science}, pages 260--272. Springer, 2004.
\newblock \href {https://doi.org/10.1007/978-3-540-27810-8\_23}
  {\path{doi:10.1007/978-3-540-27810-8\_23}}.

\bibitem{recogCARCLinear}
Ross~M. McConnell.
\newblock Linear-time recognition of circular-arc graphs.
\newblock In {\em 42nd Annual Symposium on Foundations of Computer Science,
  {FOCS} 2001, 14-17 October 2001, Las Vegas, Nevada, {USA}}, pages 386--394.
  {IEEE} Computer Society, 2001.
\newblock \href {https://doi.org/10.1109/SFCS.2001.959913}
  {\path{doi:10.1109/SFCS.2001.959913}}.

\bibitem{recogChordaLinear}
Donald~J. Rose, Robert~Endre Tarjan, and George~S. Lueker.
\newblock Algorithmic aspects of vertex elimination on graphs.
\newblock {\em {SIAM} J. Comput.}, 5(2):266--283, 1976.
\newblock \href {https://doi.org/10.1137/0205021} {\path{doi:10.1137/0205021}}.

\bibitem{DBLP:journals/siamcomp/TsukiyamaIAS77}
Shuji Tsukiyama, Mikio Ide, Hiromu Ariyoshi, and Isao Shirakawa.
\newblock A new algorithm for generating all the maximal independent sets.
\newblock {\em {SIAM} J. Comput.}, 6(3):505--517, 1977.
\newblock \href {https://doi.org/10.1137/0206036} {\path{doi:10.1137/0206036}}.

\bibitem{isoChordalGIComp}
Viktor~N. Zemlyachenko, Nickolay~M. Korneenko, and Regina~I. Tyshkevich.
\newblock Graph isomorphism problem.
\newblock {\em J. of Soviet Mathematics}, 29:1426--1481, 1985.

\end{thebibliography}
	
	\appendix

\end{document}